  \providecommand\BibTeX{{%
    \normalfont B\kern-0.5em{\scshape i\kern-0.25em b}\kern-0.8em\TeX}}}
\titlespacing*{\section}{0pt}{0.3em}{0.3em}   
\titlespacing*{\subsection}{0pt}{0.2em}{0.2em} 
\titlespacing*{\subsubsection}{0pt}{0.1em}{0.1em} 
\definecolor{mGreen}{rgb}{0,0.6,0}
\definecolor{mGray}{rgb}{0.5,0.5,0.5}
\definecolor{mPurple}{rgb}{0.58,0,0.82}
\definecolor{backgroundColour}{rgb}{0.95,0.95,0.92}
\lstdefinestyle{CStyle}{
    backgroundcolor=\color{backgroundColour},   
    commentstyle=\color{mGreen},
    keywordstyle=\color{tean},
    numberstyle=\tiny\color{mGray},
    stringstyle=\color{mPurple},
    basicstyle=\footnotesize,
    breakatwhitespace=false,         
    breaklines=true,                 
    captionpos=b,                    
    keepspaces=true,                 
    numbers=left,                    
    numbersep=5pt,                  
    showspaces=false,                
    showstringspaces=false,
    showtabs=false,                  
    tabsize=2,
    language=C
}
\newcommand{\probP}{\text{I\kern-0.15em P}}
\newcommand{\toolname}{\textsc{DLIA}$^2$}
\newcommand{\la}{$\textit{LinearArbitrary}$}
\newcommand{\icedt}{$\textit{ICE-DT}$}
\newcommand{\gspacer}{$\textit{GSpacer}$}
\newcommand{\dstate}{\mathcal{D}_{\textit{state}}}
\begin{document}

\title{\scalebox{0.92}{Probabilistic Guarantees for Practical LIA Loop Invariant Automation}}

\author{Ashish Kumar}
\affiliation{%
  \institution{The Pennsylvania State University}
  \streetaddress{201 Old Main}
  \city{State College}
  \state{Pennsylvania}
  \country{USA}
  \postcode{16802}}

\author{Jilaun Zhang}
\affiliation{%
  \institution{The Pennsylvania State University}
  \streetaddress{201 Old Main}
  \city{State College}
  \state{Pennsylvania}
  \country{USA}
  \postcode{16802}}

\author{Saeid Tizpaz-Niari}
\affiliation{%
  \institution{UT El Paso}
  \streetaddress{ }
  \city{El Paso}
  \state{Texas}
  \country{USA}
  \postcode{ }}

\author{Gang Tan}
\affiliation{%
  \institution{The Pennsylvania State University}
  \streetaddress{201 Old Main}
  \city{State College}
  \state{Pennsylvania}
  \country{USA}
  \postcode{16802}
}

\renewcommand{\shortauthors}{Kumar et al.}

\begin{abstract}
Despite the crucial need for formal safety and security verification of programs, discovering loop invariants remains a significant challenge. Static analysis is a primary technique for inferring loop invariants but often relies on substantial assumptions about underlying theories. Data-driven methods supported by dynamic analysis and machine learning algorithms have shown impressive performance in inferring loop invariants for some challenging programs. However, state-of-the-art data-driven techniques do not offer theoretical guarantees for finding loop invariants.

We present a novel technique that leverages the simulated annealing (SA) search algorithm combined with SMT solvers and computational geometry to provide probabilistic guarantees for inferring loop invariants using data-driven methods. Our approach enhances the SA search with real analysis to define the search space and employs parallelism to increase the probability of success. To ensure the convergence of our algorithm, we adapt $\epsilon$-nets, a key concept from computational geometry. Our tool, \texttt{\toolname}, implements these algorithms and demonstrates competitive performance against state-of-the-art techniques. We also identify a subclass of programs, on which we outperform the current state-of-the-art tool \gspacer{}.
\end{abstract}

\maketitle

\section{Introduction.}
\label{sec:intro}
The formal verification of software for safety and security requirements such as memory safety, termination guarantees, and information leaks is crucial to safeguarding critical infrastructures against intentional and unintentional risks.
Since existing unverified software is too large to be fully verified manually, researchers have turned to building tools that automate the verification of safety and security properties. One prevalent method in automated formal
verification is Hoare Logic~\cite{hoare1969axiomatic}: it reduces the problem of verifying the whole program with respect to a given property to verifying each program statement separately with a given pre- and post-condition.

Loops, however, pose significant challenges to Hoare-logic style verifiers since the verification has to supply a special type of proposition called \emph{loop invariants}. 
Intuitively, a loop invariant is a proposition that is \texttt{true} at the beginning of the loop and remains \texttt{true} for each iteration of the loop body. 
It is well known that finding loop invariants is undecidable~\cite{blass2001inadequacy}. However, existing approaches have had success by limiting the invariant search space. 
Previous techniques for finding loop invariants can be classified into static or data-driven, based on whether they use runtime program data. Static analysis techniques find invariants by reasoning over first-order theories. However, such techniques usually work for very restrictive first-order theories. 
Data-driven techniques are known to perform well experimentally on a broader class of programs when compared to their static counterparts. Still, they
do not provide completeness guarantees in finding loop invariants, even for linear loops (i.e., loops where all assignments and guards are linear expressions of program variables). 

We present the first data-driven technique with probabilistic guarantees for finding linear loop invariants of programs with integer variables.  We use a guess-and-check algorithm (~\cite{garg2016learning, garg2014ice, krishna2015learning, nguyen2017counterexample, zhu2018data}) to compute loop invariants. At the high level, a guess-and-check algorithm iteratively produces guess invariants and uses an SMT solver to verify these guesses. Typically, there is also a reinforcement component: if the SMT solver rejects a guess, it gives counterexamples that guide the guess algorithm to `refine' their guess until it converges. 


In the guess phase of our algorithm (the Simulated Annealing Loop), we sample a dataset of positive, negative, and implication-pair points, similar to~\cite{garg2014ice}, from a Constrained Horn Clause (CHC) clause system. We use $\epsilon$-nets (\cite{haussler1986epsilon}) for sampling: our positive, negative, and implication pair points form an $\epsilon$-net of the respective sampling regions. This allows us to give termination guarantees on the check phase of our algorithm. We design a novel distance-based cost function, which is zero if and only if a candidate guess invariant satisfies the approximate CHC clause system.
Intuitively, this cost function measures how `far' our current guess invariant is from satisfying the approximate CHC clause system. We use a multi-threaded simulated annealing search (\cite{mitra1986convergence}) to find a guess invariant satisfying the approximate CHC system. Our simulated annealing approach is syntactically motivated: we use coefficient and constant changes of Hamming distance $1$. 

In the check phase of our algorithm (the Datapoint Sampling Loop), we verify our approximate invariant on the CHC clause system using an SMT solver like Z3~\cite{de2008z3}. If it is not a correct invariant, we add the counterexamples generated by the SMT solver to our data points. Additionally, we also `tighten' the $\epsilon$-net periodically. Our guess-and-check approach demonstrates that guess-and-check algorithms can come with probabilistic completeness guarantees on finding invariants for a large class of loop programs.

Although linear loop invariants theoretically form a small subclass of loop invariants, in practice, they form the bulk of the required loop invariants for loop programs. We discuss the scope and limitations of our work and report good experimental results on the benchmarks from the literature, compared to the state-of-the-art loop invariant synthesis tools.  We summarize the contributions of our paper as follows:
\begin{itemize}
\item We improve the convergence guarantee provided by \cite{garg2016learning} from $O(|\mathcal{D}_{state}|^{\frac{1}{n}})$ to $O(\log_2 |\mathcal{D}_{state}|)$. Furthermore, unlike \cite{garg2016learning}, our convergence guarantees
can be extended to all CEGIS approaches.
\item We introduce novel counterexample generation techniques (\textit{dispersed cex}) in SMT-solvers and a generalization of implication pairs (\textit{iterated implication pairs}) that guarantee local progress of our invariant search algorithm.
\item We construct a novel tight approximated polytope-set distance function to serve as the cost of our invariant search and develop the simulated annealing algorithm to allow for parallelism.
\item We show the effectiveness of our tool, \toolname{}, on real benchmarks as well as in comparison with the state-of-the-art tools including \gspacer{} and \la{}. We identify a subclass of programs where we outperform \gspacer{}.

\end{itemize}

\section{Related Work.} \label{sec:relatedWorks}
We categorize and describe existing works into static and dynamic approaches.

\vspace{0.5 em}
\noindent \textbf{Static Techniques}. 
One primary technique is abstract interpretation~\cite{cousot1976static,bradley2007calculus,mine2006octagon,qin2013loop,rodriguez2004abstract,sankaranarayanan2004non} that operate by fixing an invariant domain from some decidable theory and then repeatedly strengthening the invariant to find a loop invariant. 
These techniques, however, often struggle with finding sufficiently strong loop invariants, or they cannot infer invariants that use the disjunction operator.  
Interpolation techniques via Counterexample Guided Abstract Refinement (CEGAR)  such as in \cite{mcmillan2013solving, heizmann2010nested, rummer2013disjunctive, sharma2012interpolants} is another major technique, but the method is limited to theories that allow interpolants; so this method cannot be use to find invariants for theories like LIA, which does not admit Craig Interpolants. 
Another technique that finds loop invariants using a CEGAR loop is predicate abstraction such as in~\cite{dillig2013inductive, janota2007assertion, furia2010inferring}, but these methods do not solve for invariants with disjunctions.
Crucially, GSPACER \cite{vediramana2023global}, an extension of SPACER \cite{komuravelli2016smt}, enhances loop invariant discovery through global guidance by local reasoning and interpolation, making it the current state-of-the-art tool for invariant automation. Finally, work by~\citet{karimov2022s} used linear algebra to infer loop invariants for a small subclass of decidable linear loop programs.


\vspace{0.5 em}
\noindent \textbf{Dynamic Techniques}. These techniques often sample data points through Counter-Example Guided Inductive Synthesis
(CEGIS)~\cite{solar2006combinatorial}. In doing so, they (i) sample some positive and negative data points, (ii) learn a classifier for the labeled dataset, and (iii) check if the classifier forms a
valid invariant using an SMT solver. If the SMT solver found a counterexample, they add it to the dataset and go to step (ii). 
The existing works differ from each other in how they learn a
classifier. \citet{li2017automatic} use support vector machines (SVMs) to learn invariants as a disjunction of
conjunctions where each conjunctive cube corresponds to a different path execution within the loop body. 
\citet{si2020code2inv} follow a similar idea, but use neural networks as their classifier in a CEGIS loop. Other works~\cite{sharma2013data,nguyen2017counterexample,nguyen2021using,garg2014ice,padhi2016data,bhatia2020oasis,sharma2013verification} use 
a simple search (e.g., via arbitrary linear inequality) to learn polynomial equalities and octagonal inequalities as classifiers.
Extending decision tree classifiers~\cite{krishna2015learning,garg2016learning},
\citet{zhu2018data} use a
combination of SVM and decision trees to learn classifiers---they
learn coefficients of the hyperplanes using SVM and the constant
values of the hyperplanes using decision trees. 
\citet{si2018learning} use a neural network (GNN +
autoregressive model) to predict classifiers, inside a CEGIS Loop with
an SMT solver, which they refer to as reinforcement
learning. 
All the above CEGIS systems require that the
corresponding first-order theory of the SMT solver be decidable.
Neider et al.~\cite{neider2020learning}, however, explore two instances of theory where the formula validity is not known to be decidable, namely invariants with quantifiers, and programs with heap memory. \citet{sharma2016invariant} - the only other randomized algorithm apart from ours - use Metropolis-Hastings as their search algorithm to learn classifiers in a CEGIS loop; however, they only test their tool on toy examples and do not have theoretical guarantees of success. 

\section{Preliminaries.}
\label{sec:background}

\subsection{CHC Clause System.}
\label{subsec:chc}
A Constrained Horn Clause (CHC) system
(e.g., see~\cite{bjorner2015horn}) formally expresses the requirements for a loop invariant. Suppose we have a program with a single loop with precondition $P$, postcondition $Q$, loop guard $B$, and the loop body transition relation $T$ ($T$ relates a state before the execution of the loop body to a state after the execution), then we construct the following CHC system:
\begin{align*}
(\forall \vec{s}) \: P(\vec{s}) &\rightarrow I(\vec{s}) &(C1) \\
(\forall \vec{s},\vec{s'}) \:  I(\vec{s}) \land B(\vec{s}) \land T(\vec{s},\vec{s'}) &\rightarrow I(\vec{s'}) &(C2) \\
(\forall \vec{s}) \: I(\vec{s})  &\rightarrow Q(\vec{s}) &(C3) 
\end{align*}
where $\vec{s},\vec{s'}$ are vectors denoting program states and $I$ is a loop invariant.  C1, C2, and C3 are called the fact, inductive, and query clauses, respectively. The fact clause (C1) states that a precondition state must satisfy the loop invariant, and the inductive clause (C2) states that if the loop runs for a single iteration on any state satisfying the invariant and the loop guard, the state after one loop iteration must also satisfy the invariant, and the query clause (C3) states that any state satisfying the invariant must also satisfy the postcondition\footnote{Note that we do not have the negation of the loop guard in $(C3)$ because the CHC clause $(\forall \vec{s}) \: I(\vec{s}) \land \neg B(\vec{s})  \rightarrow Q(\vec{s})$ is equivalent to $(\forall \vec{s}) \: I(\vec{s})  \rightarrow  Q(\vec{s}) \lor B(\vec{s})$. Therefore, we can weaken the postcondition to account for the loop guard.}.  
The antecedent and consequent of a clause are called the \emph{body} and \emph{head} of the clause, respectively. It is well known that finding loop invariants for the above CHC system is undecidable\footnote{Some authors use the term 'invariant' to refer to classifiers that distinguish between good and bad states, often relaxing the requirement of inductiveness. In our paper, however, we will consistently use the term 'invariant' to mean 'inductive invariant', specifically one that satisfies the inductive clause. }~\cite{blass2001inadequacy}. Practically, however, we are often interested in a finite state space, which we shall denote by $\mathcal{D}_{\textit{state}}$. {For example, if the program has $n$ signed integer variables then $\mathcal{D}_{\textit{state}} = \mathbb{Z}^n \cap \footnotesize{[- \text{INT}\_\text{MAX} - 1, \text{INT}\_\text{MAX}]}^n$, where $\footnotesize{\text{INT}\_\text{MAX}}$ is the largest integer that the system can store.} This leads to the following bounded CHC clause system:
\begin{alignat*}{2}
(\forall \vec{s} \in \dstate) &\: P(\vec{s}) &&\rightarrow I(\vec{s})  \\
(\forall \vec{s},\vec{s'} \in \dstate) &\:  I(\vec{s}) \land B(\vec{s}) \land T(\vec{s},\vec{s'}) &&\rightarrow I(\vec{s'})  \\
(\forall \vec{s} \in \dstate) &\: I(\vec{s})  &&\rightarrow Q(\vec{s}) 
\end{alignat*}
As the invariant space for the above bounded CHC clause system is $\mathcal{P}(\dstate)$ and hence finite, finding a loop invariant for the above system is decidable. However, a naive brute force search takes $O(2^{|\dstate|})$ time, which is well beyond the limits of modern computers even for a modest-sized $\dstate$.

\subsection{Ground Positive, Ground Negative, and Implication Pair States.}
We define the \emph{ground positive states} $+_g$, \emph{ground negative states} $-_g$, and \emph{ground implication pair states} $\rightarrow_g$ \cite{garg2014ice, garg2016learning} below:
\begin{align*}
    +_g &:= \{ \vec{s} \in \mathcal{D}_{\text{state}} \: : \: P(\vec{s}) \} \\
    -_g &:= \{ \vec{s} \in \mathcal{D}_{\text{state}} \: : \: \neg Q(\vec{s}) \} \\
    \rightarrow_g &:= \{ (\vec{s}, \vec{s'}) \in \mathcal{D}_{\text{state}}^2 \: : \: B(\vec{s})  \land T(\vec{s}, \vec{s'}) \}
\end{align*}
Intuitively, a ground-positive state refers to the loop entry for which the precondition $P$ holds. A ground-negative state refers to the loop exit for which the postcondition $Q$ does \emph{not} hold. A ground implication pair state $(\vec{s}, \vec{s'})$ means that if the loop starts with state $\vec{s}$ satisfying the loop guard, then after one iteration the state of the program is $\vec{s'}$. From a different angle, if $\vec{s}$ is a state that is reachable from a state in $+_g$, then $\vec{s'}$ must also be a reachable state; on the other hand, if $\vec{s}$ is not reachable, there is no obligation that $\vec{s'}$ must be reachable.
For implication pairs, we introduce the following terminology: If $(\vec{s}, \vec{s'})$ is an implication pair, the states $\vec{s}$ and $\vec{s'}$ are referred to as the head and tail of the implication pair, respectively. Similarly, if $\rightarrow$ is a set of implication pairs, then $HEAD(\rightarrow)$ and $TAIL(\rightarrow)$ denote the sets of all heads and tails of the implication pairs, respectively.

\subsection{Linear Integer Arithmetic Theory.}
Linear Integer Arithmetic (LIA) theory, denoted $T_{\mathbb{Z}}$, is syntactic sugar for the decidable Presburger Arithmetic theory $T_{\mathbb{N}}$ (see Appendix \ref{App: LIA theory}). $T_{\mathbb{Z}}$ is defined as a first order theory with signature $\Sigma = \{ ... , -2,-1,0,1, ... ,-2\cdot, -1\cdot, 2\cdot, ... , + , - ,  = , < , \le, > , \ge \}$ s.t. every formula $F$ in $T_{\mathbb{Z}}$ is equisatisfiable to some formula in $T_{\mathbb{N}}$ (see Appendix \ref{App: LIA theory}).  Standard SMT solvers (eg. Z3 \cite{de2008z3}) check the validity and satisfiability of formulas in LIA theory using a mix of decision procedures like DPLL for boolean abstraction and the generalized simplex algorithm followed by Branch and Bound for LIA literals \cite{dutertre2006fast, barrett2013decision}.

\subsection{$\epsilon$-Nets for Arbitrary Range Spaces.} \label{sec: enet}
Our method for finding loop invariants utilizes sampling of randomized $\epsilon$-nets w.r.t. ellipsoids. Such samples are defined and constructed using concepts of range spaces and Vapnik Chervonenkis (VC) dimension (\cite{matouvsek1993epsilon}) of range spaces. 
Intuitively, an $\epsilon$-net for a set $S$ w.r.t. some geometrical shape such as ellipsoids is a collection of points such that any `sufficiently' large ellipsoidal subset of $S$ must contain some point of the $\epsilon$-net. Hence if the data points we sample form an $\epsilon$-net w.r.t. ellipsoids, then intuitively all counterexamples to our guessed invariant must be contained in very small ellipsoids, and if we keep tightening the maximum size of these ellipsoids with consecutive iterations, we get a convergence guarantee for our algorithm. We next define range spaces, VC dimension, $\epsilon$-nets, and refer to a theorem to construct an $\epsilon$-net w.r.t. any range space $\mathcal{R}$ by uniform sampling.

\begin{definition}
    A pair $\Sigma = (X,\mathcal{R})$ is called a range space if $X$ is a non-empty set and $\mathcal{R}$ is a collection of subsets of $X$. Members of $\mathcal{R}$ are called ranges. $\Sigma = (X,\mathcal{R})$ is said to be finite if $X$ is finite. 
\end{definition}

\begin{definition}
    Let $(X,\mathcal{R})$ be a range space. A finite set $A \subseteq X$ is said to be shattered by $\mathcal{R}$ if $\{ A \cap r \: | \: r \in \mathcal{R}\} = 2^{A}$. Then the Vapnik Chervonenkis (VC) dimension of $(X,\mathcal{R})$ is defined as the size of the largest (in terms of cardinality) subset of $X$ that is shattered by $\mathcal{R}$.
\end{definition}

We give the VC dimension for polytopes in $\mathbb{R}^n$ and $n$-dimensional ellipsoids in Appendix~\ref{App:enetvc} with illustrations in Appendix~\ref{App:enetvc}. We next define $\epsilon$ nets for a range space.

\begin{definition}
    Given a finite range space $(X,\mathcal{R})$ and a parameter $0 < \epsilon < 1$, a set $N \subseteq X$ is called an $\epsilon$-net for $X$ w.r.t. $\mathcal{R}$ if for every range $r \in \mathcal{R}$ satisfying $|r| \ge \epsilon |X|$, we must have $N \cap r \neq \varnothing$. 
\end{definition}

Intuitively, an $\epsilon$-net $N$ for $X$ w.r.t. $\mathcal{R}$ is a set which `approximates' $X$ w.r.t. $\mathcal{R}$ such that any range $r \in \mathcal{R}$ either has relative frequency (w.r.t. $X$) less than $\epsilon$ or must intersect $N$. 
The following theorem gives a randomized construction of an $\epsilon$ net for any finite range space of bounded VC dimension:

\begin{theorem} \label{Thm: HAW} \cite{haussler1986epsilon}
    If $(X, \mathcal{R})$ is a finite range space of VC dimension $vc$, $\epsilon > 0$ and $m \ge \frac{8}{\epsilon}$, then a set $N \subseteq X$ obtained from $m$ uniform random independent draws from $X$ is an $\epsilon$-net for $X$ w.r.t. $\mathcal{R}$ with probability strictly greater than $1 - 2 \phi_{vc}(2m)2^{-\frac{\epsilon m}{2}}$ where 
\begin{align*}
    \phi_d(n) &:= \begin{cases}
        \sum_{i = 0}^d \binom{n}{i}, & \mbox{if}\ d \le n \\
        2^n, &\mbox{if}\ d > n
    \end{cases}
\end{align*}
\end{theorem}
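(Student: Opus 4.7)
The plan is to follow the classical ``double sampling'' (or ghost sample) argument of Vapnik--Chervonenkis, as adapted by Haussler and Welzl. Let $E_1$ denote the bad event that a uniform i.i.d. sample $N$ of size $m$ fails to be an $\epsilon$-net, i.e., there exists some range $r \in \mathcal{R}$ with $|r| \ge \epsilon|X|$ yet $r \cap N = \varnothing$. The goal is to upper bound $\Pr[E_1]$ by $2\phi_{vc}(2m)\,2^{-\epsilon m/2}$. The key idea is that bounding $\Pr[E_1]$ directly is hard because $\mathcal{R}$ may be large, but by introducing a second independent sample we can replace $\mathcal{R}$ with its much smaller ``projection'' onto a finite multiset, which is where Sauer--Shelah and the VC dimension enter.

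First, I would introduce a \emph{ghost sample} $T$ of $m$ independent uniform draws from $X$, independent of $N$, and define $E_2$ to be the event that there exists $r \in \mathcal{R}$ with $|r| \ge \epsilon|X|$, $r \cap N = \varnothing$, and $|r \cap T| \ge \epsilon m / 2$. I would then argue $\Pr[E_2] \ge \tfrac{1}{2}\Pr[E_1]$ as follows: conditional on $E_1$ holding with some witness range $r$ (chosen as a measurable function of $N$), the count $|r \cap T|$ is a sum of $m$ i.i.d. Bernoullis with success probability $p = |r|/|X| \ge \epsilon$, so its mean is at least $\epsilon m$ and its variance is at most $m/4$. By Chebyshev's inequality, $\Pr[|r \cap T| < \epsilon m / 2 \mid E_1] \le \frac{m/4}{(\epsilon m/2)^2} = \frac{1}{\epsilon^2 m}$, which is at most $\tfrac{1}{2}$ precisely when $m \ge 2/\epsilon^2$, and in any case for $m \ge 8/\epsilon$ (a slightly looser but sufficient constant). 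Hence $\Pr[E_1] \le 2\Pr[E_2]$.

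Next I would bound $\Pr[E_2]$ by a symmetrization trick. Draw $Z$ as $2m$ i.i.d. uniform samples from $X$, then select $N$ uniformly at random as an $m$-subset of $Z$, letting $T := Z \setminus N$; this produces the same joint distribution as drawing $N$ and $T$ independently. Conditional on $Z$, the set of distinct traces $\{r \cap Z : r \in \mathcal{R}\}$ has cardinality at most $\phi_{vc}(2m)$ by the Sauer--Shelah lemma, since the VC dimension of $(X,\mathcal{R})$ is $vc$. For any fixed trace of size $k \ge \epsilon m / 2$, the conditional probability that all $k$ of those points land in $T$ is $\binom{2m-k}{m}/\binom{2m}{m} \le 2^{-k} \le 2^{-\epsilon m / 2}$. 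A union bound over the at most $\phi_{vc}(2m)$ distinct traces, followed by taking expectation over $Z$, gives $\Pr[E_2] \le \phi_{vc}(2m)\,2^{-\epsilon m/2}$. Combined with the previous paragraph, this yields $\Pr[E_1] \le 2\phi_{vc}(2m)\,2^{-\epsilon m/2}$, and strict inequality follows because the union bound is strict whenever any trace has size strictly exceeding $\epsilon m / 2$, which happens with positive probability.

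The main obstacle I expect is step one: the Chebyshev estimate must be done carefully because the witness range $r$ is random (a function of $N$), so one has to phrase the conditioning correctly---either fix $r$ first and use independence of $T$ from $N$, or use a measurable selection. The symmetrization step of the second paragraph is standard once the ghost sample is in place, and the combinatorial identity $\binom{2m-k}{m}/\binom{2m}{m} \le 2^{-k}$ is immediate from $\prod_{i=0}^{k-1}\frac{m-i}{2m-i} \le 2^{-k}$. The role of the hypothesis $m \ge 8/\epsilon$ is exactly to make the Chebyshev bound in step one cross the $1/2$ threshold with a comfortable constant.
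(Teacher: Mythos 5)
This theorem is not proved in the paper at all: it is imported verbatim from Haussler and Welzl \cite{haussler1986epsilon}, so there is no internal proof to compare against. Your reconstruction follows the standard double-sampling/symmetrization argument, which is indeed the proof of record, and the second half (the ghost-sample symmetrization, the Sauer--Shelah bound $\phi_{vc}(2m)$ on the number of traces, and the estimate $\binom{2m-k}{m}/\binom{2m}{m} = \prod_{i=0}^{k-1}\frac{m-i}{2m-i} \le 2^{-k}$) is correct as written, modulo the routine technicality that the $2m$ i.i.d.\ draws form a multiset and the trace count should be taken over the index set $\{1,\dots,2m\}$.

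There is, however, a genuine gap in your first step. You bound the variance of $|r\cap T|$ by $m/4$ and measure the deviation from $\epsilon m$ down to $\epsilon m/2$, obtaining the Chebyshev estimate $\frac{m/4}{(\epsilon m/2)^2} = \frac{1}{\epsilon^2 m}$, and you correctly note this is $\le \tfrac12$ only when $m \ge 2/\epsilon^2$. But you then assert that $m \ge 8/\epsilon$ is ``a slightly looser but sufficient constant''; this is false. For $\epsilon < 1/4$ one has $2/\epsilon^2 > 8/\epsilon$, so the stated hypothesis does not imply the condition your Chebyshev bound needs, and the factor-of-two reduction $\Pr[E_1]\le 2\Pr[E_2]$ does not follow from your computation in exactly the regime (small $\epsilon$) where the theorem is used. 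The fix is the one in Haussler--Welzl: with $p = |r|/|X| \ge \epsilon$, write $|r\cap T| \sim \mathrm{Bin}(m,p)$, observe $\epsilon m/2 \le pm/2$, and apply Chebyshev around the true mean $pm$ with the true variance $mp(1-p)$:
\begin{align*}
\Pr\bigl[\,|r\cap T| < \tfrac{\epsilon m}{2}\,\bigr] \;\le\; \Pr\bigl[\,|\,|r\cap T| - pm\,| > \tfrac{pm}{2}\,\bigr] \;\le\; \frac{mp(1-p)}{(pm/2)^2} \;=\; \frac{4(1-p)}{pm} \;\le\; \frac{4}{\epsilon m} \;\le\; \frac{1}{2}
\end{align*}
for $m \ge 8/\epsilon$. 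With this correction the rest of your argument goes through. Your closing remark about where strictness of the final inequality comes from is hand-wavy (the union bound need not be strict in general), but this is cosmetic compared to the Chebyshev issue.
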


The above theorem says that any randomly sampled subset of $X$ of `sufficient size' forms an $\epsilon$-net for $X$ w.r.t. any range space with some success probability. However constructing a random sample from the lattice points within a polytope is a complicated task; {a vector $\vec{x} \in \mathbb{R}^n$ for some $n \in \mathbb{N}$ is called a lattice point if $\vec{x} \in \mathbb{Z}^n$; i.e. it only has integer coordinates.} To solve the issue, we use a combination of rejection sampling and sampling from the solution spaces of linear Diophantine equations \cite{gilbert1990linear} to efficiently construct such random samples (more details in Appendix \ref{App: Uniform Sampling poly lIA}).


\subsection{Ellipsoids in $\mathbb{R}^n$.} \label{Sec: PrereqEllipsoid}
The $\epsilon$-nets we construct for our algorithm use \( n \)-dimensional ellipsoids as ranges. As such, we need specific results on \( n \)-dimensional ellipsoids for our convergence guarantees. \\
Firstly, we note that the number of lattice points contained in an ellipsoid is proportional to the volume of the ellipsoid (see Appendix \ref{App:IOLJE} for a formal theorem). Furthermore, for any \( n \)-dimensional polytope, there exist two unique ellipsoids---one inscribed by and one circumscribing the polytope---such that the outer ellipsoid is a dilation of the inner ellipsoid. These ellipsoids are referred to as the inner and outer Lowner-John ellipsoids of the polytope, respectively (see Appendix \ref{App:IOLJE} for the formal theorem).

\subsection{Simulated Annealing.} 
Given a search space \(\mathcal{X}\), a symmetric neighborhood relation \(\mathcal{N}\) over \(\mathcal{X}\), and a cost function \(c : \mathcal{X} \rightarrow [0,\infty)\), simulated annealing (SA) \cite{bertsimas1993simulated, romeo1991theoretical} searches for a  candidate \(s^* \in \mathcal{X}\) such that \(c(s^*) = 0\), which denotes a global minimum of \(\mathcal{X}\) with respect to \(c\) (see Algorithm \ref{alg: SimA}).


\noindent\begin{minipage}{0.50\textwidth}
    \begin{algorithm}[H]
    \caption{\texttt{SIMULATED\_ANNEALING}}\label{alg: SimA}
    \begin{algorithmic}[1] 
    \State \textbf{Input: } $(\mathcal{X}, \mathcal{N}, c , s_0)$
    \State \textbf{Hyperparameters: } $t_{max}$
    \State $(s,c) \longleftarrow (s_0, c(s_0))$
    \State $T_0 \longleftarrow \texttt{INITIAL\_TEMPERATURE}(\mathcal{X}, \mathcal{N}, c)$
    \For{$t = 1$ to $t_{max}$}
    \If {$c = 0$} \Return $s$
    \EndIf
    \State $T \longleftarrow \frac{T_0}{\ln 1 + t}$
    \State $s' \longleftarrow \texttt{UNIFORM\_SAMPLE\_LIST}(\mathcal{N}(s))$
    \State $c ' \longleftarrow c(s')$
    \State $a \longleftarrow e^{-\frac{(c' - c)^+}{T}}$
    \If{$\texttt{UNIFORM\_SAMPLE\_INTERVAL}([0,1]) \le a$}
    \State $(s,c) \longleftarrow (s',c')$.
    \EndIf
    \EndFor
    \State \Return \texttt{FAIL}
    \end{algorithmic}
    \end{algorithm}
    \vspace{-10pt} 
\end{minipage}
\noindent\begin{minipage}{0.46\textwidth}
SA first sets the current candidate to \(s_0\), and computes its cost \(c\). Then, it sets the initial temperature using the \texttt{INITIAL\_TEMPERATURE} procedure. For up to \(t_{max}\) iterations, where \(t_{max}\) is a hyperparameter, it repeatedly does the following: it checks if the cost of the current candidate is zero. If so, it returns the current candidate; otherwise, it computes the temperature for the current iteration $t$ using a logarithmic cooling schedule, i.e., \(T \longleftarrow \frac{T_0}{\ln(1 + t)}\). Next, it uniformly samples a neighbor of \(s\) w.r.t the neighborhood relation by calling the \texttt{UNIFORM\_SAMPLE\_LIST} procedure. It then computes the cost of this neighbor, \(c'\), and chooses to make this transition with probability \(e^{-\frac{(c' - c)^+}{T}}\), known as the acceptance ratio. If it is unable to find an optimal solution within \(t_{max}\) iterations, it returns \texttt{FAIL}.
\end{minipage} \\
With the logarithmic cooling schedule, simulated annealing converges to a global minimum in time \( t \), with a probability at least \( 1 - \frac{\alpha}{t^\beta} \), irrespective of the starting candidate where \(\alpha\) and \(\beta\) are determined by the parameters of the invariant space. Theorem \ref{thm: SAguarantee} in Appendix \ref{App:SA} formalizes the convergence guarantees of simulated annealing for a logarithmic schedule. \\
We finally mention how to set $T_0$. Theorem \ref{thm: SAguarantee} advises making \(T_0 \geq rL\), where $r$ represents the radius of the invariant space and $L$ represents the maximum cost jump in a single transition (see Theorem \ref{thm: SAguarantee}). While setting \(T_0 \ge rL\) guarantees that our simulated annealing search will terminate by Theorem \ref{thm: SAguarantee}, in practice, this does not always yield the best results. 

We leverage the study in \cite{ben2004computing} to compute an experimentally optimal value for \(T_0\) (refer to Appendix \ref{App:SA} for the algorithm).

\section{Problem Statement: Standard LIA Single Loop CHC Systems.}
\label{Sec: Problem Statement}
The main goal of this study is to find a loop invariant that satisfies a specific CHC system of the form stated in Section~\ref{subsec:chc}, which is we call \emph{a standard LIA single loop CHC system}. Our tool further takes inputs $c$ and $d$ and aims to find an invariant from LIA theory that can be written as the disjunction of $d$ cubes, where each cube is the conjunction of at most $c$ LIA predicates of the form $\vec{w}_{ij} \cdot \vec{x} \le b_{ij}$, where $\vec{w_{ij}} \in \mathbb{Z}^n$, $b_{ij} \in \mathbb{Z}$ are variables forall $i$ and $j$, and $\vec{x}$ represents the vector of $n$ program variables\footnote{To ensure uniformity, we exclusively employ inequality LIA predicates in our representation. An equality LIA predicate of the form \(LHS = RHS\) is modeled using the conjunction of 2 inequality predicates as \((LHS \le RHS) \land (LHS \ge RHS)\).}. We call such invariants $d$-$c$ LIA invariants, which we now formally define.
\begin{definition}
A $d$-$c$ DNF is defined as a formula expressed as the disjunction of $d$ cubes, where each cube is the conjunction of $c$ predicates. A $d$-$c$ LIA DNF is then a $d$-$c$ DNF with the additional restriction that the predicates are derived from the LIA theory. Using this terminology, a $d$-$c$ LIA invariant is just an invariant written in the form of a $d$-$c$ LIA DNF.    
Such $d$-$c$ LIA invariants (and in general, $d$-$c$ LIA DNF's) can be viewed geometrically in $\mathbb{R}^n$ as\footnote{$n$ is the number of program variables.} the lattice points contained in the union of at most $d$ convex compact polytopes each with at most $c$ faces (we call the set of all lattice points inside a polytope as 'polytope-lattice points').
\end{definition}

\noindent \textit{Assumptions.} The input {standard LIA Single Loop CHC system} is characterized by the following requirements:
\begin{enumerate*}
    \item The system considers loop programs with a single loop. Programs with multiple loops require separate loop invariants, and solving the corresponding CHC system would necessitate an extension of the methods and techniques presented here.
    
    \item The system permits only integer variables in the programs.
    
    \item The propositions $P$, $B$, and $Q$ in these loop programs are expressible as $d$-$c$ LIA DNFs, for some values of $d$ and $c$.
    
    \item The transition relation $T$ is a piece-wise linear integer relation over the loop guard $B$, where the piece-wise linear integer relation is defined in Definition \ref{Def: linear relation}.
\end{enumerate*}
\begin{definition} \label{Def: linear relation}
The LIA formula $T$ defines a piece-wise linear integer relation over $B$ if there exists a finite partition $\{ B_i \}$ of $B$ s.t. $B_i$ are LIA formulas for all $i$,
and $T$ when restricted on $B_i$, defined as $T_{| B_i} := \{ (x,y) \: | \: T(x,y) \land x \in B_i \}$, is a linear integer relation over $B_i$.
Specifically, $T_{| B_i}$ is a linear integer relation over $B_i$ if there exists a finite set of functions on $B_i$, denoted $t_{ij} : B_i \rightarrow \mathbb{R}^n$, where for all $j$, $t_{ij}$ is a linear function with integer coefficients and $ T_{| B_i}(\vec{s}, \vec{s'}) \equiv \bigvee_j (\vec{s'} = t_{ij}(\vec{s}))$. Then $T(\vec{s}, \vec{s'}) \equiv \bigwedge_{i} (B_i(\vec{s})  \rightarrow T_{| B_i} (\vec{s}, \vec{s'})) $. Additionally, if the partition size is $d$ and the maximum number of functions per partition block is $r$, then $T$ is called a $d$-piecewise linear integer $r$-relation over $B$. 
\end{definition}
In short, a piecewise linear integer relation over \( B \) generalizes linear transformations over $B$ induced by straight line code, to allow for conditionals with deterministic and nondeterministic guards (see Appendix \ref{App: MotivateT} for a motivating example). Converting a transition relation to a piecewise-linear integer relation may lead to an increase in the size of the transition relation, depending on the number of conditional branches within the loop body. However, this blow-up is not an inherent limitation of our theory; rather, it arises as a result of representing the transition relation as a series of linear transformations over disjoint spaces. Extending this work to handle linear transformations over possibly overlapping spaces, which would avoid such blowups, is left for future research. 

Although seemingly constrained, this CHC system is highly expressive, enabling us to encode the majority of benchmarks—including those with randomly initialized variables, nondeterministic loop guards, conditionals with both deterministic and nondeterministic branches, and constructs such as \texttt{break}/\texttt{continue} and \texttt{goto} statements—into our input format (see Appendix~\ref{App: Representing Practical Programs in the Standard CHC Clause system} for a more comprehensive discussion). Additionally, by focusing on CHC systems with these constraints, we can leverage mathematical theories, such as computational geometry, to identify invariants—a capability not explored by previous tools. Furthermore, our experiments in Section \ref{Sec: ExpComp} reveal that existing state-of-the-art tools still face significant challenges in generating invariants for standard LIA single-loop CHC systems.

\textit{Design Choice Motivation.} 
As previously mentioned, our goal is to find \( d \)-\( c \) LIA invariants for a standard LIA single-loop CHC system for some \( d \) and \( c \). We first observe that any invariant expressed by an LIA formula can be represented as a DNF, and thus can also be expressed as a \( d \)-\( c \) LIA invariant for some \( d \) and \( c \). Therefore, our requirement for a \( d \)-\( c \) LIA invariant is simply a restriction on the existence of an LIA invariant, which we now motivate. We note that single-loop programs relying on transition relation formulas from LIA theory generally require LIA formulas to represent their invariants, a phenomenon validated by our experiments on the benchmarks. However, discovering invariants using other non-linear theories or Linear Real Arithmetic (LRA) for more generalized CHC systems is an extension of our current work and is left for future study. \\
The next thing to motivate is that $d$ and $c$ are input by the user. Note that if a standard LIA Single Loop CHC system admits a $d$-$c$ LIA invariant, then it also admits a $d'$-$c'$ LIA invariant for any $d' \ge d$ and $c' \ge c$ (as you can pad the invariant by creating copies of any predicate or  $c$-cube). Hence there is an infinite range of values for $(d,c)$ which we can supply for which there exists a $d$-$c$ LIA invariant if the given benchmark program admits a LIA invariant in the first place (We also implement an ML algorithm to advise user on a good choice of `d' and `c'). Furthermore, this design allows for finding an invariant with a size bound, an option that may not be available for other algorithms.

\section{Methodology.}
\label{Sec:Algorithm}
We structure this section as follows: in Section~\ref{Sec:Overview}
we overview our approach, as implemented in a tool, called \toolname{}, at a high level on a toy example. In Section~\ref{SubSec:CEGIS Loop} and Section~\ref{SubSec:SA Loop}, we describe the details of the two major components of our algorithm i.e. the Datapoint Sampling Loop and Simulated Annealing Loop, respectively. Finally, in Section~\ref{Subsec: Putting it all together}, we put our whole algorithm together.

\subsection{Overview on a Toy Example.}
\label{Sec:Overview}
Consider an example of a simple \texttt{C} program in Fig~\ref{eg1}.
Its loop invariant is an LIA formula over the integer variables $x$ and $y$ denoted by $I(x,y)$, which satisfies the following standard LIA CHC clause system:
\begin{alignat*}{2}
    (\forall x,y \in \mathcal{D}_{\text{int}}) &\: (x = 1) \land (y = 1) &&\rightarrow I(x,y) \\
    (\forall x,y, x', y'\in \mathcal{D}_{\text{int}}) &\: I ( x,y ) \land (x' = x + y) \land (y' = x') &&\rightarrow I(x', y') \\
    (\forall x,y \in \mathcal{D}_{\text{int}}) &\:I(x,y) &&\rightarrow y \ge 1
\end{alignat*}
where $\mathcal{D}_{\text{int}} := \mathbb{Z} \cap \footnotesize{[- \text{INT}\_\text{MAX} - 1, \text{INT}\_\text{MAX}]}$.


\begin{minipage}{0.35\textwidth}
\begin{figure}[H]
\begin{verbatim}
int main(void) {
    int x = 1; int y = 1;
    while (rand_bool()) { 
        x = x + y; 
        y = x; 
    }
    assert (y >= 1);
    return 0;
}
\end{verbatim}
\vspace{-2.5 em}
\caption{An example \texttt{C} program. }
\label{eg1}
\end{figure}
\end{minipage}
\begin{minipage}{0.6\textwidth}
\toolname{} takes parameters $c$ and $d$ as inputs.  For this
example, assume $c = 2$ and $d = 1$, meaning that an invariant
should be the conjunction of two LIA predicates.  \toolname{} comprises of two loops as depicted in Fig \ref{fig:DSSA Connection}.  The inner loop, called the Simulated Annealing (SA) loop, takes a set of data samples, say $+_k$, $-_k$, and $\rightarrow_k$ as input (where $k$ is the $k$-th time the SA loop 
is invoked by the outer loop) and performs a search over the invariant space to find a correct invariant of  the following \textit{approximate CHC system}
\end{minipage}

\begin{alignat*}{2}
    (\forall (x,y) \in +_k) &\: (x = 1) \land (y = 1) &&\rightarrow I(x,y) \\
    (\forall ((x,y) , (x', y')) \in \: \rightarrow_k) &\: I ( x,y ) \land (x' = x + y) \land (y' = x') &&\rightarrow I(x', y') \\
    (\forall (x,y) \in -_k) &\:I(x,y) &&\rightarrow y \ge 1
\end{alignat*}
This new CHC system differs from the original in the set domains over which each CHC clause is quantified. We call an invariant to the above approximate CHC system an \emph{approximate invariant}.  The approximate invariant is then passed to the outer loop, called the DS (Datapoint Sampling) loop,
which uses an SMT solver (e.g., Z3) to check if the approximate
invariant satisfies the original bounded CHC
system. If yes, it finds a correct loop invariant. Otherwise, we
ask the SMT solver for a small set of counterexamples, which we use to
construct additional data samples $(+^\times_k,
-^\times_k, \rightarrow^\times_k)$. These samples are appended to the data points to form a new domain of quantification for the next SA loop iteration; i.e., we let 
$(+_{k+1}, -_{k+1},
\rightarrow_{k+1}) = (+_k \cup +^\times_k, -_k \cup -^\times_k,
\rightarrow_k \cup \rightarrow^\times_k)$. Additionally, at periodic iteration intervals, we
substitute our $\epsilon$-net with a `tighter' net, i.e. an
$\epsilon'$-net for $0 < \epsilon' < \epsilon$.  These factors (a
bounded number of counterexamples combined with decreasing $\epsilon$)
ensure faster convergence of the DS loop than a simple brute force approach for finding the invariant. For example, for the current example under consideration, we take $+_0, -_0,$ and $ \rightarrow_0$ to be (randomized) $\epsilon$-nets of $(x = 1) \land (y = 1)$, $\neg (y \ge 1)$, and $\{
((x,y) , (x + y,x + y)) \}$\footnote{In the current context, we call $\rightarrow \subseteq S \times S'$ an $\epsilon$-net of $S \times S'$ w.r.t ranges $\mathcal{R} \subseteq 2^{S}$ if $HEAD(\rightarrow)$ forms an $\epsilon$-net of $S$ w.r.t $\mathcal{R}$. Note that this definition deviates from the standard meaning of an $\epsilon$-net of $S \times S'$ w.r.t ranges $\mathcal{R}' \subseteq 2^{S \times S'}$.} w.r.t. ellipsoids respectively, for some chosen value of $\epsilon$.  

\begin{figure}[!tb]
    \centering
    \includegraphics[width=0.8\textwidth]{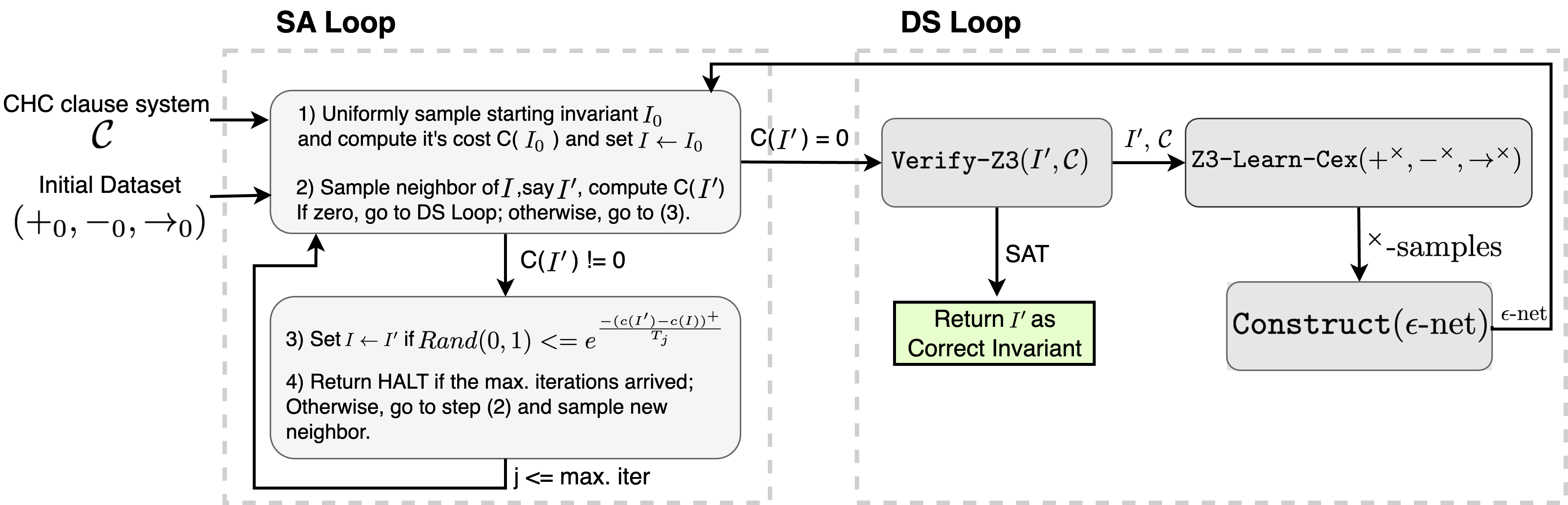}
    \caption{Diagram showing input-output behavior of DS and SA Loops of \toolname{}.  }
    \label{fig:DSSA Connection}
\end{figure}

In the SA loop, we search for an approximate invariant by running a simulated annealing search~\cite{mitra1986convergence} on the candidate invariant space: a candidate invariant is the union of any $d$ polytopes, each with at most $c$ faces.
In this search, our neighbors are defined by a change in the constant or coefficient value of some boundary line of some halfplane of the original candidate invariant and our cost function is defined as a linear-time computable approximation to the set-distance metric. Additionally, we use parallelism within the SA loop to increase the probability of finding an approximate invariant and to search across parameterized invariant spaces for different parameter values.

\subsection{Datapoint Sampling (DS) Loop.}\label{SubSec:CEGIS Loop}
In the previous section, we discussed approximate CHC systems, approximate invariants, and counterexamples. We now define these terms formally.
\begin{definition}
    Let $+_g$, $-_g$ and $\rightarrow_g$ be the ground positive, ground negative, and ground implication pair states respectively. If $+ \subseteq +_g$, $- \subseteq -_g$ and $\rightarrow \; \subseteq \;\rightarrow_g$, then we call the the triplet $(+, \rightarrow, -)$ a dataset and a $(+, \rightarrow, -)$-approximate CHC system is then defined as the following CHC system\footnote{Note that in the approximate CHC system, the quantifier sets have been restricted to $+$, $\rightarrow$ and $-$ for the fact, inductive and query clause respectively when compared to the standard LIA Single Loop CHC system.}:
 \begin{align*} 
(\forall \vec{s} \in  +) &\: P(\vec{s}) \rightarrow I(\vec{s}) \\
(\forall (\vec{s},\vec{s'}) \in \: \rightarrow) &\:  I(\vec{s}) \land B(\vec{s}) \land T(\vec{s},\vec{s'}) \rightarrow I(\vec{s'})  \\
(\forall \vec{s} \in -) &\: I(\vec{s})  \rightarrow Q(\vec{s}) 
\end{align*}
Any invariant (resp. $d$-$c$ LIA invariant) satisfying the above CHC system is called a $(+, \rightarrow, -)$-approximate invariant (resp. $(+, \rightarrow, -)$-approximate $d$-$c$ LIA invariant).
\end{definition}
\begin{definition}
    Let $I$ be an approximate invariant for some dataset.
    The plus counterexamples (abbreviated as 'cex' henceforth) $c_I^+$, implication counterexample pairs $c_I^{\rightarrow}$ and minus counterexamples $c_I^-$ for $I$ w.r.t the standard LIA Single Loop CHC system are defined as:
    \begin{align*}
        c_I^+ &= \{ \vec{t} \in \mathcal{D}_{state} \: : \: \neg \big( P(\vec{t}) \rightarrow I(\vec{t}) \big) \} \\
        c_I^\rightarrow &= \{ (\vec{t}, \vec{t'}) \in \mathcal{D}_{state}^2 \: : \: \neg \big( I(\vec{t}) \land B(\vec{t}) \land T(\vec{t}, \vec{t'}) \rightarrow I(\vec{t'}) \big) \} \\
        c_I^- &= \{ \vec{t} \in \mathcal{D}_{state} \: : \: \neg \big( I(\vec{t}) \rightarrow Q(\vec{t}) \big) \} 
    \end{align*}
    Finally, the cex-dataset for $I$ w.r.t the standard LIA Single Loop CHC system is defined as the triplet $(c_I^+, c_I^\rightarrow, c_I^-)$ and is denoted by $\mathcal{C}_I$. Following prior literature, we will call implication counterexample pairs \emph{ICE pairs}. Note that an implication pair is just a pair of head and tail states, whereas an ICE pair is an implication pair which is also a counterexample to some approximate invariant.
    
\end{definition}



\subsection*{Overall Algorithm for the DS Loop.}
\begin{minipage}{0.45\textwidth}
The algorithm for the DS Loop is given in Algorithm \ref{alg: DS Loop}. The DS Loop takes as input the precondition $P$, the loop guard $B$, the transition relation $T$, and the postcondition $Q$ for the standard LIA Single Loop CHC system. It then computes an initial dataset $\mathcal{S}$ using its input by calling the \texttt{INITIAL\_DATASET} procedure; $\mathcal{S}$ is some triplet of plus points, implication pairs, and minus points. It then initializes an initial invariant by \texttt{INITIAL\_INVARIANT} and finally
initializes the cex-dataset $\mathcal{C}$ to be empty. we will give the exact details of these procedures later. The DS Loop then runs a loop for $t_{max}$ number of iterations. Within this loop, we first use parallel simulated annealing to compute an $\mathcal{S}$-approximate invariant for the approximate CHC system. \texttt{PARALLEL\_SIMULATED\_ANNEALING} is a procedure that takes as input a dataset $\mathcal{S}$ and an initial invariant $I_0$ and outputs a boolean variable \texttt{SUCCESS} (indicating if the simulated annealing
\end{minipage} 
\begin{minipage}{0.55\textwidth}
\begin{algorithm}[H]
\caption{\texttt{DATAPOINT\_SAMPLING\_LOOP}}\label{alg: DS Loop}
\begin{algorithmic}[1]
\State \textbf{Input: } $(P, B, T, Q)$
\State \textbf{Hyperparameters: } $t_{max}$
\State $\mathcal{S} \longleftarrow $ \texttt{INITIAL_DATASET}(P, B, T, Q)
\State $I_0 \longleftarrow \texttt{INITIAL\_INVARIANT}(\mathcal{S})$
\State $\mathcal{C} \longleftarrow \varnothing$
\For{$t = 1$ to $t_{max}$}
\State (\texttt{SUCCESS}, $I_{aprx}$) $\longleftarrow$  
\Statex \hspace*{1 cm} \texttt{PARALLEL\_SIMULATED\_ANNEALING}($\mathcal{S}, I_0$)
\If{$\neg$ \texttt{SUCCESS}}
\Return{\texttt{FAIL}}
\EndIf
\State (\texttt{CORRECT}, $\mathcal{C}_{I_{aprx}}$) $\longleftarrow$ \texttt{VERIFIER}($I_{aprx}$, P, B, T, Q)
\If{\texttt{CORRECT}}
\Return{$I_{aprx}$}
\EndIf
\State $\mathcal{C} \longleftarrow \mathcal{C} \sqcup \mathcal{C}_{I_{aprx}}$
\If{ \texttt{REFINE\_DATASET\_CRITERION}($t$) }
\State $\mathcal{S} \longleftarrow$ \texttt{REFINED\_DATASET}($P, B, T, Q$)
\EndIf 
\State $(\mathcal{S}, I_0) \longleftarrow (\mathcal{S} \sqcup \mathcal{C}, I_{aprx})$
\EndFor
\State \Return{\texttt{FAIL}}
\end{algorithmic}
\end{algorithm}
\end{minipage}
 search was successful or not) and an $\mathcal{S}$-approximate invariant $I_{aprx}$ if \texttt{SUCCESS} = \texttt{TRUE}. If \texttt{SUCCESS} $\neq$ \texttt{TRUE}, the DS Loop fails and halts. If not, the DS Loop feeds the \(\mathcal{S}\)-approximate invariant $I_{aprx}$ along with the CHC parameters \(P\), \(B\), \(T\), and \(Q\) to the procedure \texttt{VERIFIER}, which checks if \(I_{aprx}\) is a valid loop invariant using an SMT solver and if not, the \texttt{VERIFIER} outputs a cex-dataset $\mathcal{C}_{I_{aprx}}$ for \(I_{aprx}\) w.r.t the standard LIA Single Loop CHC system. If $I_{aprx}$ is the correct loop invariant, the DS Loop returns $I_{aprx}$ and halts. If not, it appends $\mathcal{C}_{I_{aprx}}$ to its current cex dataset $\mathcal{C}$.\footnote{If $\mathcal{S} = (+, \rightarrow, -)$ and  $\mathcal{S'} = (+', \rightarrow', -')$ are two datasets; and $\mathcal{C} = (c^+, c^{\rightarrow}, c^-)$ is a cex-dataset, then the expressions $\mathcal{S} \cup \mathcal{S'}$ and $\mathcal{S} \cup \mathcal{C}$ are understood as the triples $(+ \cup +', \rightarrow \cup \; \rightarrow', - \cup -')$ and $(+ \cup c^+, \rightarrow \cup \; c^{\rightarrow}, - \cup c^- )$ respectively.} Finally, at periodic intervals of the time counter, the DS Loop chooses to refine the dataset $\mathcal{S}$ by replacing it with new datapoints found by calling the \texttt{REFINED\_DATASET} procedure.  Finally, we append the cex-dataset $\mathcal{C}$ to the dataset $\mathcal{S}$, set $I_0$ to $I_{aprx}$, move on to the next iteration of the loop.
We next give the details of the \texttt{INITIAL\_DATASET}, \texttt{REFINED\_DATASET} and \texttt{REFINE\_DATASET\_CRITERION} procedures.

\subsection*{Initial Dataset, Refined Dataset, and Dataset Refining Criterion.}
Before diving into how we initialize and refine the datasets, we first provide the rationale behind our approach. We wish to give some intuition on what a `good' dataset should be; ideally, a good dataset would be one which `approximates' the ground positive, ground implication pairs, and ground negative states nicely. We give an example of a poor and good approximation for $+_g$ in Fig \ref{fig:MotivateENets}. More rigorously, a dataset $\mathcal{S}$ is a `good approximation' of $(+_g, \rightarrow_g, -_g)$ if we can give strict upper bounds on the size of any cex-dataset\footnote{The size of a dataset is defined as the total count of its plus states, implication pairs and minus states.} for any $\mathcal{S}$-approximate invariant. To give such upper bounds, we need to first understand the geometric structure of our cex-dataset.
\begin{figure}
    \centering
    \includegraphics[width= 0.7\linewidth]{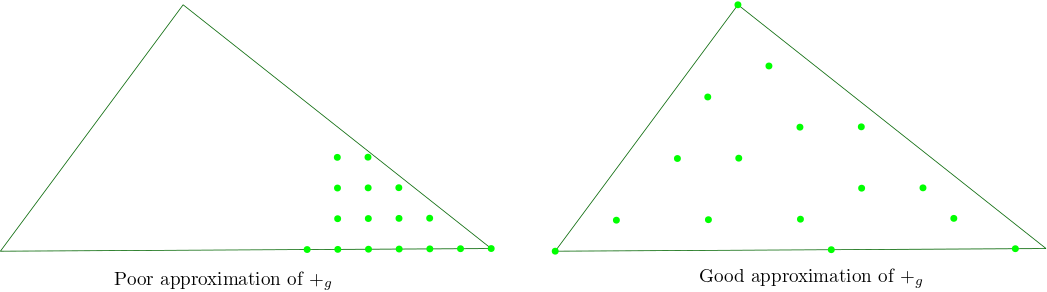}
    \caption{In the examples shown, the ground positive states are represented by all the lattice points within the dark green triangle, while the green dots indicate the sampled positive points. The left example demonstrates a poor approximation of $+_g$ as the sampling is concentrated near a single vertex of the triangle. In contrast, the right example provides a better approximation by distributing the sampled points more evenly across the triangle.}
    \label{fig:MotivateENets}
\end{figure}
\begin{lemma} \label{Lemma: CEX spaces}
    Let $\mathcal{S}$ be a dataset and $I$ be any $\mathcal{S}$-approximate invariant. Then the cex plus, ICE pairs, and cex minus states for $I$ w.r.t the standard LIA Single Loop CHC system satisfy:
\begin{align*}
    c_I^+ &\subseteq \{ \vec{t} \: : \: P(\vec{t}) \} \\
    c_I^\rightarrow &\subseteq \{ (\vec{t}, \vec{t'}) \: : \: B(\vec{t}) \land T(\vec{t}, \vec{t'}) \} \\
    c_I^- &\subseteq \{ \vec{t} \: : \: \neg Q(\vec{t}) \}
\end{align*}
Furthermore, given \(I\) is an \(\mathcal{S}\)-approximate \(d\)-\(c\) LIA invariant,
\begin{enumerate}
    \item If $P$ is a $d_P$-$c_P$ LIA DNF, then $c_I^+$ can be represented as a $(c^d d_P)$-$(d + c_P)$ LIA DNF.
    \item If $B$ is a $d_B$-$c_B$ LIA DNF and $T$ is a $d_T$-piecewise linear integer $r_T$-relation over $B$ where each partition block is a $d_{B_T}$-$c_{B_T}$ LIA DNF, then $HEAD(c_I^\rightarrow)$ can be represented as a $(c^ddd_Bd_Tr_Td_{B_T})$-$(d + c + c_B + c_{B_T})$ LIA DNF. Furthermore, $HEAD(c_I^\rightarrow) \subseteq \{ \vec{t} \: : \: B(\vec{t}) \}$.    
    \item If $Q$ is a $d_Q$-$c_Q$ LIA DNF, then $c_I^-$ can be represented as a $(c_Q^{d_Q}d)$-$(d_Q + c)$ LIA DNF.
\end{enumerate}
\end{lemma}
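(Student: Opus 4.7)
\textbf{Proof proposal for Lemma \ref{Lemma: CEX spaces}.} The plan is to handle the two claims separately. For the three set inclusions, I would unfold the definitions of $c_I^+, c_I^\rightarrow, c_I^-$ to rewrite each violated implication as the conjunction of its antecedent with the negation of its consequent: for instance, $\neg(P(\vec{t}) \rightarrow I(\vec{t}))$ is logically equivalent to $P(\vec{t}) \wedge \neg I(\vec{t})$, which immediately forces $\vec{t} \in \{ \vec{t} : P(\vec{t})\}$. The arguments for $c_I^\rightarrow$ and $c_I^-$ are analogous, and in the ICE-pair case the head necessarily lies in $\{ \vec{t} : B(\vec{t})\}$ for the same reason.

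For the DNF complexity bounds, the key preliminary observation is that if $I$ is a $d$-$c$ LIA DNF, then $\neg I$ can be placed in DNF form of shape $c^d$-$d$. This follows by dualizing: $\neg I \equiv \bigwedge_{i=1}^{d} \bigvee_{j=1}^{c} \neg p_{ij}$ is a CNF with $d$ clauses of width $c$, and distributing yields exactly $c^d$ cubes each containing $d$ predicates (one choice per clause). I would also record the elementary DNF-conjunction rule: the conjunction of a $d_1$-$c_1$ and a $d_2$-$c_2$ LIA DNF is a $(d_1 d_2)$-$(c_1 + c_2)$ LIA DNF, obtained by distributing $\wedge$ over $\vee$. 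With these two tools, parts (1) and (3) are direct: $c_I^+ = P \wedge \neg I$ combines a $d_P$-$c_P$ DNF with a $c^d$-$d$ DNF, giving $(d_P c^d)$-$(d + c_P)$; analogously $c_I^- = I \wedge \neg Q$ gives $(d \, c_Q^{d_Q})$-$(c + d_Q)$.

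The main obstacle, and the step I would spend the most care on, is part (2), because $HEAD(c_I^\rightarrow)$ involves an existential over the post-state $\vec{t'}$. My plan is to eliminate this quantifier using the piecewise-linear structure of $T$ from Definition \ref{Def: linear relation}. Since $\{B_i\}$ is a partition of $B$ and on each block $T_{|B_i}(\vec{t},\vec{t'}) \equiv \bigvee_{j} (\vec{t'} = t_{ij}(\vec{t}))$ with $t_{ij}$ a linear map with integer coefficients, I can rewrite
\begin{align*}
HEAD(c_I^\rightarrow) \;\equiv\; I(\vec{t}) \,\wedge\, B(\vec{t}) \,\wedge\, \bigvee_{i=1}^{d_T}\bigvee_{j=1}^{r_T}\bigl(B_i(\vec{t}) \,\wedge\, \neg I(t_{ij}(\vec{t}))\bigr).
\end{align*}
The crucial substitution step is that $\neg I(t_{ij}(\vec{t}))$ remains a $c^d$-$d$ LIA DNF in the variables $\vec{t}$, because substituting an integer-coefficient linear map into each LIA predicate still yields an LIA predicate and does not change the DNF shape. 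I would then apply the two tools above mechanically: each $B_i \wedge \neg I(t_{ij}(\cdot))$ is a $(d_{B_T} c^d)$-$(c_{B_T} + d)$ DNF; taking the disjunction over the $d_T \cdot r_T$ index pairs multiplies the number of cubes by $d_T r_T$; and finally conjoining with $I$ (shape $d$-$c$) and $B$ (shape $d_B$-$c_B$) multiplies cube counts and adds cube widths, yielding the claimed $(c^d d\, d_B d_T r_T d_{B_T})$-$(d + c + c_B + c_{B_T})$ bound. The $HEAD(c_I^\rightarrow) \subseteq \{\vec{t} : B(\vec{t})\}$ containment is then a byproduct of the explicit $B(\vec{t})$ conjunct we kept along the way.
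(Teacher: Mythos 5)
Your proposal is correct and follows essentially the same route as the paper's proof: negating each clause to get $\mathrm{Body} \land \neg\mathrm{Head}$, the $c^d$-$d$ shape of $\neg I$, the product rule for conjunctions of DNFs, and the observation that substituting the integer-coefficient linear maps $t_{ij}$ into the predicates of $\neg I$ preserves cube shape (which is exactly the paper's $T^{-1}$ computation, stated there in matrix form). The resulting cube counts and widths match the claimed bounds, so no gap.
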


We defer the proof of Lemma \ref{Lemma: CEX spaces} to Appendix \ref{App: Proofs for Lemmas in Methodology}. Lemma \ref{Lemma: CEX spaces} tells us that for any $\mathcal{S}$-approximate $d$-$c$ LIA invariant $I$, $c_I^+$, $HEAD(c_I^\rightarrow)$, and $c_I^-$ are the union of some fixed number of polytope-lattice points in $P$, $B$, and $\neg Q$ respectively. This observation clearly suggests defining $+$, $HEAD(\rightarrow)$, and $-$ as $\epsilon$-nets of $P$, $B$, and $\neg Q$ with polytopes as ranges as such a choice would bound the size of any polytope-lattice points in the cex-dataset for any approximate invariant.\footnote{To clarify, by definition, an $\epsilon$-net $E$ for a space $X$ with respect to ranges $\mathcal{R}$ ensures that for every range $r \in \mathcal{R}$, $r \cap E = \varnothing \Rightarrow |r| < \epsilon |X|$. In our context, any counterexample space $r$ to an approximate invariant for the dataset $E$ is necessarily disjoint from $E$ by the definition of a `counterexample'. Consequently, by the definition of an $\epsilon$-net, this implies that $r$ has bounded size.} 
Using the number of these polytope-lattice points in the cex datasets, we can further bound the size of the cex-dataset. 

The above reasoning, while sound, presents a practical issue: The \texttt{PARALLEL\_SIMULATED\_ANNEALING} procedure (described in Section \ref{SubSec:SA Loop}) runs a sequence of iterations, and the time complexity of each iteration is directly proportional to the size of the dataset. Consequently, sampling very large datasets would make \texttt{PARALLEL\_SIMULATED\_ANNEALING} very slow. If we set our initial dataset to be \(\epsilon\)-nets with respect to polytope spaces, we end up with a very large initial dataset. This is because, according to Theorem \ref{Thm: HAW}, the size of an \(\epsilon\)-net is positively correlated with the VC dimension of the range space. Additionally, Lemma \ref{Lemma: VCdimPolytope} says that the VC dimension of polytopes is positively correlated with the number of faces they have. Since Lemma \ref{Lemma: CEX spaces} provides only a loose bound on the number of faces in a polytope of the cex-datasets, we can only construct $\epsilon$-nets for polytopes with very large sizes. We get around the aformentioned hurdle by making a crucial observation: Any $n$-dimensional polytope has uniquely inscribed and circumscribing Lowner-John ellipsoids, which are dilations of each other (refer Section \ref{Sec: PrereqEllipsoid}); and the VC dimension of such an $n$-dimensional ellipsoid is not large for modest $n$ (refer Lemma \ref{Lemma: VCdimEllipsoid}). So if we were to use ellipsoids as ranges (i.e. define $+$, $HEAD(\rightarrow)$ and $-$ as $\epsilon$-nets of $P$, $B$, and $\neg Q$ with ellipsoids as ranges), we can bound the size of all inner Lowner-John ellipsoids of all polytope-lattice points in the cex-dataset for any approximate invariant. As the outer Lowner-John ellipsoid is a dilation of the inner Lowner-John ellipsoid, this gives a bound on the size of the outer Lowner-John ellipsoid, which in turn bounds the size of the polytope-lattice points as the outer Lowner-John ellipsoid circumscribes the original polytope. Thus, we can construct a relatively small dataset using $\epsilon$-nets s.t. the size of any cex dataset is directly proportional to $\epsilon$. If we reduce the value of $\epsilon$ periodically in our DS Loop, we obtain a convergence guarantee for the DS Loop.

The final detail to address is the construction of \(\epsilon\)-nets for the required range spaces. By utilizing Theorem \ref{Thm: HAW}, we can construct a randomized \(\epsilon\)-net for \(X\) (given as an LIA DNF) for any range space with VC dimension \(vc\), with a probability of at least \(\delta\), provided we can efficiently sample from polytope-lattice points. We note that this sampling can be efficiently achieved using a combination of techniques—rejection sampling and sampling from solution spaces of Diophantine equations (further details in Appendix \ref{App: Uniform Sampling poly lIA}).\footnote{A common misconception is that we must count the total number of lattice points within a polytope (an expensive task) in order to uniformly sample from it.  It is not necessary; rejection sampling efficiently produces the required samples.} The details of the \texttt{RANDOMIZED\_$\epsilon$\_NET} procedure are deferred to Appendix \ref{App: suppAlg}. 
%
We now describe how we initialize and refine our dataset, as well as the criteria used for refinement.

\subsubsection*{Initial Dataset.} For some hyperparameters \(\epsilon_0\) and \(\delta_0\), we set \(+\), \(\text{HEAD}(\rightarrow)\), and \(-\) as randomized \(\epsilon_0\)-nets of \(P\), \(B\), and \(\neg Q\) respectively, w.r.t ellipsoids, each with a probability of at least \(\delta_0\). Finally, \(\rightarrow\) is constructed from \(\text{HEAD}(\rightarrow)\) using the transition relation.

\subsubsection*{Refined Dataset.} We refine the dataset by setting a lower value for $\epsilon$ (we divide the previous value by $2$), and then setting $+$, $HEAD(\rightarrow)$, and $-$ to be $\epsilon$-nets of $P$, $B$, and $\neg Q$ w.r.t ellipsoids as ranges, each with probability at least $\delta_0$. We compute $\rightarrow$ from $\text{HEAD}(\rightarrow)$ as before.

\subsubsection*{Refinement criterion.}
Theoretically, refining the dataset at each timestep would ensure faster convergence of the DS Loop. However, doing so increases the size of the $\epsilon$-net drastically within a few iterations, which slows down the \texttt{PARALLEL\_SIMULATED\_ANNEALING} procedure (as each iteration of this loop requires a single pass over the dataset $\mathcal{S}$). Therefore, a tradeoff between faster convergence of the DS Loop and the faster execution of \texttt{PARALLEL\_SIMULATED\_ANNEALING} must be made. To this end, we refine our dataset at intervals of $t_{\text{refine}}$, a hyperparameter. \\[1\baselineskip]
Based on the discussion above, we formalize the termination guarantee of the DS Loop and also provide a progress measure on the DS Loop by bounding the size of the cex-dataset at timestep \( t \) in Theorem \ref{Thm: DS Loop Conv Guar}. 

\begin{theorem}\label{Thm: DS Loop Conv Guar}
Let \( P \) be a \( d_P \)-\( c_P \) LIA DNF, \( B \) be a \( d_B \)-\( c_B \) LIA DNF, \( Q \) be a \( d_Q \)-\( c_Q \) LIA DNF, and \( T \) be a \( d_T \)-piecewise linear integer \( r_T \)-relation over \( B \) where each partition block is a \( d_{B_T} \)-\( c_{B_T} \) LIA DNF. If \( I(t) \) is any approximate \( d \)-\( c \) LIA invariant at the \( t \)-th iteration of the DS Loop, then with probability \( \delta_0^3 \), we can bound the total number of counterexamples to $I(t)$, denoted by $\mathcal{C}_{I(t)}$, as follows:
\[
|\mathcal{C}_{I(t)}| \le \frac{\epsilon_0}{2^{\lfloor \frac{t}{t_{\text{refine}}} \rfloor}} n^n \left( \frac{1 + \frac{C}{n^{2 - \frac{2}{n+1}}}}{1 - C} \right) \cdot \left( c^d d_P \Lambda(P) + c^d d d_B d_T r_T d_{B_T} \Lambda(B) + c_Q^{d_Q} d \Lambda(\neg Q) \right)
\]
where the constant \( C > 0 \) is defined in Theorem \ref{Thm: Ratio of Ellipses}, and we use the convention that \( \Lambda(X) \) denotes the number of lattice points in \( X \). 
Additionally, with probability at least \( \delta_0^3 \), the DS Loop terminates within
\[
t_{\text{refine}} \log_2 \left( \epsilon_0 n^n \left( \frac{1 + \frac{C}{n^{2 - \frac{2}{n+1}}}}{1 - C} \right) \max (\Lambda(P), \Lambda(B), \Lambda(\neg Q)) \right)
\]
iterations and any approximate invariant found at this final iteration must be the correct loop invariant.
\end{theorem}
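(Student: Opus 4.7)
The plan is to combine the structural description of counterexample sets from Lemma~\ref{Lemma: CEX spaces} with the $\epsilon$-net guarantee of Theorem~\ref{Thm: HAW} and the inner/outer Lowner-John ellipsoid sandwiching described in Section~\ref{Sec: PrereqEllipsoid}. The counterexample bound will be derived pointwise per clause, then summed; the termination bound will follow by solving for the $t$ at which the summed bound drops below $1$.

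First, I would fix any approximate $d$-$c$ LIA invariant $I(t)$ produced at iteration $t$ and apply Lemma~\ref{Lemma: CEX spaces} to express $c_{I(t)}^+$, $\mathrm{HEAD}(c_{I(t)}^\rightarrow)$, and $c_{I(t)}^-$ as lattice points inside a union of polytopes lying in $P$, $B$, and $\neg Q$ respectively, with the number of polytopes and their face counts given explicitly by the lemma. Each such polytope $K$ is sandwiched between its inner Lowner-John ellipsoid $E_{\mathrm{in}}(K)$ and outer Lowner-John ellipsoid $E_{\mathrm{out}}(K)$, where $E_{\mathrm{out}}(K)$ is a bounded dilation of $E_{\mathrm{in}}(K)$ by Theorem~\ref{Thm: Ratio of Ellipses}; this is where the factor $(1 + C/n^{2 - 2/(n+1)})/(1 - C)$ appears when one converts a volume (and hence lattice-point count, by the lattice-to-volume relationship from Appendix~\ref{App:IOLJE}) bound on $E_{\mathrm{in}}(K)$ into one on $K$.

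Next, I would invoke the $\epsilon$-net property. By construction, at iteration $t$ the sets $+$, $\mathrm{HEAD}(\rightarrow)$, and $-$ are $\epsilon_t$-nets of $P$, $B$, and $\neg Q$ w.r.t.\ ellipsoid ranges, where $\epsilon_t = \epsilon_0 / 2^{\lfloor t / t_{\text{refine}} \rfloor}$, each valid with probability at least $\delta_0$; independence of the three samplings yields joint validity with probability at least $\delta_0^3$. Since every polytope $K$ appearing in a counterexample set is disjoint from the current net (definition of counterexample), so is $E_{\mathrm{in}}(K) \subseteq K$. The $\epsilon$-net property then forces $|E_{\mathrm{in}}(K)| < \epsilon_t |X|$, where $X \in \{P, B, \neg Q\}$ is the ambient region. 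Chaining this through the Lowner-John dilation gives $|K| \le \epsilon_t \cdot n^n \cdot \frac{1 + C/n^{2 - 2/(n+1)}}{1 - C} \cdot \Lambda(X)$ (the $n^n$ absorbs the worst-case ratio of the circumscribing ellipsoid volume to the polytope volume in $n$ dimensions). Summing over the three clause types, using the explicit polytope counts from Lemma~\ref{Lemma: CEX spaces} ($c^d d_P$ for the fact clause, $c^d d d_B d_T r_T d_{B_T}$ for the inductive head, and $c_Q^{d_Q} d$ for the query), produces the claimed bound on $|\mathcal{C}_{I(t)}|$. For the inductive clause I must also argue that bounding $\mathrm{HEAD}(c_{I(t)}^\rightarrow)$ bounds $|c_{I(t)}^\rightarrow|$; this is where I would use the piecewise-linear $r_T$-relation structure of $T$, since each head state has at most $r_T$ corresponding tails on any partition block, and the factor $r_T$ is already present in the bound.

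For the termination claim, I would set the cex-dataset bound strictly below $1$: since counterexample counts are non-negative integers, $|\mathcal{C}_{I(t)}| < 1$ forces $\mathcal{C}_{I(t)} = \varnothing$, so \texttt{VERIFIER} must accept $I(t)$. Solving $\epsilon_0 / 2^{\lfloor t/t_{\text{refine}} \rfloor} \cdot n^n \cdot \frac{1 + C/n^{2-2/(n+1)}}{1 - C} \cdot \max(\Lambda(P), \Lambda(B), \Lambda(\neg Q)) < 1$ for $t$ yields $\lfloor t/t_{\text{refine}} \rfloor > \log_2\!\bigl(\epsilon_0 n^n \tfrac{1+C/n^{2-2/(n+1)}}{1-C} \max(\Lambda(P), \Lambda(B), \Lambda(\neg Q))\bigr)$, i.e., the stated iteration count; absorbing the constant polytope-count prefactors into the $\max$ would require noting they are at most the ambient lattice counts themselves for nontrivial programs, or folding them into the argument of the logarithm explicitly. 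The main obstacle I anticipate is the careful bookkeeping in step two, namely cleanly relating lattice-point counts of polytopes to those of their Lowner-John ellipsoids via the constant $C$ while keeping the dependence on the dimension $n$ tight enough to match the stated expression; the probabilistic piece and the termination arithmetic are comparatively routine once the per-iteration geometric bound is in hand.
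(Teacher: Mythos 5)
Your proposal follows essentially the same route as the paper's proof: Lemma~\ref{Lemma: CEX spaces} gives the polytope decomposition of each counterexample set, the inner Lowner-John ellipsoid of each such polytope is disjoint from the sampled $\epsilon_t$-net and hence has fewer than $\epsilon_t\Lambda(X)$ lattice points, and chaining through the $n$-fold dilation to the outer ellipsoid (via Theorems~\ref{Thm: JLEllipsoid} and~\ref{Thm: Ratio of Ellipses}) yields the per-polytope bound, which is then multiplied by the polytope counts and summed over the three clauses, with the three $\delta_0$ events combined into $\delta_0^3$. The only point where you diverge — worrying about folding the prefactors $c^d d_P$, etc., into the logarithm for termination — is resolved in the paper by arguing per polytope rather than on the summed bound: once each individual polytope satisfies $\Lambda(\mathcal{P})<1$ it is empty, so the prefactors never enter the termination threshold and the stated iteration count follows directly.
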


We defer the proof of Theorem \ref{Thm: DS Loop Conv Guar} to Appendix \ref{App: Proofs for Lemmas in Methodology}. As a side note, it is possible to omit the \( n^n \left( \frac{1 + \frac{C}{n^{2 - \frac{2}{n+1}}}}{1 - C} \right) \) factor if we were to use polytopes as ranges instead of ellipsoids, thereby achieving a better theoretical guarantee. However, as discussed before, this would significantly slow down \texttt{PARALLEL\_SIMULATED\_ANNE ALING}, making this approach infeasible. We note that our algorithm provides termination guarantees independently of the specifics of the \texttt{VERIFIER} procedure. This independence is not a limitation: it ensures that our termination techniques can be generally applied to any CEGIS algorithm.


\subsubsection*{SMT-Verification. }
We first address a crucial question: if refining the dataset by sampling finer $\epsilon$-nets already provides convergence guarantees for the DS Loop, why do we need to sample additional points using the \texttt{VERIFIER} procedure at all? This is because  in \texttt{REFINE_DATASET_CRITERION} we  refine the dataset only at every $t_{\text{refine}}$ iteration of the DS Loop\footnote{As mentioned before, this interval-based refinement prevents the size of the $\epsilon$-net from increasing too drastically within a few iterations, which would otherwise slow down the \texttt{SIMULATED_ANNEALING} procedure.}. If $t \mod t_{\text{refine}} \neq -1$, then without adding cex for an approximate invariant $I$ to the dataset at timestep $t$, $I$ would remain an approximate invariant to the dataset at timestep $t+1$, leading to no progress.  In other words, refining the dataset can be seen as making "global progress"—a process that eventually compels the DS Loop to terminate—while finding counterexamples to an invariant at each iteration represents "incremental refinement"—a process that ensures continuous local progress in the DS Loop iterations.

\begin{algorithm}
\caption{\texttt{VERIFIER}}\label{alg: VERIFIER}
\begin{algorithmic}[1]
\State \textbf{Input: } $I$, $P$, $B$, $T$, $Q$
\State $(\texttt{IS\_FACT\_CLAUSE\_VALID}, c_I^+) \leftarrow \texttt{CHC\_VERIFIER}( \forall \vec{s} \:\; P(\vec{s}) \rightarrow I(\vec{s}) )$
\State $(\texttt{IS\_INDUCTIVE\_CLAUSE\_VALID}, c_I^\rightarrow) \leftarrow \texttt{CHC\_VERIFIER}( \forall (\vec{s}, \vec{s'}) \:\; I(\vec{s}) \land B(\vec{s}) \land T(\vec{s}, \vec{s'}) \rightarrow I(\vec{s'}) )$
\If{$\neg \texttt{IS\_INDUCTIVE\_CLAUSE\_VALID}$}
\State  $c_I^\rightarrow \leftarrow c_I^\rightarrow \cup \texttt{ITERATED\_IMPLICATION\_PAIRS}(c_I^\rightarrow)$ 
\EndIf
\State $(\texttt{IS\_QUERY\_CLAUSE\_VALID}, c_I^-) \leftarrow \texttt{CHC\_VERIFIER}( \forall \vec{s} \:\; I(\vec{s}) \rightarrow  Q(\vec{s}) )$
\State $\texttt{CORRECT} \leftarrow \texttt{IS\_FACT\_CLAUSE\_VALID} \land \texttt{IS\_INDUCTIVE\_CLAUSE\_VALID} \land \texttt{IS\_QUERY\_CLAUSE\_VALID}$
\State \Return{$(\texttt{CORRECT}, (c_I^+, c_I^\rightarrow, c_I^-) )$}
\end{algorithmic}
\end{algorithm}

The algorithm for the \texttt{VERIFIER} procedure is presented in Algorithm \ref{alg: VERIFIER}. This procedure takes as input an approximate invariant \(I\), the precondition \(P\), the loop guard \(B\), the transition relation \(T\), and the postcondition \(Q\). The procedure begins by checking if the fact clause is valid for the approximate invariant $I$ using the \texttt{CHC\_VERIFIER} procedure. In brief, this procedure checks if the input CHC clause is valid and if not, produces a special set of counterexamples called \textit{dispersed cex}. We defer the definition of dispersed cex and the details of the \texttt{CHC\_VERIFIER} procedure momentarily. The \texttt{VERIFIER} procedure then checks the validity of the inductive clause and the query clause in a similar manner, collecting dispersed counterexamples for either if they are not valid. Additionally, if the inductive clause is not satisfied, it computes an additional kind of counterexamples called iterated ICE pairs by calling the \texttt{ITERATED\_IMPLICATION\_PAIRS} procedure (explained momentarily). We now define \textit{Dispersed Cex} and \textit{Iterated Implication pairs}, and give details on how to construct them.
\begin{figure}
    \centering
    \includegraphics[scale=0.4]{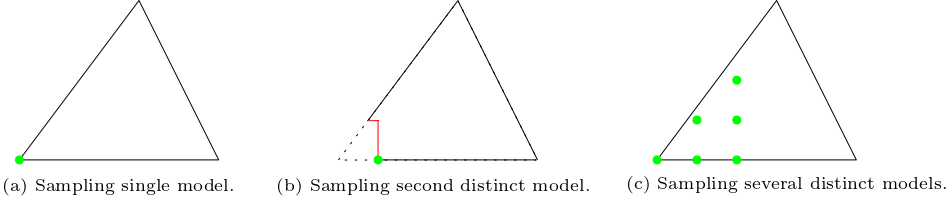}
    \caption{In the above figure, we aim to probe an SMT solver for distinct LIA models within the black triangle. Figure (a) illustrates the plot of the first model, which is a vertex of the triangle. To obtain a second distinct model, we add the constraint \(\big( (x,y) \neq (x_f, y_f) \big) \equiv \big( x < x_f \lor x > x_f \lor y < y_f \lor y > y_f \big), \) where \((x_f, y_f)\) are the coordinates of the first model. This constraint modifies the sampling space, with the new boundaries drawn in red. Sampling from this space results in a new vertex as a model, but unfortunately, it is very close to the first model, as depicted in Figure (b). Finally, Figure (c) shows the result of sampling many distinct counterexamples from the black triangle.}
    \label{fig:sampleDistinceCEX}
    \vspace{-1.0 em}
\end{figure}

\subsubsection*{Dispersed CEX. }
At a high level, given as input a CHC clause, we want to check if the input clause is valid, and if not, make `good' local progress by finding a set of `good' counterexamples to the CHC clause from the state space. However, if we repeatedly probe an SMT solver for multiple distinct counterexamples\footnote{To probe the SMT solver for distinct cex, you need to probe the SMT solver for models of the clause $\forall \vec{s} \: \neg (\texttt{CHC}(\vec{s})) \land \left(\bigwedge_{\vec{t} \in \texttt{CEX}} (\vec{s} \neq \vec{t})\right)$, where \texttt{CEX} represents the set of already probed cex. If you simply probe for models of $\forall \vec{s} \: \neg (\texttt{CHC}(\vec{s}))$, the SMT solver, being a deterministic machine, will keep returning the same cex.}, we encounter a problem: SMT solvers find models for formulas from LIA theory using simplex-based methods \cite{dutertre2006fast}. These methods find models that are vertices of the input polytope space, and tools implementing such methods use deterministic heuristic choices to determine which vertex to sample. Consequently, after disallowing a previously sampled model, new models found are the new vertices introduced by the addition of new predicates, which are very close to the previous model. If we sample many such distinct models by this method, we obtain a set of points all clustered together, not accurately reflecting the nature of the cex-space. This issue is demonstrated in Figure \ref{fig:sampleDistinceCEX}. To achieve a better sampling from the cex-space, we aim to sample cex that are distant from each other - we label such cex as \textit{dispersed cex}. We can sample dispersed cex by ensuring the new counterexample to sample is at least some fixed distance away from all previously sampled counterexamples. This is achieved by adding the predicate $\bigwedge_{\vec{t} \in \texttt{CEX}} ||\vec{s} - \vec{t}||_1 \ge d_0$, where \texttt{CEX} stores the already sampled cex models, $d_0 > 0$ is a hyperparameter, and $||\vec{u}||_1$ represents the L1-norm of $\vec{u}$.\footnote{If $x \in \mathbb{R}^n$, then the L1-norm of $x$ is given by $\sum_{i=1}^n |x_i|$. Modern SMT solvers provide support for representing the modulus operator as an LIA formula by using techniques described in \cite{fischetti2018deep}.} We use the L1-norm as our distance metric because it is one of the few norms that can be represented as an LIA formula. We defer the details of the \texttt{CHC\_VERIFIER} procedure for Appendix \ref{App: suppAlg}.

\subsubsection*{Iterated Implication Pairs. }
Before defining iterated ICE pairs, we first discuss their importance. Sampling dispersed counterexamples ensures good local progress for the fact and query clauses, but this is not the case for the inductive clauses. This distinction arises from the unique nature of implication pairs, which can be satisfied by an approximate invariant in one of two ways: either the head of the implication pair is not satisfied by the invariant, or the tail is satisfied by the invariant. Due to these two possibilities, many sampled implication pairs remain inactive during a single DS iteration, and local progress for satisfying the inductive clause is slow.

This issue is demonstrated in Figure \ref{fig:iipMotivate}. Both figures show successive iterations of the DS Loop. The state space is depicted by the black rectangle, with implication pairs marked by blue arrows and ICE pairs marked by purple arrows. The cyan rectangle represents the ground truth invariant, while the approximate invariants are shown in gray. In the left diagram, all six implication pairs are satisfied by the approximate invariant: the first two because both heads and tails are satisfied, and the remaining four because neither heads nor tails are satisfied. However, the ground truth invariant indicates that the last four pairs need both heads and tails included, making them inactive until the boundary of the current approximate invariant shifts to the right.

$\textit{ICE}_1$ is a counterexample for the current approximate invariant. In the next run, the new approximate invariant (right) satisfies the six implication pairs and $\textit{ICE}_1$, but introduces a new counterexample, $\textit{ICE}_2$, which was not previously a counterexample. This process keeps continuing: for each new DS iteration, the boundary of the approximate invariant moves exactly \textbf{one unit} to the right, requiring new ICE pairs. This cycle persists until the boundary includes the heads of $\rightarrow_3$ or $\rightarrow_5$, thereby activating these implication pairs; hence it is very slow progress in satisfying the inductive clause. We can resolve this issue by using iterated ICE pairs, which we now define.

\begin{definition}
For any natural number \( k \), a \( k \)-iterated implication pair is a pair \( (\vec{x}, \vec{y}) \) where there exist $k-1$ vectors \( \vec{z}_1, \vec{z}_2, \ldots, \vec{z}_{k-1} \in \mathcal{D}_{\text{state}} \) such that \( \bigwedge_{i=0}^{k-1} B(\vec{z}_i) \land T(\vec{z}_i, \vec{z}_{i+1}) \), where \( \vec{z}_0 = \vec{x} \) and \( \vec{z}_k = \vec{y} \). An iterated implication pair is then a $k$-iterated implication pair for some fixed $k \in \mathbb{N}$. An iterated ICE pair is just an iterated implication pair which is also a cex to some approximate invariant.
\end{definition}

\begin{figure}
    \centering
    \includegraphics[scale=0.39,width=0.8\textwidth]{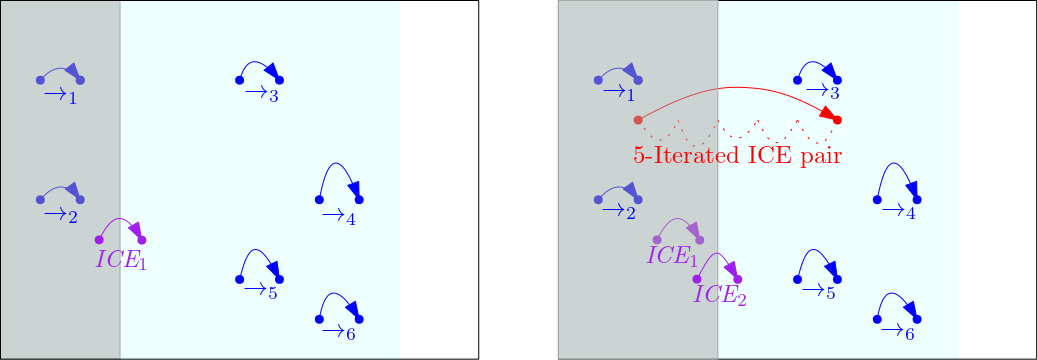}
    \caption{Iterations of DS Loop show the challenge of satisfying implication pairs and the benefit of  ICE pairs. 
    }
    \vspace{-1.0 em}
    \label{fig:iipMotivate}
\end{figure}

An approximate invariant satisfies an iterated implication pair, in the same way, it satisfies an implication pair: either by not satisfying its head or by satisfying its tail. Upon closer examination, an iterated implication pair generalizes the concept of loop unrolling: a $k$-iterated implication pair effectively unrolls the loop $k$ times, starting from the head state. However, it is more general because the starting state does not need to be an actual reachable state from the loop. Iterated implication pairs solve the previously mentioned issue, as the approximate invariant must make significant progress to satisfy a previously unsatisfied iterated implication pair, thereby ensuring good local progress. For example in Fig \ref{fig:iipMotivate}, the iterated ICE pair (red arrow in the right diagram) solves the issue of slow progress in one step, as neither of the approximate invariants satisfy it, ensuring progress, with $\rightarrow_3$ or $\rightarrow_5$ becoming active after a single DS Loop iteration. We sample iterated implication pairs by unrolling the transition relation in the \texttt{ITERATED_IMPLICATION_PAIRS} procedure, whose details we defer to Appendix \ref{App: suppAlg}.


\subsection{Simulated Annealing (SA) Loop.}\label{SubSec:SA Loop}

We use parallel syntactic-based simulated annealing to find an $\mathcal{S}$-approximate invariant given a dataset $\mathcal{S}$ and initial invariant $I_0$. Before detailing how we modify simulated annealing to allow for parallelism, we first describe the basic parameters for our simulated annealing search: (i) state space, (ii) neighborhood relation, (iii) cost function, and (iv) initial invariant (starting point).

\subsubsection*{State Space.}

Our invariant space \(\mathcal{X}\) is parameterized by a parameter \(k\), which represents the maximum coefficient bound. If \(\rho(\mathcal{D}_{\text{state}})\) denotes the radius of \(\mathcal{D}_{\text{state}}\), then let \(k' := k \sqrt{n} \rho(\mathcal{D}_{\text{state}})\) represent the maximum constant bound. Given hyperparameters \(d\) and \(c\), our goal is to find \(d\)-\(c\) LIA invariants where the invariant space \(\mathcal{X}(k)\) is given by:
\[ 
\mathcal{X}(k) = \left\{ \Big( \forall \vec{x} \in \mathcal{D}_{\text{state}} \: \bigvee_{i=1}^d \bigwedge_{j=1}^c \vec{w}_{ij} \cdot \vec{x} \le b_{ij} \Big) \: : \:
    \forall i \forall j \: 0 < \|\vec{w}_{ij}\|_{\infty} \le k \land |b_{ij}| \le k'  \right\} 
\]
where \(|| \cdot ||_{\infty}\) represents the \(L_\infty \) norm\footnote{If \(\vec{x} \in \mathbb{R}^n\), then \( || \vec{x}||_{\infty} := \max_{1 \le i \le n} |x_i|\)}. Elements of the invariant space \(\mathcal{X}(k)\) are called candidate invariants. We now address the constraints we place on the coefficients and constants. We firstly note that our constraint on the constants of the predicates does not lose any generality beyond the restriction on the maximum coefficient value; we formally state and prove this statement in Lemma \ref{Lemma: SA constant bound} in Appendix \ref{App: Proofs for Lemmas in Methodology}. To address any loss of generality by the constraint we place on the maximum coefficient bound, we employ parallelism to do the SA search on spaces with different $k$ values.

\subsubsection*{Neighborhood Relation.}
For the neighborhood relation \(\mathcal{N}\), we define two candidate invariants \(I \equiv \Big( \forall \vec{x} \in \mathcal{D}_{\text{state}} \: \bigvee_{i=1}^d \bigwedge_{j=1}^c \vec{w}_{ij} \cdot \vec{x} \le b_{ij} \Big)\) and \(I' \equiv \Big( \forall \vec{x} \in \mathcal{D}_{\text{state}} \: \bigvee_{i=1}^d \bigwedge_{j=1}^c \vec{w'}_{ij} \cdot \vec{x} \le b'_{ij} \Big)\) as neighbors if they differ by exactly one predicate, with specific restrictions on how they differ: \(I\) and \(I'\) are neighbors if:
\begin{itemize}
    \item There exists an index pair \((i_0, j_0)\) such that either:
        \begin{itemize}
            \item The \(L_\infty\) norm of the difference in the coefficient vectors is \(\pm 1\), i.e., \(\|\vec{w}_{i_0 j_0} - \vec{w}'_{i_0 j_0}\|_{\infty} = 1\), or
            \item The constants differ by \(\pm 1\), i.e., \(b_{i_0 j_0} = b'_{i_0 j_0} \pm 1\).
        \end{itemize}
    \item For all \(i \neq i_0\) and \(j \neq j_0\), \(\vec{w}_{ij} = \vec{w}'_{ij}\) and \(b_{ij} = b'_{ij}\).
\end{itemize}

If the differing predicate is due to a difference in the coefficient vector, then a transition from \(I\) to \(I'\) is called a \textit{coefficient transition} whereas if the differing predicate is due to a difference in the constant, such a transition is called a \textit{constant transition}. It is easy to see that each candidate invariant in the search space \(\mathcal{X}(k)\) is reachable from another using the neighborhood relation.

\subsubsection*{Cost Function.}
Given a dataset $\mathcal{S}$, we aim to define a cost function on $\mathcal{X}(k)$ such that any $\mathcal{S}$-approximate invariant $I^*$ is an optimal solution, i.e., $c(I^*) = 0$. A possible approach is to define the cost as the ratio of the number of data points unsatisfied by a candidate invariant to the total number of data points sampled \cite{garg2016learning}. 
However, this method often fails to provide guidance on which transition to prefer. Figure \ref{fig:SAcostmotivate} illustrates this issue.

\begin{SCfigure}
    \centering
    \includegraphics[scale=0.31]{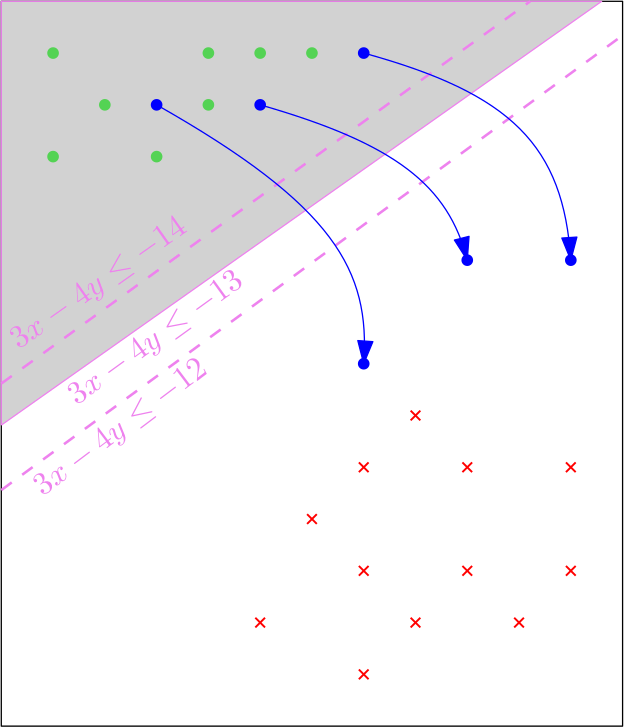}
    \caption{This plot highlights the limitation of using the ratio of unsatisfied points by a candidate invariant to the total sampled points as its cost. The black box outlines the state space \(\mathcal{D}_{state}\). Green dots represent positive points, red crosses denote negative points, and blue arrows indicate implication pairs. The current candidate invariant is depicted by the gray region, defined by a single predicate: \(3x - 4y \leq -13\). Two potential neighboring candidate invariants are shown with dashed lines: \(3x - 4y \leq -14\) and \(3x - 4y \leq -12\). Using the ratio of unsatisfied points as our cost function results in the same cost for the current candidate invariant and both of its neighbors. This occurs because all three satisfy all positive and negative points but fail to satisfy any implication pairs. With a constant cost function, the SA search lacks incremental feedback to determine whether transitioning to either neighbor is advantageous. However, the plot demonstrates that transitioning to \(3x - 4y \leq -12\) is preferable, as it brings us closer to achieving an approximate invariant.}
    \label{fig:SAcostmotivate}
\end{SCfigure}

A straightforward remedy is to use the average set distance of the candidate invariant from all data points instead of simply counting unsatisfied data points. Let $\delta : \mathcal{X}(k) \times \mathcal{D}_{state} \rightarrow \mathbb{R}_{\ge 0}$ denote a set distance function for candidate invariants (specific details to follow shortly). Then, the cost function $c(I)$ for a candidate invariant $I \in \mathcal{X}(k)$ can be defined as:
\[ c(I) = \frac{1}{3} \left( \frac{\sum_{\vec{p} \in +} \delta (I, \vec{p}) }{\sum_{\vec{p} \in +} 1} + \frac{\sum_{(\vec{h},\vec{t}) \in \rightarrow} \min (\delta (\neg I, \vec{h}), \delta (I, \vec{t})) }{\sum_{(\vec{h},\vec{t}) \in \rightarrow} 1} + \frac{\sum_{\vec{n} \in -} \delta (\neg I, \vec{n}) }{\sum_{\vec{n} \in -} 1} \right) \]
We provide an explanation for the above formula: Candidate invariants satisfy a plus-point by including it, and thus, the term $\delta(I, \vec{p})$ quantifies how much a plus-point $\vec{p} \in \mathcal{D}_{state}$ fails to be satisfied by $I$. Similarly, candidate invariants satisfy an implication pair by either excluding the head or including the tail. Therefore, $\min(\delta(\neg I, \vec{h}), \delta(I, \vec{t}))$ measures the extent to which an implication pair $(\vec{h}, \vec{t}) \in \mathcal{D}_{state}^2$ is unsatisfied by $I$.
Lastly, candidate invariants satisfy a minus point by excluding it. Hence, $\delta(\neg I, \vec{n})$ indicates how much a minus-point $\vec{n} \in \mathcal{D}_{state}$ is satisfied by $I$. 

We next give the definition for $\delta$.
Note that using the Euclidean set distance as \(\delta\) poses a problem, as computing this exact set distance would require an ILP solver, which is time-expensive. To address this, we introduce an approximation to the set distance function tailored specifically for our candidate invariants, which are disjunctions of polytope-lattice points in \(\mathbb{R}^n\). This approximation, defined solely for sets that conform to the shape of our candidate invariants, is called the \textit{approximate polytope set distance} \(\delta_{approx}\). For \(I \in \mathcal{X}(k)\) such that \(I \equiv \bigvee_{j=1}^d \bigwedge_{i=1}^c \vec{w_{ij}}\vec{x} \le b_{ij}\) and \(\vec{s} \in \mathcal{D}_{state}\), we define:

\[ \delta_{approx}(I , \vec{s}) := \min_{1 \le j \le d} \frac{1}{c} \cdot \sum_{i=1}^c \frac{(\vec{w_{ij}}^T \vec{s} - b_{ij})^+}{||\vec{w_{ij}}||} \]
$\delta_{approx}$ computes the average normalized distance of \(\vec{s}\) from the boundary hyperplanes defined by the cubes of \(I\), and takes the minimum value over all cubes (illustrated in Figure \ref{fig:costFuncPlot} in Appendix \ref{App: SuppFigures}). We state the properties of $\delta_{approx}$ in Lemma \ref{Lemma: Properties of delta}.

\begin{lemma} \label{Lemma: Properties of delta}
For all \(\vec{s} \in \mathcal{D}_{state}\) and all \(I \in \mathcal{X}(k)\), \(\delta_{approx}(I,\vec{s})\) satisfies the following properties:
\begin{itemize}
\item Non-negativity: \(\delta_{approx}(I,\vec{s}) \ge 0\). Furthermore, \(\delta_{approx}(I,\vec{s}) = 0 \Leftrightarrow \vec{s} \in I\).
\item Under approximation to actual set distance from \(\vec{s}\) to \(I\): If \(d_I(\vec{s}) := \inf \{ || \vec{s} -  \vec{t}|| \: : \:  \vec{t} \in I \}\) represents the actual set distance of \(\vec{s}\) from \(I\), then \( \delta_{approx}(I,\vec{s}) \le d_I(\vec{s})\).
\item Efficiently (linear-time) computable: \(\delta_{approx}(I,\vec{s})\) can be computed in \(O(ncd)\) time.
\end{itemize}    
\end{lemma}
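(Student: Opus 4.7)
The proof will establish the three properties in sequence, with each following from direct examination of the definition of $\delta_{approx}$.

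For non-negativity, I would observe that each summand $\frac{(\vec{w}_{ij}^T \vec{s} - b_{ij})^+}{\|\vec{w}_{ij}\|}$ is non-negative because the numerator uses the positive part and the denominator is strictly positive (since $\|\vec{w}_{ij}\|_\infty > 0$ by the state space definition). Hence each inner sum is non-negative, and so is their minimum. For the equality condition, $\delta_{approx}(I,\vec{s}) = 0$ holds exactly when some cube index $j$ satisfies $\sum_{i=1}^c \frac{(\vec{w}_{ij}^T\vec{s}-b_{ij})^+}{\|\vec{w}_{ij}\|}=0$. Since each summand is non-negative, this forces $(\vec{w}_{ij}^T\vec{s}-b_{ij})^+=0$ for all $i$, i.e., $\vec{w}_{ij}^T\vec{s}\le b_{ij}$ for all $i$, which is exactly membership of $\vec{s}$ in the $j$-th cube of $I$, i.e., $\vec{s} \in I$. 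Conversely, if $\vec{s} \in I$, the cube it satisfies witnesses $\delta_{approx}(I,\vec{s}) = 0$.

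For the under-approximation property, I would pick any $\vec{t} \in I$ (assuming $I$ non-empty; otherwise $d_I(\vec{s})=\infty$ and the claim is trivial). Then there exists some cube index $j_t$ such that $\vec{w}_{ij_t}^T\vec{t} \le b_{ij_t}$ for every $i$. Rewriting the numerator:
\[
\vec{w}_{ij_t}^T \vec{s} - b_{ij_t} \;=\; \vec{w}_{ij_t}^T(\vec{s}-\vec{t}) + (\vec{w}_{ij_t}^T\vec{t} - b_{ij_t}) \;\le\; \vec{w}_{ij_t}^T(\vec{s}-\vec{t}),
\]
so $(\vec{w}_{ij_t}^T\vec{s}-b_{ij_t})^+ \le |\vec{w}_{ij_t}^T(\vec{s}-\vec{t})| \le \|\vec{w}_{ij_t}\|\cdot \|\vec{s}-\vec{t}\|$ by Cauchy--Schwarz. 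Dividing by $\|\vec{w}_{ij_t}\|$ and averaging over $i$ yields $\frac{1}{c}\sum_{i=1}^c \frac{(\vec{w}_{ij_t}^T\vec{s}-b_{ij_t})^+}{\|\vec{w}_{ij_t}\|} \le \|\vec{s}-\vec{t}\|$. Since the LHS dominates the minimum over $j$, we get $\delta_{approx}(I,\vec{s}) \le \|\vec{s}-\vec{t}\|$. Taking the infimum over $\vec{t}\in I$ gives $\delta_{approx}(I,\vec{s}) \le d_I(\vec{s})$.

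For the complexity, the inner loop over $i$ computes a dot product $\vec{w}_{ij}^T \vec{s}$ in $\mathbb{R}^n$ in $O(n)$ time and a norm $\|\vec{w}_{ij}\|$ in $O(n)$ time, so each summand costs $O(n)$. Summing over $c$ predicates and $d$ cubes and taking the final minimum yields total cost $O(ncd)$. The minimum and the averaging add only a constant factor.

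The main obstacle is the under-approximation bound, which is the only step requiring a nontrivial algebraic manipulation; the key insight is choosing the cube witness $j_t$ from $\vec{t}\in I$ and applying Cauchy--Schwarz termwise. The other two properties are essentially unpacking the definitions.
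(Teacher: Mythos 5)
Your proof is correct and follows essentially the same route as the paper: the under-approximation bound rests on the identity $\vec{w}^T\vec{s}-b=\vec{w}^T(\vec{s}-\vec{t})+(\vec{w}^T\vec{t}-b)$ together with Cauchy--Schwarz, averaged over the cube and minimized over cubes, which is exactly the computation the paper performs. The only (cosmetic) difference is that the paper first packages this as a general translation inequality $\delta_{approx}(I,\vec{t}_1+\vec{t}_2)\le\delta_{approx}(I,\vec{t}_1)+\|\vec{t}_2\|$ and then specializes to the nearest point of $I$, whereas you inline that specialization directly.
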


We defer the proof of Lemma \ref{Lemma: Properties of delta} to Appendix \ref{App: Proofs for Lemmas in Methodology}. Our approximate polytope set distance \(\delta_{approx}\) is designed not only for use in simulated annealing applications but also as an efficiently computable approximation for polytope set distances given their H-representation\footnote{The H-representation of a polytope in \(\mathbb{R}^n\) defines the polytope as the intersection of a finite number of closed half-spaces, typically represented as \(\bigwedge_{i=1}^c \vec{w_i} \vec{x} \le b_i\). In contrast, the V-representation describes a polytope as the convex hull of a finite set of points, specifically the corner points of the polytope. Converting polytopes between H-representation and V-representation can involve significant computational expense. Our cost function leverages the fact that our polytopes are given in H-representation.}.

Setting \(\delta \longleftarrow \delta_{approx}\) makes sense, but it introduces a problem. Most coefficient transitions change the average set distance of the dataset to the candidate invariant so significantly that their transition probability becomes almost zero. Consequently, these transitions do not occur, leading to constant coefficients for all predicates during the SA search. We demonstrate this issue in Figure \ref{fig:costFunctionMotivate2} in Appendix \ref{App: SuppFigures}. To address this, we use a normalization function \(F: \mathbb{R}_{\ge 0} \rightarrow \mathbb{R}_{\ge 0}\) and set \(\delta\) to be the average normalized set distance of the candidate invariant from the dataset. The normalization function \(F\) is defined as:
\[
F_{\alpha,\beta}(x) = \begin{cases} 
\frac{x}{\beta} & , x \le \alpha \\
\frac{\alpha}{\beta} - 1 + \frac{2}{1 + e^{-\frac{2(x-\alpha)}{\beta}}} & , x > \alpha 
\end{cases}
\]
for some hyperparameters \(\alpha > 1\) and \(\beta \ge 1\). A plot of this function is given in Figure \ref{fig:normalizer} in Appendix \ref{App: SuppFigures} and some properties of $F_{\alpha, \beta}$ are given in Lemma \ref{Lemma: normProp} in Appendix \ref{App: Proofs for Lemmas in Methodology}. Finally, we set \(\delta \leftarrow F_{\alpha, \beta} \circ \delta_{approx}\) for some hyperparameters \(\alpha\) and \(\beta\).

\subsubsection*{Initial Invariant. }
We initialize the candidate invariant for the \texttt{PARALLEL\_SIMULATED\_ANNEALING} procedure call in the first iteration of the DS Algorithm. Subsequent iterations use the approximate invariant from the previous iteration for initialization. 

For the first DS Loop iteration, we initialize the invariant using the \texttt{INITIAL\_INVARIANT} procedure, which employs the small-constant heuristic \cite{pldi/2016/PadhiSM}. This heuristic prioritizes candidate invariants with small absolute constant values (see Appendix~\ref{App: suppAlg} for more details). We sample many $d$-$c$ LIA DNFs from which  the least cost is then chosen as the initial invariant.

%

\subsubsection*{Parallel Simulated Annealing Search.}
The advantages of using parallel SA search are multifold: (1) Since our invariant space $\mathcal{X}$ is parameterized by the maximum coefficient bound $k$, we run multiple threads on search spaces $\mathcal{X}(k)$ for different values of $k$; (2) Running multiple threads on $\mathcal{X}(k)$ for the same value of $k$ also helps, as it increases the probability of finding the approximate invariant. We give the algorithm for \texttt{PARALLEL\_SIMULATED\_ANNEALING} in Algorithm \ref{alg:Parallel SA} in Appendix \ref{App: suppAlg}. The algorithm follows the basic simulated annealing procedure but includes additional logic to periodically verify global progress, ensuring efficient parallel exploration of the search space within a maximum iteration limit. We now present the probabilistic guarantee of success for our parallel SA search in Theorem \ref{thm: OurSAguarantee}.

\begin{theorem}  \label{thm: OurSAguarantee}
Let \(\mathcal{S} = (+, \rightarrow, -)\) be the current dataset. Define:
\[
\kappa_\infty(\mathcal{S}) := \frac{1}{3} \left( \frac{\sum_{\vec{p} \in +} ||\vec{p}||_\infty}{|+|}  + \frac{\sum_{(\vec{h}, \vec{t}) \in \rightarrow} \max( ||\vec{h}||_\infty , ||\vec{t}||_\infty) }{|\rightarrow|} + \frac{\sum_{\vec{n} \in -} ||\vec{n}||_\infty}{|-|} \right)
\]
as the averaged \(L_\infty\) measure of the dataset \(\mathcal{S}\), and
\[
\lambda(\rightarrow) := \frac{\sum_{(\vec{h}, \vec{t}) \in \rightarrow} ||\vec{h} - \vec{t}|| }{|\rightarrow|}
\]
as the average pairwise distance of \(\rightarrow\).
 Using the terminology from Theorem \ref{thm: SAguarantee}, we have the following bounds on the parameters of our invariant space \(\mathcal{X}(k)\), for any value of $k \in kList$:
\begin{align*}
    cdk&\sqrt{n}\rho(\mathcal{D}_{state}) + cdkn \le r \le cdk\sqrt{n}\rho(\mathcal{D}_{state}) + cdkn + cd, \\
    w &= \frac{1}{cd(2n+2)},~~~~~
    L \le \frac{1}{\beta ck\sqrt{n}} \left(1 + \kappa_\infty(\mathcal{S})\right) + \frac{1}{3\beta} \lambda(\rightarrow),~~~~~
    \delta \ge \frac{1}{3k\beta\sqrt{n}|\mathcal{S}|}
\end{align*}
For every $k \in kList$, if $T_0 \ge \left( \frac{d\rho(\mathcal{D}_{state}) + d\sqrt{n}}{\beta} + \frac{d}{\beta k \sqrt{n}} \right) \left(1 + \kappa_\infty(\mathcal{S})\right) + \frac{cdk\sqrt{n}\rho (\mathcal{D}_{state}) + cdkn + cd}{3\beta} \lambda(\rightarrow)$ and $t_{max} \ge k\sqrt{n}cd\rho(\mathcal{D}_{state}) + kncd + cd$, we find a $\mathcal{S}$-approximate invariant in $\mathcal{X}(k)$, if one exists, with probability at least $1 - \frac{A}{\left\lfloor \frac{t_{max}}{k\sqrt{n}cd\rho(\mathcal{D}_{state}) + kncd + cd} \right\rfloor^{exp}}$
where $exp = \min \left( \frac{1}{(cd(2n+2))^{kcd\sqrt{n}(\rho(\mathcal{D}_{state}) + \sqrt{n})}}, \frac{1}{|\mathcal{S}| \left(\frac{3}{c} \kappa_\infty(\mathcal{S}) + k\sqrt{n} \lambda(\rightarrow) + \frac{3}{c}\right)} \right). $    
\end{theorem}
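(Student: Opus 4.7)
The plan is to reduce Theorem \ref{thm: OurSAguarantee} to the standard simulated-annealing convergence guarantee of Theorem \ref{thm: SAguarantee} by instantiating its abstract parameters --- the graph radius $r$, the uniform neighbor weight $w$, the maximum per-step cost change $L$, and the minimum nonzero cost gap $\delta$ --- to the concrete invariant search space $\mathcal{X}(k)$ with cost function $c = F_{\alpha,\beta} \circ \delta_{\textit{approx}}$ averaged over the dataset $\mathcal{S}$.

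For $w$, a direct count of neighbors yields $cd(2n+2)$ (for each of the $cd$ predicates, $2n$ single-component coefficient shifts by $\pm 1$ plus $2$ constant shifts), so \texttt{UNIFORM\_SAMPLE\_LIST} picks each neighbor with probability $1/(cd(2n+2))$. For $r$, the graph distance between any two candidates decomposes as a sum over predicates of $L_1$-distances between coefficient tuples in $[-k,k]^n$ (at most $2kn$ per predicate) and between constants in $[-k', k']$ with $k' = k\sqrt{n}\rho(\mathcal{D}_{\textit{state}})$ (at most $2k'$ per predicate); halving the diameter gives the stated radius, with the additive $cd$ in the upper bound absorbing path-construction overhead.

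The bounds on $L$ and $\delta$ are more delicate. I would decompose the averaged cost into its plus-point, implication-pair, and minus-point summands; a single transition perturbs exactly one $(\vec{w}_{ij}, b_{ij})$ by a unit, and by the explicit formula for $\delta_{\textit{approx}}$ (Lemma \ref{Lemma: Properties of delta}), each summand on a fixed datapoint $\vec{s}$ changes by at most $O((1 + \|\vec{s}\|_\infty)/(ck\sqrt{n}))$, because only one $\frac{(\vec{w}_{ij}^T \vec{s} - b_{ij})^+}{\|\vec{w}_{ij}\|}$ term moves and is averaged by the factor $1/c$ inside a cube. Averaging over the dataset turns $\|\vec{s}\|_\infty$ into $\kappa_\infty(\mathcal{S})$. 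The implication-pair term is subtler: the inner $\min(\delta(\neg I, \vec{h}), \delta(I, \vec{t}))$ can swap which branch is active across a single transition, producing a jump bounded by $\|\vec{h} - \vec{t}\|$, which averages to $\lambda(\rightarrow)/3$. Composing with $F_{\alpha,\beta}$, which has Lipschitz constant $1/\beta$ (Lemma \ref{Lemma: normProp}), yields the stated $L \le \frac{1}{\beta c k\sqrt{n}}(1 + \kappa_\infty(\mathcal{S})) + \frac{1}{3\beta}\lambda(\rightarrow)$. For $\delta$, integrality of coefficients and constants forces $\delta_{\textit{approx}}$ values on a fixed dataset to lie on a grid of spacing at least $\Omega(1/(k\sqrt{n}|\mathcal{S}|))$ after averaging; composing with $F_{\alpha,\beta}$'s linear regime recovers $\delta \ge \frac{1}{3k\beta\sqrt{n}|\mathcal{S}|}$.

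Plugging $r$, $w$, $L$, $\delta$ into Theorem \ref{thm: SAguarantee} with $T_0 \ge rL$ (which matches the stated numeric value after substitution) and $t_{max}$ at least the cooling-schedule multiple of $r$ returns success probability of the form $1 - A/\lfloor t_{max}/r \rfloor^{exp}$, whose exponent decomposes as the $\min$ of two competing lower bounds --- one driven by $w^r$ (self-loop avoidance) and one driven by $\delta/L$ (cost landscape flatness) --- which is exactly the $\min$ in the stated $exp$. The main obstacle is the bound on $L$: three entangled subtleties must be controlled simultaneously --- the implication-pair $\min$ can switch active branches across a transition (forcing the $\lambda(\rightarrow)$ term), the normalizer $F_{\alpha,\beta}$ has both a linear and a sigmoidal regime so the Lipschitz bound must hold uniformly, and the averaged cost must cleanly yield $\kappa_\infty(\mathcal{S})$ rather than a coarser dataset norm. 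Once $L$ is pinned down, the remaining substitution is routine.
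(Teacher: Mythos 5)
Your proposal is correct and follows essentially the same route as the paper's proof: counting neighbors for $w$, computing the radius of the product space (with the $+cd$ slack coming from excluding $\vec{w}=\vec{0}$), a per-datapoint case analysis for $L$ that isolates the single perturbed predicate, handles the branch-swap in the implication-pair $\min$ via the under-approximation property of $\delta_{approx}$ to get the $\|\vec{h}-\vec{t}\|$ (hence $\lambda(\rightarrow)$) term, and composes with the $1/\beta$-Lipschitz normalizer, plus the integrality argument for $\delta$ before substituting into Theorem \ref{thm: SAguarantee}. No substantive divergence from the paper's argument.
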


We defer the proof of Theorem \ref{thm: OurSAguarantee} to Appendix \ref{App: Proofs for Lemmas in Methodology}. Theorem \ref{thm: OurSAguarantee} is significant because it links the success of the SA search to the parameters provided by the DS Algorithm. Specifically, it identifies the average \(L_\infty\) measure of the dataset \(\mathcal{S}\), denoted as \(\kappa_\infty(\mathcal{S})\), and the average pairwise distance of \(\rightarrow\), denoted as \(\lambda(\rightarrow)\), as key parameters influencing the success of the SA search for \(\mathcal{S}\)-approximate invariants. While Theorem \ref{thm: OurSAguarantee} provides a probabilistic guarantee of success for our simulated annealing search, the computed probability of success is infinitesimally small, and the required initial temperature and maximum iterations are excessively large. This is because the analysis of success for a simulated annealing search in \cite{mitra1986convergence} was conducted for general search spaces and cost functions, and does not consider the specific details of our search. In practice, we achieve higher success probabilities in our experiments and determine better \(T_0\) bounds using the research from \cite{ben2004computing} and \cite{potter20150} (refer Algorithm \ref{alg: InitTemp}). 

\subsection{Putting it all together.} \label{Subsec: Putting it all together}
Finally, we summarize our algorithm and give the overall success guarantee.
Our algorithm employs a CEGIS-based method to find loop invariants.
For the invariant search, we use parallel simulated annealing on a parameterized search space, leveraging parallelism to explore different instantiations of the search space with varying parameter values. We introduce a novel, efficiently computable approximation to the polytope-set distance metric, which serves as the cost function for our simulated annealing search. We provide probabilistic guarantees for the convergence of simulated annealing search. We use \(\epsilon\)-nets w.r.t ellipsoids to guarantee the convergence of verifying the candidate invariants. Besides, we utilize clever SMT counterexample sampling and iterated implication pairs to make incremental local progress in our verification. Finally, the overall guarantees of our algorithm are presented below (proof in Appendix \ref{App: Proofs for Lemmas in Methodology}).

\begin{theorem} \label{Thm:FullSuccessGuarantees}
Let \( P \) be a \( d_P \)-\( c_P \) LIA DNF, \( B \) be a \( d_B \)-\( c_B \) LIA DNF, \( Q \) be a \( d_Q \)-\( c_Q \) LIA DNF, and \( T \) be a \( d_T \)-piecewise linear integer \( r_T \)-relation over \( B \) where each partition block is a \( d_{B_T} \)-\( c_{B_T} \) LIA DNF. If there exists an invariant \(I\) for the standard LIA Single Loop CHC system such that \(I \in \mathcal{X}(k)\) for any $k \in kList$, then with probability at least \( \delta_0^3 \), \toolname{} necessarily terminates within
\[
T:=  t_{\text{refine}} \log_2 \left( \epsilon_0 n^n \left( \frac{1 + \frac{C}{n^{2 - \frac{2}{n+1}}}}{1 - C} \right) \max (\Lambda(P), \Lambda(B), \Lambda(\neg Q)) \right)
\]
iterations \textbf{and}, \\ \toolname{} finds a loop invariant with a probability of at least \(\delta_0^3 p_0^T \), where 
\( 
p_0 := 1 - \frac{A}{\left\lfloor \frac{t_{max}}{k\sqrt{n}cd\rho(\mathcal{D}_{state}) + kncd + cd} \right\rfloor^{\exp}} 
\)
and 
\(
\exp = \min \left( 
\frac{1}{(cd(2n+2))^{kcd\sqrt{n}(\rho(\mathcal{D}_{state}) + \sqrt{n})}}, 
\frac{1}{\left( \tau\left( \frac{\epsilon_0}{\left\lfloor \frac{T}{t_{\text{refine}}} \right\rfloor}, \delta \right) + 3T \cdot \text{cex}_{\text{max}} \right) 
\left(\frac{3}{c} \rho(\mathcal{D}_{\text{state}}) + k\sqrt{n} \rho(\mathcal{D}_{\text{state}}) + \frac{3}{c}\right)} 
\right)
\)
where \(\tau(\epsilon, \delta)\) represents the minimum number of points to sample to obtain randomized \(\epsilon\)-nets of polytopes w.r.t ellipsoids with a probability of at least \(\delta_0\) using Theorem \ref{Thm: HAW} and $\delta_0$ is the probability to sample an $\epsilon$-net.
\end{theorem}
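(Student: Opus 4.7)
The plan is to combine the two main convergence results of the paper: Theorem \ref{Thm: DS Loop Conv Guar} which bounds the number of outer (DS Loop) iterations, and Theorem \ref{thm: OurSAguarantee} which gives a per-iteration success probability for the inner parallel simulated annealing search. The overall success event decomposes as (a) ``the DS Loop terminates within $T$ iterations via $\epsilon$-net refinement'' and (b) ``every SA call inside those $T$ iterations returns a valid approximate invariant.'' I will bound (a) directly by Theorem \ref{Thm: DS Loop Conv Guar} and bound (b) by multiplying per-iteration probabilities obtained from Theorem \ref{thm: OurSAguarantee}, with care taken to choose uniform-over-iterations bounds on the relevant data-dependent quantities.

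First I would instantiate Theorem \ref{Thm: DS Loop Conv Guar} with the given LIA DNF parameters; this yields the value $T$ stated in the theorem and tells us that with probability at least $\delta_0^3$ (over the randomness of sampling the $\epsilon$-nets used to build the datasets $\mathcal{S}$ and their refinements), the DS Loop necessarily terminates within $T$ iterations, and moreover any approximate invariant found at the last iteration is a genuine loop invariant. Condition on this $\delta_0^3$ event from here on. The next step is to argue that at each of the at most $T$ invocations of \texttt{PARALLEL\_SIMULATED\_ANNEALING}, a $\mathcal{S}$-approximate invariant in $\mathcal{X}(k)$ exists: since the hypothesis gives a true loop invariant $I\in\mathcal{X}(k)$ for some $k\in kList$, and since $\mathcal{S}$ is a subset of the ground states, $I$ is necessarily a $\mathcal{S}$-approximate invariant. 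Thus Theorem \ref{thm: OurSAguarantee} applies at every call and produces the per-call success probability $p_0$.

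The next step is to upper bound the parameters inside $p_0$ \emph{uniformly over all iterations}, since $\kappa_\infty(\mathcal{S})$, $\lambda(\rightarrow)$, and $|\mathcal{S}|$ all change as the DS Loop progresses. Because every sampled point lies in $\mathcal{D}_{state}$, we have $\kappa_\infty(\mathcal{S})\le\rho(\mathcal{D}_{state})$ and $\lambda(\rightarrow)\le\rho(\mathcal{D}_{state})$ at every iteration, giving the $\frac{3}{c}\rho(\mathcal{D}_{state}) + k\sqrt n\,\rho(\mathcal{D}_{state}) + \frac{3}{c}$ factor in the exponent. For $|\mathcal{S}|$, the dataset is the union of (i) the most recent refined $\epsilon$-net (whose $\epsilon$ has been halved at most $\lfloor T/t_{\text{refine}}\rfloor$ times, giving size at most $\tau(\epsilon_0/\lfloor T/t_{\text{refine}}\rfloor,\delta)$ by Theorem \ref{Thm: HAW}) and (ii) the counterexamples accumulated so far, which in the worst case add $\mathrm{cex}_{\max}$ points per clause per iteration across the three CHC clauses, giving the $3T\cdot\mathrm{cex}_{\max}$ additive term. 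Plugging these uniform bounds into the expression for $\exp$ in Theorem \ref{thm: OurSAguarantee} yields the same per-call lower bound $p_0$ claimed in the statement. Since the SA calls use independent random coins, the probability that all $T$ calls succeed is at least $p_0^T$, and combining with the $\delta_0^3$ event for the DS Loop gives the claimed overall bound $\delta_0^3 p_0^T$.

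The main obstacle I anticipate is tracking $|\mathcal{S}|$ precisely across iterations: the refinement at every $t_{\text{refine}}$ steps \emph{replaces} the previous $\epsilon$-net with a finer one rather than growing it, but accumulated counterexamples are never discarded, so the clean multiplicative bound $p_0^T$ requires showing that the worst-case dataset size at \emph{every} iteration is dominated by the quantity appearing in $\exp$. A secondary subtlety is that the $\delta_0^3$ event of Theorem \ref{Thm: DS Loop Conv Guar} and the $p_0^T$ event over SA randomness are driven by disjoint sources of randomness ($\epsilon$-net sampling vs.\ SA transitions), so independence is legitimate and the product bound is valid; this must be stated explicitly to avoid a spurious union-bound loss.
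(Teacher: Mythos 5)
Your proposal follows essentially the same route as the paper's proof: invoke Theorem \ref{Thm: DS Loop Conv Guar} for the $\delta_0^3$ termination bound within $T$ iterations, invoke Theorem \ref{thm: OurSAguarantee} for each of the $T$ simulated-annealing calls, take the product $\prod_{i=1}^T p_0(\mathcal{S}_i)$, and then uniformize via the same three bounds $\kappa_\infty(\mathcal{S}_i)\le\rho(\mathcal{D}_{state})$, $\lambda(\rightarrow_i)\le\rho(\mathcal{D}_{state})$, and $|\mathcal{S}_i|\le\tau(\epsilon_0/\lfloor T/t_{\text{refine}}\rfloor,\delta)+3T\cdot\text{cex}_{\max}$. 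Your added remarks on the existence of an $\mathcal{S}$-approximate invariant at every call and on the independence of the $\epsilon$-net and SA randomness are correct refinements of details the paper leaves implicit.
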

Although the theoretical probabilistic guarantee of success for \toolname{} is infinitesimally small, this algorithm represents a significant advance for several reasons: (i) it introduces the first and only data-driven invariant automation algorithm with theoretical guarantees on algorithm termination and success of finding an invariant for arbitrarily large invariant spaces; (ii) while these theoretical guarantees, based on general state spaces, are conservative, \toolname{} consistently performs significantly better in practice across various benchmarks, demonstrating robustness beyond its theoretical limits; and (iii) the CEGIS termination guarantee provided by \toolname{} can be generalized to all CEGIS  algorithms.

\section{Evaluation and Comparative Analysis.}
\label{Sec:Tool and Experiments}
We have implemented \toolname{} in Python.
Our front end converts an input C program to an intermediate representation (IR) defined as a special Python class, specifically for standard LIA single loop Constrained Horn Clause (CHC) systems. The front end is semi-automated, requiring user input to identify the formulae \( P \), \( B \), \( T \), and \( Q \) in the original C code for conversion to our IR.
Finally, \toolname{} identifies an invariant by using the IR representation of the standard LIA single loop CHC system as input.

We evaluate \toolname{} using SV-COMP 2023 benchmarks\footnote{We note that the benchmarks in SV-COMP 2023 are curated to favor algorithms that employ static techniques, such as interpolation and bounded model checking, to prove safety. As a result, the state-of-the-art tool \gspacer{} outperforms our approach on these benchmarks. However, in Section \ref{Sec: ExpComp}, we identify a subclass of standard LIA single-loop CHC systems where our approach significantly outperforms \gspacer{} across 11 constructed benchmarks within this subclass. To further demonstrate efficiency, we also evaluate all benchmarks from the four folders in SV-COMP that fit our format (i.e., 67 benchmarks), ceasing once a sufficiently diverse set of instances has been considered.} ~\cite{svbenchmark}. Table~\ref{tab:results} (left) summarizes the results of \toolname{} and the baseline tools on a subset of benchmarks from the \textit{loop\_lit}, \textit{loop\_new}, \textit{loop\_simple}, and \textit{loop\_zilu} that satisfy our problem constraints (i.e., we discard benchmarks containing multiple loops, arrays, or floating point variables). We run the benchmarks with a timeout of $4$ hours. 



\begin{figure}[!tb]
    \begin{minipage}[ht]{.5\linewidth}
    \centering
        \resizebox{\textwidth}{!}{%
        \begin{tabular}{lccccc}
        \toprule
        \textbf{Folder Name} & \toolname{} & \gspacer{} & \la{} & \icedt{} & \textit{Total} \\
        \midrule
        \textit{loop\_lit}    & 6  & 9  & 9  & 4  & 11\\
        \textit{loop\_new}    & 2  & 0  & 0  & 0  & 2\\
        \textit{loop\_zilu}   & 48 & 50 & 48 & 31 & 53 \\
        \textit{loop\_simple} & 1  & 1  & 1  & 1  & 1 \\
        \textit{Total}        & 57 & 60 & 58 & 36 & 67 \\
        \bottomrule
        \end{tabular}
        }
    \end{minipage}
    \hspace{1.0 em}
    \begin{minipage}[ht]{.38\linewidth}
        \resizebox{\textwidth}{!}{%
        \begin{tabular}{lcc}
        \toprule
        \textbf{Folder Name} & \textbf{Mean Convergence Probability}\\
        \midrule
        \textit{loop\_lit} & 0.175\\
        \textit{loop\_new} & 1\\
        \textit{loop\_zilu} & 0.30 \\
        \textit{loop\_simple} & 1\\
        \textit{Total} &0.31\\
        \bottomrule
        \end{tabular}
        }
    \end{minipage}
    \caption{(left) Comparative Results for \toolname{} vs \gspacer{}, \la{} and \icedt{} on SV-COMP 2023. The numbers show the number of benchmarks converged for each tool in each folder. (Right) Mean convergence probability of \toolname{} on SV-COMP 2023 Benchmarks.}
    \label{tab:results}
    \label{tab:results2}
    \vspace{-1.0 em}
\end{figure}

Out of 67 benchmarks, \toolname{} successfully finds the ground truth invariant in 57 cases, yielding a success rate of 85.0\%. We note that as \toolname{} is a probabilistic tool, it can have different outcomes on the same benchmark for different runs. The results for \toolname{} present in Table \ref{tab:results} (left) show the best outcome for each benchmark on multiple runs. Table~\ref{tab:results2} (right) shows the mean convergence probability of \toolname{} on the benchmarks.

\subsection{Comparison with Other Tools.} \label{Sec: ExpComp}
We compare \toolname{} with \gspacer{}~\cite{vediramana2023global}, \la{}~\cite{zhu2018data}  and \icedt{} \cite{garg2016learning} on the benchmarks and discuss the results\footnote{We note that there are several invariant approaches now such as \cite{si2020code2inv}, \cite{sharma2016invariant} etc. however they are all inferior to \gspacer{} and \la{} in terms of evaluation.}.

\noindent \textbf{Comparision with \gspacer{}.}
\gspacer{} is a bounded model checking-based algorithm designed to find loop invariants - it maintains over and under-approximations of procedure summaries which it uses to prove safety of the program or construct a counterexample. \gspacer{} achieved a 89.6\% accuracy on the benchmarks as reported in Table \ref{tab:results} (left), and is the current state-of-the-art tool in proving safety. \\
During our investigation we found a set of standard LIA single-loop CHC systems with which \gspacer{} performed poorly. Specifically, for a benchmark with \( n \) variables and a transition relation \( T(\vec{x}, \vec{x}')\) defined by \(\vec{x}' = A \cdot \vec{x} + \vec{b} \), where \( A \) is an \( n \times n \) matrix and \( \vec{b} \) is an \( n \times 1 \) vector, \gspacer{} struggles when all entries of \( A \) are non-zero. Figure~\ref{fig:flawGspacer} (left) provides an example of a program where \gspacer{} encounters this issue. Table \ref{tab:Compresults} (right) shows experimental results comparing \gspacer{} and \toolname{} on our 11 constructed benchmarks. We conjecture that since \gspacer{} establishes program safety by iteratively refining reachability and summary maps with increasing bounds on the call stack, finding invariants for complex transitions is challenging as it is harder to use summary facts from bounded call-stack code executions to prove safety when the transitions within the loop are complex. 

\begin{figure}
    \begin{minipage}[H]{.4\linewidth}
    \centering
    \begin{verbatim}
    int main(void) {
        int x = 1; int y = 0;
        while ( rand_bool() ) {
            x = a*x + b*y; y = y + 1;
        }
        assert(x >= y); return 0;
    }
    \end{verbatim}
    \end{minipage}
    \hspace{1.0 em}
    \begin{minipage}[ht]{.4\linewidth}
        \centering
        \resizebox{\textwidth}{!}{%
            \begin{tabular}{lccccc}
            \toprule
            \textbf{No. of variables} & \gspacer{} & \toolname{} & \textit{Total} \\
            \midrule
            \textit{$n=2$ Benchmarks}    & 0  & 8  & 8\\
            \textit{$n=3$ Benchmarks}    & 0  & 2  & 2\\
            \textit{$n=4$ Benchmarks}    & 0  & 0 & 1\\
            \bottomrule
            \end{tabular}
        }
    \end{minipage}
    \caption{(\textbf{Left}) The C code above is parameterized by non-negative constants \( a  \) and \( b \). When \( a \neq 0 \land b \neq 0 \), the code demonstrates a limitation of \gspacer{}. \gspacer{} struggles with such transition relations involving non-zero values for both \( a \) and \( b \), timing out on all benchmarks with a 600-second limit, whereas \toolname{} successfully finds the invariants across all 8 constructed benchmarks, with an average runtime of 21.4 seconds.
    (\textbf{Right}) Comparative Results for \toolname{} vs \gspacer{} on Benchmarks with Complex Transition Relations. The numbers show the number of benchmarks converged for each tool in each category.}
    \label{fig:flawGspacer}
    \label{tab:Compresults}
    \vspace{-1.0 em}
\end{figure}
\noindent \textbf{Comparision with \la{}.}
\la{} uses a CEGIS-based algorithm to discover invariants, integrating SVMs and Decision Trees within their invariant search.  \la{} achieves an 86.5\% accuracy on the benchmarks (see Table \ref{tab:results}). It is important to mention that \la{} is built as an LLVM pass in the \texttt{SEAHORN} framework \cite{gurfinkel2015seahorn}, and many of the invariant benchmarks input to \la{} are solved by \textit{Seahorn} instead of \la{} alone -\textit{Seahorn} utilizes techniques such as abstract interpretation and property-directed reachability to prove safety. 
A significant limitation of \la{} is its high sensitivity to counterexamples generated by the SMT solver: \la{} relies on SVM to derive coefficients from the current dataset; changes in the position of counterexamples can drastically alter these learned coefficients. Viewed in this light, if the SMT solver is considered adversarial in generating counterexamples, it is trivially seen that \la{} would never converge to an invariant. Additionally, \la{} produces its final approximate invariants using Decision Trees, which often result in larger (by size) invariants compared to \toolname{}, making it less effective at learning concise invariants.

\noindent \textbf{Comparison with \icedt{}.}
\icedt{}, the seminal CEGIS-based ICE algorithm for invariant automation, uses Decision Trees to find approximate invariants.  \icedt{} obtains a 53.7\% accuracy on the benchmarks  (refer Table \ref{tab:results}). 
A limitation of \icedt{} is its inability to autonomously learn non-rectangular features, meaning coefficient vectors that do not align with standard basis vectors are not learned unless explicitly provided by the user. Therefore, this algorithm necessitates certain coefficient vectors to be either user input or hardcoded. Furthermore, \icedt{} guarantees termination through CEGIS by incrementally increasing the maximum bound on the constants of the invariant. Consequently, learning an invariant with constants $\ge M$ requires $\Omega(M)$ time, meaning its convergence time is proportional to $O(|\mathcal{D}_{state}|^{\frac{1}{n}})$, assuming sufficient features for each maximum modulus constant search. In contrast, \toolname{}'s CEGIS termination bound is proportional to $O(\log_2 |B|)$, where $B \subseteq \mathcal{D}_{state}$ represents the loop guard.


\noindent \textbf{Comparison Conclusions.}
Overall the comparison shows that, as an alternative approach, \toolname{} achieves competitive performance, in addition to providing probabilistic guarantees. Programs such as those in Figure~\ref{fig:flawGspacer} show that \toolname{} has complementary strengths w.r.t the current state-of-the-art tool \gspacer{}; it would be interesting future work to combine different approaches.

\subsection{Limitations and Discussions.}
We identify the primary limitation of \toolname{} as the non-convergence of SA on certain benchmarks. Additionally for converged benchmarks, SA runs dominate \toolname{}'s computational workload (see Figure \ref{fig:SAvsDS} in Appendix \ref{App: timeDSSA}). This is a trade-off we accept for using probabilistic algorithms in the guess phase—while these algorithms allow us to explore large search spaces without getting trapped in local minima, their probabilistic nature also slows the global search, sometimes to the point of non-convergence. In this section, we examine this phenomenon in detail, specifically analyzing the factors causing SA's non-convergence on some benchmarks and exploring whether adjusting the search space strategy could improve performance.

\noindent \textbf{Selection Bias.}
The biggest reason for the failure of SA is selection bias; i.e. we sample a neighbor which may lead us to an approximate invariant with a very low probability. In terms of Theorem \ref{thm: OurSAguarantee}, this quantity is represented as $w := \frac{1}{cd(2n+2)}$. Consequently, we obtain poorer performance results for benchmarks with more variables or larger invariant sizes.

\noindent \textbf{Affine Spaces.}
Because of our invariant template, we can only represent affine spaces as the conjunction of two LIA predicates (as we allow only inequality predicates) instead of a single equality predicate. This implies that benchmarks that have affine spaces for invariants require larger invariant spaces (represented by variable $r$ in Theorem \ref{thm: OurSAguarantee}), and hence the probability of success 
is smaller.


\begin{minipage}{0.52\textwidth}
\begin{table}[H]
    \centering
    \resizebox{\textwidth}{!}{%
        \begin{tabular}{lcccc}
            \toprule
            \textbf{Folder Name} & \begin{tabular}{@{}c@{}}\textbf{Simulated} \\ \textbf{Annealing}\end{tabular} & \begin{tabular}{@{}c@{}}\textbf{Gradient} \\ \textbf{Descent}\end{tabular} & \begin{tabular}{@{}c@{}}\textbf{Genetic} \\ \textbf{Programming}\end{tabular} & \textbf{Total} \\
            \midrule
            \textit{loop\_lit}    & 6  & 1  & 3 & 11\\
            \textit{loop\_new}    & 2  & 1  & 2 & 2\\
            \textit{loop\_zilu}   & 48 & 25 & 33 & 53\\
            \textit{loop\_simple} & 1  & 0  & 1  & 1\\
            \textit{Total}        & 57 & 27 & 39 & 67\\
            \bottomrule
        \end{tabular}
    }
\caption{Comparison of SA with other search space algorithms.}
\label{tab:sagdea}
\end{table}
\end{minipage}%
\hfill
\begin{minipage}{0.46\textwidth}
\noindent \textbf{Design Choice of Search Space Algorithm.}
Given the limitations of SA highlighted above, we ask the natural question: Will substituting it with alternatives like gradient descent or genetic programming~\cite{yu2010introduction} improve evaluation? Table \ref{tab:sagdea} answers this question in the negative - and confirms that SA is indeed the optimal search space algorithm for our problem.
\end{minipage}

\section{Conclusion and Future Works.}
\label{Sec:Conclusions and Future Works.}
Our paper presents a data-driven algorithm to find invariants for single-loop CHC systems with background theory LIA. Our algorithm leveraged the simulated annealing with SMT solvers as well as computational geometry to provide probabilistic guarantees for inferring loop invariants. 
One exciting future work is to find invariants for single-loop CHC systems with Linear Real Arithmetic (LRA) as the background theory, allowing floating point program variables.
The major challenge of this extension is that even a bounded state space would be infinite because there are infinitely many real numbers in any bounded region, unlike integers. A possible way to tackle this issue is to use a precision cutoff for both the state space and invariant space. 
Additionally, syntactic-based searching might not be the best approach for such a space. 
An extension of this work would be to allow the program to have both integer and real variables. The major challenge in integrating work on finding LIA invariants and LRA invariants would be to decide the template of the new general invariant.

\bibliographystyle{plainnat}
\bibliography{sample-base}

\begin{thebibliography}{56}
\providecommand{\natexlab}[1]{#1}
\providecommand{\url}[1]{\texttt{#1}}
\expandafter\ifx\csname urlstyle\endcsname\relax
  \providecommand{\doi}[1]{doi: #1}\else
  \providecommand{\doi}{doi: \begingroup \urlstyle{rm}\Url}\fi

\bibitem[Akama and Irie(2011)]{akama2011vc}
Yohji Akama and Kei Irie.
\newblock Vc dimension of ellipsoids.
\newblock \emph{arXiv preprint arXiv:1109.4347}, 2011.

\bibitem[ApS()]{mosek}
Mosek ApS.
\newblock Case studies: Ellipsoids.
\newblock \url{https://docs.mosek.com/latest/pythonfusion/case-studies-ellipsoids.html}.
\newblock Accessed: 2024-06-13.

\bibitem[Barrett(2013)]{barrett2013decision}
Clark~W Barrett.
\newblock " decision procedures: An algorithmic point of view," by daniel kroening and ofer strichman, springer-verlag, 2008.
\newblock \emph{J. Autom. Reason.}, 51\penalty0 (4):\penalty0 453--456, 2013.

\bibitem[Ben-Ameur(2004)]{ben2004computing}
Walid Ben-Ameur.
\newblock Computing the initial temperature of simulated annealing.
\newblock \emph{Computational optimization and applications}, 29:\penalty0 369--385, 2004.

\bibitem[Bertsimas and Tsitsiklis(1993)]{bertsimas1993simulated}
Dimitris Bertsimas and John Tsitsiklis.
\newblock Simulated annealing.
\newblock \emph{Statistical science}, 8\penalty0 (1):\penalty0 10--15, 1993.

\bibitem[Beyer(2023)]{svbenchmark}
Dirk Beyer.
\newblock Competition on software verification and witness validation: Sv-comp 2023.
\newblock In Sriram Sankaranarayanan and Natasha Sharygina, editors, \emph{Tools and Algorithms for the Construction and Analysis of Systems}, pages 495--522, Cham, 2023. Springer Nature Switzerland.
\newblock ISBN 978-3-031-30820-8.

\bibitem[Bhatia et~al.(2020)Bhatia, Padhi, Natarajan, Sharma, and Jain]{bhatia2020oasis}
Sahil Bhatia, Saswat Padhi, Nagarajan Natarajan, Rahul Sharma, and Prateek Jain.
\newblock Oasis: Ilp-guided synthesis of loop invariants.
\newblock In \emph{NeurIPS 2020 Workshop on Computer-Assisted Programming}, 2020.

\bibitem[Bj{\o}rner et~al.(2015)Bj{\o}rner, Gurfinkel, McMillan, and Rybalchenko]{bjorner2015horn}
Nikolaj Bj{\o}rner, Arie Gurfinkel, Ken McMillan, and Andrey Rybalchenko.
\newblock Horn clause solvers for program verification.
\newblock In \emph{Fields of Logic and Computation II: Essays Dedicated to Yuri Gurevich on the Occasion of His 75th Birthday}, pages 24--51. Springer, 2015.

\bibitem[Blass and Gurevich(2001)]{blass2001inadequacy}
Andreas Blass and Yuri Gurevich.
\newblock Inadequacy of computable loop invariants.
\newblock \emph{ACM Transactions on Computational Logic (TOCL)}, 2\penalty0 (1):\penalty0 1--11, 2001.

\bibitem[Blumer et~al.(1989)Blumer, Ehrenfeucht, Haussler, and Warmuth]{blumer1989learnability}
Anselm Blumer, Andrzej Ehrenfeucht, David Haussler, and Manfred~K Warmuth.
\newblock Learnability and the vapnik-chervonenkis dimension.
\newblock \emph{Journal of the ACM (JACM)}, 36\penalty0 (4):\penalty0 929--965, 1989.

\bibitem[Bradley and Manna(2007)]{bradley2007calculus}
Aaron~R Bradley and Zohar Manna.
\newblock \emph{The calculus of computation: decision procedures with applications to verification}.
\newblock Springer Science \& Business Media, 2007.

\bibitem[Cousot and Cousot(1976)]{cousot1976static}
Patrick Cousot and Radhia Cousot.
\newblock Static determination of dynamic properties of programs.
\newblock In \emph{Proceedings of the 2nd International Symposium on Programming, Paris, France}, pages 106--130. Dunod, 1976.

\bibitem[De~Moura and Bj{\o}rner(2008)]{de2008z3}
Leonardo De~Moura and Nikolaj Bj{\o}rner.
\newblock Z3: An efficient smt solver.
\newblock In \emph{International conference on Tools and Algorithms for the Construction and Analysis of Systems}, pages 337--340. Springer, 2008.

\bibitem[Dillig et~al.(2013)Dillig, Dillig, Li, and McMillan]{dillig2013inductive}
Isil Dillig, Thomas Dillig, Boyang Li, and Ken McMillan.
\newblock Inductive invariant generation via abductive inference.
\newblock \emph{Acm Sigplan Notices}, 48\penalty0 (10):\penalty0 443--456, 2013.

\bibitem[Dutertre and De~Moura(2006)]{dutertre2006fast}
Bruno Dutertre and Leonardo De~Moura.
\newblock A fast linear-arithmetic solver for dpll (t).
\newblock In \emph{International Conference on Computer Aided Verification}, pages 81--94. Springer, 2006.

\bibitem[Fischetti and Jo(2018)]{fischetti2018deep}
Matteo Fischetti and Jason Jo.
\newblock Deep neural networks and mixed integer linear optimization.
\newblock \emph{Constraints}, 23\penalty0 (3):\penalty0 296--309, 2018.

\bibitem[Furia and Meyer(2010)]{furia2010inferring}
Carlo~Alberto Furia and Bertrand Meyer.
\newblock Inferring loop invariants using postconditions.
\newblock \emph{Fields of Logic and Computation: Essays Dedicated to Yuri Gurevich on the Occasion of His 70th Birthday}, pages 277--300, 2010.

\bibitem[Garg et~al.(2014)Garg, L{\"o}ding, Madhusudan, and Neider]{garg2014ice}
Pranav Garg, Christof L{\"o}ding, Parthasarathy Madhusudan, and Daniel Neider.
\newblock Ice: A robust framework for learning invariants.
\newblock In \emph{Computer Aided Verification: 26th International Conference, CAV 2014, Held as Part of the Vienna Summer of Logic, VSL 2014, Vienna, Austria, July 18-22, 2014. Proceedings 26}, pages 69--87. Springer, 2014.

\bibitem[Garg et~al.(2016)Garg, Neider, Madhusudan, and Roth]{garg2016learning}
Pranav Garg, Daniel Neider, Parthasarathy Madhusudan, and Dan Roth.
\newblock Learning invariants using decision trees and implication counterexamples.
\newblock \emph{ACM Sigplan Notices}, 51\penalty0 (1):\penalty0 499--512, 2016.

\bibitem[Gilbert and Pathria(1990)]{gilbert1990linear}
William~J Gilbert and Anu Pathria.
\newblock Linear diophantine equations.
\newblock \emph{preprint}, 1990.

\bibitem[G{\"o}tze(2004)]{gotze2004lattice}
Friedrich G{\"o}tze.
\newblock Lattice point problems and values of quadratic forms.
\newblock \emph{Inventiones mathematicae}, 157\penalty0 (1):\penalty0 195--226, 2004.

\bibitem[Gurfinkel et~al.(2015)Gurfinkel, Kahsai, and Navas]{gurfinkel2015seahorn}
Arie Gurfinkel, Temesghen Kahsai, and Jorge~A Navas.
\newblock Seahorn: A framework for verifying c programs (competition contribution).
\newblock In \emph{International Conference on Tools and Algorithms for the Construction and Analysis of Systems}, pages 447--450. Springer, 2015.

\bibitem[Haussler and Welzl(1986)]{haussler1986epsilon}
David Haussler and Emo Welzl.
\newblock Epsilon-nets and simplex range queries.
\newblock In \emph{Proceedings of the second annual symposium on Computational geometry}, pages 61--71, 1986.

\bibitem[Heizmann et~al.(2010)Heizmann, Hoenicke, and Podelski]{heizmann2010nested}
Matthias Heizmann, Jochen Hoenicke, and Andreas Podelski.
\newblock Nested interpolants.
\newblock \emph{ACM Sigplan Notices}, 45\penalty0 (1):\penalty0 471--482, 2010.

\bibitem[Henk(2012)]{henk2012lowner}
Martin Henk.
\newblock l{\"o}wner-john ellipsoids.
\newblock \emph{Documenta Math}, 95:\penalty0 106, 2012.

\bibitem[Hoare(1969)]{hoare1969axiomatic}
Charles Antony~Richard Hoare.
\newblock An axiomatic basis for computer programming.
\newblock \emph{Communications of the ACM}, 12\penalty0 (10):\penalty0 576--580, 1969.

\bibitem[Janota(2007)]{janota2007assertion}
Mikol{\'a}{\v{s}} Janota.
\newblock Assertion-based loop invariant generation.
\newblock \emph{RISC-Linz}, page~15, 2007.

\bibitem[Karimov et~al.(2022)Karimov, Lefaucheux, Ouaknine, Purser, Varonka, Whiteland, and Worrell]{karimov2022s}
Toghrul Karimov, Engel Lefaucheux, Jo{\"e}l Ouaknine, David Purser, Anton Varonka, Markus~A Whiteland, and James Worrell.
\newblock What’s decidable about linear loops?
\newblock \emph{Proceedings of the ACM on Programming Languages}, 6\penalty0 (POPL):\penalty0 1--25, 2022.

\bibitem[Komuravelli et~al.(2016)Komuravelli, Gurfinkel, and Chaki]{komuravelli2016smt}
Anvesh Komuravelli, Arie Gurfinkel, and Sagar Chaki.
\newblock Smt-based model checking for recursive programs.
\newblock \emph{Formal Methods in System Design}, 48:\penalty0 175--205, 2016.

\bibitem[Krishna et~al.(2015)Krishna, Puhrsch, and Wies]{krishna2015learning}
Siddharth Krishna, Christian Puhrsch, and Thomas Wies.
\newblock Learning invariants using decision trees.
\newblock \emph{arXiv preprint arXiv:1501.04725}, 2015.

\bibitem[Li et~al.(2017)Li, Sun, Li, Le, and Lin]{li2017automatic}
Jiaying Li, Jun Sun, Li~Li, Quang~Loc Le, and Shang-Wei Lin.
\newblock Automatic loop-invariant generation anc refinement through selective sampling.
\newblock In \emph{2017 32nd IEEE/ACM International Conference on Automated Software Engineering (ASE)}, pages 782--792. IEEE, 2017.

\bibitem[Matou{\v{s}}ek(1993)]{matouvsek1993epsilon}
Ji{\v{r}}{\'\i} Matou{\v{s}}ek.
\newblock Epsilon-nets and computational geometry.
\newblock In \emph{New Trends in Discrete and Computational Geometry}, pages 69--89. Springer, 1993.

\bibitem[McMillan and Rybalchenko(2013)]{mcmillan2013solving}
Kenneth~L McMillan and Andrey Rybalchenko.
\newblock Solving constrained horn clauses using interpolation.
\newblock \emph{Tech. Rep. MSR-TR-2013-6}, 2013.

\bibitem[Min{\'e}(2006)]{mine2006octagon}
Antoine Min{\'e}.
\newblock The octagon abstract domain.
\newblock \emph{Higher-order and symbolic computation}, 19:\penalty0 31--100, 2006.

\bibitem[Mitra et~al.(1986)Mitra, Romeo, and Sangiovanni-Vincentelli]{mitra1986convergence}
Debasis Mitra, Fabio Romeo, and Alberto Sangiovanni-Vincentelli.
\newblock Convergence and finite-time behavior of simulated annealing.
\newblock \emph{Advances in applied probability}, 18\penalty0 (3):\penalty0 747--771, 1986.

\bibitem[Neider et~al.(2020)Neider, Madhusudan, Saha, Garg, and Park]{neider2020learning}
Daniel Neider, Parthasarathy Madhusudan, Shambwaditya Saha, Pranav Garg, and Daejun Park.
\newblock A learning-based approach to synthesizing invariants for incomplete verification engines.
\newblock \emph{Journal of Automated Reasoning}, 64:\penalty0 1523--1552, 2020.

\bibitem[Nguyen et~al.(2017)Nguyen, Antonopoulos, Ruef, and Hicks]{nguyen2017counterexample}
ThanhVu Nguyen, Timos Antonopoulos, Andrew Ruef, and Michael Hicks.
\newblock Counterexample-guided approach to finding numerical invariants.
\newblock In \emph{Proceedings of the 2017 11th Joint Meeting on Foundations of Software Engineering}, pages 605--615, 2017.

\bibitem[Nguyen et~al.(2021)Nguyen, Nguyen, and Dwyer]{nguyen2021using}
Thanhvu Nguyen, KimHao Nguyen, and Matthew~B Dwyer.
\newblock Using symbolic states to infer numerical invariants.
\newblock \emph{IEEE Transactions on Software Engineering}, 48\penalty0 (10):\penalty0 3877--3899, 2021.

\bibitem[Padhi et~al.(2016{\natexlab{a}})Padhi, Sharma, and Millstein]{padhi2016data}
Saswat Padhi, Rahul Sharma, and Todd Millstein.
\newblock Data-driven precondition inference with learned features.
\newblock \emph{ACM SIGPLAN Notices}, 51\penalty0 (6):\penalty0 42--56, 2016{\natexlab{a}}.

\bibitem[Padhi et~al.(2016{\natexlab{b}})Padhi, Sharma, and Millstein]{pldi/2016/PadhiSM}
Saswat Padhi, Rahul Sharma, and Todd~D. Millstein.
\newblock Data-driven precondition inference with learned features.
\newblock In \emph{Proceedings of the 37th {ACM} {SIGPLAN} Conference on Programming Language Design and Implementation, {PLDI} 2016, Santa Barbara, CA, USA, June 13-17, 2016}, pages 42--56, 2016{\natexlab{b}}.
\newblock \doi{10.1145/2908080.2908099}.
\newblock URL \url{http://doi.acm.org/10.1145/2908080.2908099}.

\bibitem[Potter and Swendsen(2015)]{potter20150}
Christopher~CJ Potter and Robert~H Swendsen.
\newblock 0.234: The myth of a universal acceptance ratio for monte carlo simulations.
\newblock \emph{Physics Procedia}, 68:\penalty0 120--124, 2015.

\bibitem[Qin et~al.(2013)Qin, He, Luo, Chin, and Chen]{qin2013loop}
Shengchao Qin, Guanhua He, Chenguang Luo, Wei-Ngan Chin, and Xin Chen.
\newblock Loop invariant synthesis in a combined abstract domain.
\newblock \emph{Journal of Symbolic Computation}, 50:\penalty0 386--408, 2013.

\bibitem[Rodr{\'\i}guez-Carbonell and Kapur(2004)]{rodriguez2004abstract}
Enric Rodr{\'\i}guez-Carbonell and Deepak Kapur.
\newblock An abstract interpretation approach for automatic generation of polynomial invariants.
\newblock In \emph{International Static Analysis Symposium}, pages 280--295. Springer, 2004.

\bibitem[Romeo and Sangiovanni-Vincentelli(1991)]{romeo1991theoretical}
Fabio Romeo and Alberto Sangiovanni-Vincentelli.
\newblock A theoretical framework for simulated annealing.
\newblock \emph{Algorithmica}, 6:\penalty0 302--345, 1991.

\bibitem[R{\"u}mmer et~al.(2013)R{\"u}mmer, Hojjat, and Kuncak]{rummer2013disjunctive}
Philipp R{\"u}mmer, Hossein Hojjat, and Viktor Kuncak.
\newblock Disjunctive interpolants for horn-clause verification.
\newblock In \emph{Computer Aided Verification: 25th International Conference, CAV 2013, Saint Petersburg, Russia, July 13-19, 2013. Proceedings 25}, pages 347--363. Springer, 2013.

\bibitem[Sankaranarayanan et~al.(2004)Sankaranarayanan, Sipma, and Manna]{sankaranarayanan2004non}
Sriram Sankaranarayanan, Henny~B Sipma, and Zohar Manna.
\newblock Non-linear loop invariant generation using gr{\"o}bner bases.
\newblock In \emph{Proceedings of the 31st ACM SIGPLAN-SIGACT symposium on Principles of programming languages}, pages 318--329, 2004.

\bibitem[Sharma and Aiken(2016)]{sharma2016invariant}
Rahul Sharma and Alex Aiken.
\newblock From invariant checking to invariant inference using randomized search.
\newblock \emph{Formal Methods in System Design}, 48:\penalty0 235--256, 2016.

\bibitem[Sharma et~al.(2012)Sharma, Nori, and Aiken]{sharma2012interpolants}
Rahul Sharma, Aditya~V Nori, and Alex Aiken.
\newblock Interpolants as classifiers.
\newblock In \emph{International Conference on Computer Aided Verification}, pages 71--87. Springer, 2012.

\bibitem[Sharma et~al.(2013{\natexlab{a}})Sharma, Gupta, Hariharan, Aiken, Liang, and Nori]{sharma2013data}
Rahul Sharma, Saurabh Gupta, Bharath Hariharan, Alex Aiken, Percy Liang, and Aditya~V Nori.
\newblock A data driven approach for algebraic loop invariants.
\newblock In \emph{Programming Languages and Systems: 22nd European Symposium on Programming, ESOP 2013, Held as Part of the European Joint Conferences on Theory and Practice of Software, ETAPS 2013, Rome, Italy, March 16-24, 2013. Proceedings 22}, pages 574--592. Springer, 2013{\natexlab{a}}.

\bibitem[Sharma et~al.(2013{\natexlab{b}})Sharma, Gupta, Hariharan, Aiken, and Nori]{sharma2013verification}
Rahul Sharma, Saurabh Gupta, Bharath Hariharan, Alex Aiken, and Aditya~V Nori.
\newblock Verification as learning geometric concepts.
\newblock In \emph{Static Analysis: 20th International Symposium, SAS 2013, Seattle, WA, USA, June 20-22, 2013. Proceedings 20}, pages 388--411. Springer, 2013{\natexlab{b}}.

\bibitem[Si et~al.(2018)Si, Dai, Raghothaman, Naik, and Song]{si2018learning}
Xujie Si, Hanjun Dai, Mukund Raghothaman, Mayur Naik, and Le~Song.
\newblock Learning loop invariants for program verification.
\newblock \emph{Advances in Neural Information Processing Systems}, 31, 2018.

\bibitem[Si et~al.(2020)Si, Naik, Dai, Naik, and Song]{si2020code2inv}
Xujie Si, Aaditya Naik, Hanjun Dai, Mayur Naik, and Le~Song.
\newblock Code2inv: A deep learning framework for program verification.
\newblock In \emph{Computer Aided Verification: 32nd International Conference, CAV 2020, Los Angeles, CA, USA, July 21--24, 2020, Proceedings, Part II 32}, pages 151--164. Springer, 2020.

\bibitem[Solar-Lezama et~al.(2006)Solar-Lezama, Tancau, Bodik, Seshia, and Saraswat]{solar2006combinatorial}
Armando Solar-Lezama, Liviu Tancau, Rastislav Bodik, Sanjit Seshia, and Vijay Saraswat.
\newblock Combinatorial sketching for finite programs.
\newblock In \emph{Proceedings of the 12th international conference on Architectural support for programming languages and operating systems}, pages 404--415, 2006.

\bibitem[Vediramana~Krishnan et~al.(2023)Vediramana~Krishnan, Chen, Shoham, and Gurfinkel]{vediramana2023global}
Hari~Govind Vediramana~Krishnan, YuTing Chen, Sharon Shoham, and Arie Gurfinkel.
\newblock Global guidance for local generalization in model checking.
\newblock \emph{Formal Methods in System Design}, pages 1--29, 2023.

\bibitem[Yu and Gen(2010)]{yu2010introduction}
Xinjie Yu and Mitsuo Gen.
\newblock \emph{Introduction to evolutionary algorithms}.
\newblock Springer Science \& Business Media, 2010.

\bibitem[Zhu et~al.(2018)Zhu, Magill, and Jagannathan]{zhu2018data}
He~Zhu, Stephen Magill, and Suresh Jagannathan.
\newblock A data-driven chc solver.
\newblock \emph{ACM SIGPLAN Notices}, 53\penalty0 (4):\penalty0 707--721, 2018.

\end{thebibliography}

\newpage
\appendix

\section{Supplementary Preliminaries.}

\subsection{$T_{\mathbb{N}}$ axioms and Equisatisfiability of $T_{\mathbb{N}}$ and  $T_{\mathbb{Z}}$ formulas.} \label{App: LIA theory}
$T_{\mathbb{N}}$ is a first order theory with signature $\Sigma = \{ 0 , 1 , + , = \}$ and its axioms are given below:
\begin{align*}
    (\forall x) \; \neg (x + 1 = 0)  &\hspace{1 cm}\text{(Zero)} \\
    (\forall x,y) \; (x + 1 = y + 1 \rightarrow x = y)  &\hspace{1 cm}\text{(Successor)} \\
    (\forall x) \; \neg (x + 0 = x)  &\hspace{1 cm}\text{(Plus Zero)} \\
    (\forall x,y) \; \neg (x + (y + 1) = (x+y) + 1)  &\hspace{1 cm}\text{(Plus Successor)} \\
    F[0] \land ((\forall x) \; F[x] \rightarrow F[x+1]) \rightarrow ((\forall x) \; F[x])  &\hspace{1 cm}\text{(Induction)} 
\end{align*}
The final axiom, Induction is actually a meta-axiom which holds true for all $T_{\mathbb{N}}$ formula's $F$. \\[1\baselineskip]
Next we show the equisatisfiability of $T_{\mathbb{N}}$ and  $T_{\mathbb{Z}}$ formulas by showing that every extra symbol in the signature of $T_{\mathbb{Z}}$ is just syntactic sugar.
\begin{itemize}
    \item As each variable in $T_{\mathbb{Z}}$ can take integer values in it's standard interpretation, but a variable in $T_{\mathbb{N}}$ can ony take non negative integra values, any variable $x$ in a $T_{\mathbb{Z}}$ formula is substituted by the expression $(x^+ - x^-)$ where $x^+, x^-$ are variables in $T_{\mathbb{N}}$ theory (they represent the positive and negative parts of $x$).
    \item The constant $2$ is just syntactic sugar for $(1 + 1)$, and $2\cdot F$ for any wff $F$ is just syntactic sugar for $(F + F)$. The constant predicates $2, 3, 4, .. $ and $2\cdot, 3\cdot, 4\cdot, .. $ are defined similarly.
    \item A formula with the binary predicate $-$ is equisatisfiable to some $T_{\mathbb{N}}$ formula by moving the $-$ term to the other side of the equality (or inequality) i.e. $x - y = z$ is equisatisfiable with $x = y + z$. 
    \item Now the constants $-1, -2, -3, .. $ and $-1\cdot, -2\cdot, -3\cdot .. $ are defined similarly using the previous two ideas.
    \item Now the inequality symbols can be substituted by $T_{\mathbb{N}}$ formulas by adding another variable i.e. $ (x > y)$ is equivalent to $(x = y + z) \land \neg(z = 0)$. Recall that variables in $T_{\mathbb{N}}$ only take non negative values in the standard interpretation. Similarly we can find equisatisfiable $T_{\mathbb{N}}$ formulas for $T_{\mathbb{Z}}$ formulas using $\{ <, \le , > , \ge \}$.
\end{itemize}

\subsection{VC dimension and $\epsilon$-nets.}\label{App:enetvc}
We give en example of the set of ellipses shattering the vertices of a regular pentagon in Figure \ref{fig:ellipsesenet}, which shows the VC dimension of ellipses in the plane is at least $5$.
\begin{SCfigure}
    \centering
    \includegraphics[width=0.5\textwidth]{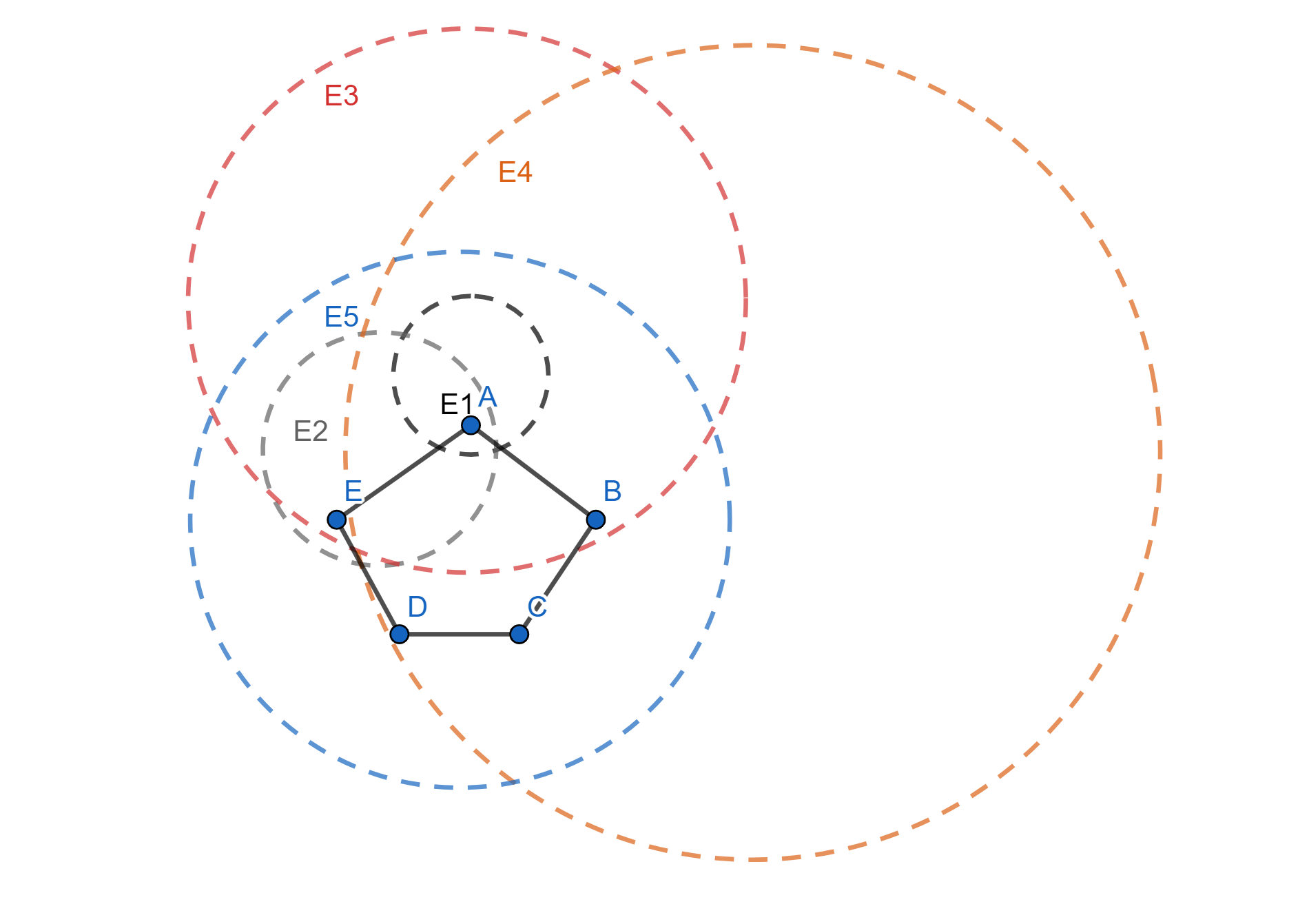}
    \caption{The picture in the right shows the vertices of a regular pentagon ABCDE being shattered by the range space ($\mathbb{R}^2$, $\mathcal{E}$) where $\mathcal{E}$ is the set of all ellipses in the plane. For $1 \le i \le 5$,  ellipse $Ei$ intersects ABCDE at exactly $i$ points; and by symmetry we can see that any subset of ABCDE will indeed be the intersection of some ellipse with ABCDE. Additionally, it can be shown that no 6 element subset of $\mathbb{R}^2$ can be shattered by the ellipses in the plane, which implies that the VC dimension of ellipses in the plane is $5$.}
    \label{fig:ellipsesenet}
\end{SCfigure}

We now give the VC dimension of a few range spaces.

\begin{lemma} \cite{blumer1989learnability} \label{Lemma: VCdimPolytope}
    the VC dimension of the class of $n$-dimensional polytopes with at most $c$ faces in $\mathbb{R}^n$ is less than or equal to $2c(n+1) \log(3c)$.
\end{lemma}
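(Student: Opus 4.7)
The plan is to combine two classical facts: (a) every $n$-dimensional polytope with at most $c$ faces is a $c$-fold intersection of closed half-spaces in $\mathbb{R}^n$, and (b) the VC dimension grows at most logarithmically in the number of intersections. Since VC dimension is monotone under inclusion of concept classes, it suffices to bound the VC dimension of $\mathcal{H}^{\cap c}$, the class of all $c$-fold intersections of closed half-spaces in $\mathbb{R}^n$. (Polytopes with fewer than $c$ faces are handled by padding with the trivial half-space $\mathbb{R}^n$ itself.)

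First I would establish that the class $\mathcal{H}$ of closed half-spaces in $\mathbb{R}^n$ has VC dimension exactly $n+1$. The lower bound comes from shattering the $n+1$ vertices of an $n$-simplex, and the upper bound follows from Radon's theorem: any $n+2$ points in $\mathbb{R}^n$ can be partitioned into two subsets whose convex hulls intersect, preventing any such partition from being realized by a half-space cut. This step is standard but worth stating because the final bound $2c(n+1)\log(3c)$ has $n+1$ rather than $n$, and the $n+1$ comes directly from here.

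The heart of the proof is the intersection bound. Suppose $A \subseteq \mathbb{R}^n$ with $|A| = m$ is shattered by $\mathcal{H}^{\cap c}$. By the Sauer--Shelah lemma applied to $\mathcal{H}$ with VC dimension $n+1$, the number of distinct traces $\{H \cap A : H \in \mathcal{H}\}$ is at most $\Phi_{n+1}(m) \le \bigl(em/(n+1)\bigr)^{n+1}$. Since every element of $\mathcal{H}^{\cap c}$ is determined on $A$ by an ordered $c$-tuple of half-space traces on $A$, the number of distinct traces of $\mathcal{H}^{\cap c}$ on $A$ is at most $\bigl(em/(n+1)\bigr)^{(n+1)c}$. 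Shattering requires $2^m \le \bigl(em/(n+1)\bigr)^{(n+1)c}$; taking $\log_2$ and rearranging yields $m \le (n+1)c \cdot \log_2\bigl(em/(n+1)\bigr)$, and a standard $m \le \alpha \log m$ $\Rightarrow$ $m \le 2\alpha \log \alpha$ argument bounds $m$ by $2c(n+1)\log(3c)$, completing the proof.

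The only real obstacle is squeezing the constant and logarithmic factor into the exact form $2c(n+1)\log(3c)$ rather than a loose $O(cn \log(cn))$. This requires careful bookkeeping when inverting the transcendental inequality $2^m \le (em/(n+1))^{(n+1)c}$, in particular using that the dependence on $m$ on the right-hand side is only poly-logarithmic so that the constant $2$ and the argument $3c$ of the logarithm are tight; the rest of the argument is essentially an application of Sauer--Shelah and Radon's theorem that would be routine to cite from \cite{blumer1989learnability}.
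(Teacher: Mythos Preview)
The paper does not supply its own proof of this lemma: it is stated with a citation to \cite{blumer1989learnability} and used as a black box. Your sketch is precisely the standard argument from that reference (half-spaces have VC dimension $n+1$ by Radon, then Sauer--Shelah plus the $c$-fold intersection bound), so there is nothing to compare against; your proposal is correct and is the expected proof.
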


\begin{lemma} \cite{akama2011vc} \label{Lemma: VCdimEllipsoid}
     The VC dimension of the class of $n$-dimensional ellipsoids in $\mathbb{R}^n$ is $\frac{n^2 + 3n}{2}$.
\end{lemma}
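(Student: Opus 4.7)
The plan is to prove $\mathrm{VCdim}(\mathcal{E}_n) = d$ where $d := \frac{n^2+3n}{2}$ and $\mathcal{E}_n$ is the class of $n$-dimensional ellipsoids in $\mathbb{R}^n$, in two standard halves: a parameter-counting upper bound $\mathrm{VCdim}(\mathcal{E}_n)\le d$ and an explicit shattering construction giving $\mathrm{VCdim}(\mathcal{E}_n)\ge d$. The starting observation is that every ellipsoid is the sublevel set $E=\{x\in\mathbb{R}^n : x^T A x + 2 b^T x + c \le 0\}$ with $A$ symmetric positive definite and the set non-empty. Counting free parameters, $(A,b,c)$ has $\tfrac{n(n+1)}{2}+n+1=d+1$ degrees of freedom, but positive scaling $(A,b,c)\mapsto(\lambda A,\lambda b,\lambda c)$ preserves $E$, so the ellipsoid family is effectively $d$-parameter --- already the right target number.

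For the upper bound, I would use the Veronese-style linearization $\phi:\mathbb{R}^n\to\mathbb{R}^{d+1}$ sending $x$ to the tuple of all degree-$\le 2$ monomials $(1,x_1,\ldots,x_n,x_1^2,x_1 x_2,\ldots,x_n^2)$. Membership $x\in E$ then becomes a homogeneous linear inequality $\langle w_E,\phi(x)\rangle\le 0$ in the coefficient vector $w_E\in\mathbb{R}^{d+1}$. The standard Dudley bound on halfspaces through the origin in $\mathbb{R}^{d+1}$ immediately yields $\mathrm{VCdim}(\mathcal{E}_n)\le d+1$. The remaining job is to save the final $+1$ by using positive-definiteness: the vectors $w_E$ realizing genuine ellipsoids live in a proper open cone $\mathcal{K}\subset\mathbb{R}^{d+1}$ and are defined only up to positive scaling, which kills one projective degree of freedom. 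Formally, I would assume for contradiction that some set $\{x_1,\ldots,x_{d+1}\}$ is shattered, deduce via Vandermonde-type arguments that their lifts $\phi(x_i)$ are linearly independent in $\mathbb{R}^{d+1}$, and then show that the $2^{d+1}$ dichotomies cannot all be realized by halfspaces whose normals lie in $\mathcal{K}$: either by a Radon-partition argument restricted to $\mathcal{K}$, or by selecting a specific pair of complementary dichotomies whose realizing $w_E$'s would sum to a vector with quadratic part zero, contradicting positive-definiteness.

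For the lower bound I need an explicit configuration of $d$ points that is shattered. Generalizing the regular pentagon from the figure, I would place the points on the moment curve $\gamma(t)=(t,t^2,\ldots,t^n)$ at generic real parameters $t_1<\cdots<t_d$, giving lifted points $\phi(\gamma(t_k))$ whose linear independence in $\mathbb{R}^{d+1}$ follows from a block-Vandermonde determinant computation. For any dichotomy $S\subseteq\{1,\ldots,d\}$, linear algebra produces a quadratic polynomial $p_S$ with $p_S(\gamma(t_k))<0$ iff $k\in S$; the remaining step is to convert $p_S$ into one whose quadratic part is positive definite. I would do this by the perturbation $p_S(x)\leftarrow p_S(x)+\varepsilon\|x\|^2 - M_\varepsilon$ for sufficiently small $\varepsilon>0$ and a constant $M_\varepsilon$ chosen so that the sign pattern at the finitely many points $\gamma(t_k)$ is preserved; this yields a genuine ellipsoid $E_S$ realizing the dichotomy.

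\paragraph{Main obstacle.} The main difficulty is the tight upper bound: the naive lift only gives $d+1$, and extracting the missing unit requires genuinely exploiting positive-definiteness rather than treating ellipsoids as arbitrary quadric sublevel sets. The most delicate step is the Radon/cone argument that rules out simultaneous realizability of a complementary pair of dichotomies inside $\mathcal{K}$; a clean formulation may require distinguishing the ``quadratic part'' and ``affine part'' of $w_E$ and showing that any would-be shattering configuration forces a cancellation in the quadratic part. By contrast, the lower-bound perturbation is routine, provided one is careful that $\varepsilon$ is small enough relative to $\min_k |p_S(\gamma(t_k))|$ across all $2^d$ dichotomies simultaneously.
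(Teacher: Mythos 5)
The paper does not actually prove this lemma: it is imported verbatim from \cite{akama2011vc}, and the appendix only illustrates the case $n=2$ (five pentagon vertices shattered by ellipses). Judged on its own terms, your proposal has genuine gaps in both halves, and the lower bound is unsalvageable as written.

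The fatal problem is the lower bound. On the moment curve $\gamma(t)=(t,t^2,\dots,t^n)$ every quadratic polynomial $q$ restricts to a univariate polynomial $q(\gamma(t))$ of degree at most $2n$, so the restrictions span a function space of dimension only $2n+1$; by Dudley's bound at most $2n+1$ points of the curve can be shattered even by \emph{arbitrary} quadric sublevel sets, and $2n+1<\tfrac{n^2+3n}{2}$ for all $n\ge 3$ (concretely, the alternating dichotomy along the curve would force a degree-$2n$ polynomial to change sign $\tfrac{n^2+3n}{2}-1>2n$ times). Equivalently, your block-Vandermonde independence claim for the lifts $\phi(\gamma(t_k))$ is false for $n\ge 3$, because the monomials $x_ix_j$ and $x_k$ all collapse to repeated powers of $t$. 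The perturbation step is independently unsound: if the quadratic part of $p_S$ has a negative eigenvalue $-\mu$, then $p_S+\varepsilon\|x\|^2$ becomes positive definite only for $\varepsilon>\mu$, which is not a small perturbation and need not preserve the sign pattern; if this step worked in general it would convert every quadric dichotomy on a finite set into an ellipsoid dichotomy and yield $\mathrm{VCdim}=\tfrac{(n+1)(n+2)}{2}$, contradicting the very statement being proved. The known construction instead uses a point configuration adapted to the quadratic monomials, of size $n+\binom{n+1}{2}=\tfrac{n^2+3n}{2}$, not points on a curve.

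For the upper bound your instinct is right (positive definiteness must be exploited to save one unit over the naive lift), but the mechanism you sketch does not close: the sum of two positive-definite quadratic parts is again positive definite, and a shattering assumption never forces the witnesses of complementary dichotomies to have cancelling evaluation vectors, so no contradiction materializes from that pairing. The clean argument is to normalize $\operatorname{tr}(A)=1$, write the defining quadratic as $q(x)=\tfrac{1}{n}\|x\|^2+g(x)$ where $g$ ranges over a linear space of dimension exactly $\tfrac{n(n+1)}{2}-1+n+1=\tfrac{n^2+3n}{2}$, and apply the standard extension of Dudley's theorem to positivity sets of a fixed function plus a finite-dimensional linear space of functions.
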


\subsection{More on Ellipsoids in $\mathbb{R}^n$.} \label{App:IOLJE}
\paragraph{Lattice points contained in an ellipsoid}
Let \( X \subseteq \mathbb{R}^n \) be a Lebesgue measurable set. We use the notation \( \text{vol}(X) \) to represent the volume of \( X \) and \( \Lambda(X) \) to denote the number of lattice points contained in \( X \). For any \( n \)-dimensional ellipsoid \( \mathcal{E} \), let \( t \mathcal{E} := \{ t\vec{s} : \vec{s} \in \mathcal{E} \} \) denote the dilated \( n \)-dimensional ellipsoid. We have the following relation between \( \text{vol}(t\mathcal{E}) \) and \( \Lambda(t\mathcal{E}) \) from Landau (1915):

\begin{theorem} \cite{gotze2004lattice} \label{Thm: Ratio of Ellipses}
    Let \( n \ge 2 \). Then for any \( n \)-dimensional ellipsoid \( \mathcal{E} \) and \( t > 0 \), there exists a constant \( C > 0 \) such that
    $$  1 - \frac{C}{t^{2 - \frac{2}{n+1}}} \le \frac{\Lambda(t\mathcal{E})}{\text{vol}(t\mathcal{E})} \le 1 + \frac{C}{t^{2 - \frac{2}{n+1}}}. $$
\end{theorem}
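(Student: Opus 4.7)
The plan is to prove the two-sided bound simultaneously by establishing the additive estimate
$$|\Lambda(t\mathcal{E}) - \text{vol}(t\mathcal{E})| \;\le\; C' \cdot t^{\,n - 2 + \frac{2}{n+1}}$$
for some constant $C'$ depending only on $\mathcal{E}$, and then dividing through by $\text{vol}(t\mathcal{E}) \asymp t^n$. Since a bounded translation of $\mathcal{E}$ only perturbs $\Lambda$ by a boundary-layer term of strictly lower order, I may also assume $\mathcal{E}$ is centered at the origin. The argument is the classical Poisson-summation-plus-Fourier-decay scheme, in which two sources of error are balanced against each other by tuning a single smoothing scale.

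First, I would mollify the indicator: choose a nonnegative radial bump $\phi \in C_c^\infty(\mathbb{R}^n)$ with $\int \phi = 1$ supported in the unit ball, set $\phi_\epsilon := \epsilon^{-n}\phi(\cdot/\epsilon)$, and consider $f := \chi_{t\mathcal{E}} * \phi_\epsilon$. Outside the $\epsilon$-neighborhood of $\partial(t\mathcal{E})$, $f$ equals $\chi_{t\mathcal{E}}$; inside that annulus, the discrepancy is controlled by the number of lattice points in the shell, which is $O(t^{n-1}\epsilon)$ by a simple boundary-volume count. This gives the geometric error bound $|\sum_k f(k) - \Lambda(t\mathcal{E})| \lesssim t^{n-1}\epsilon$. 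Then I would apply Poisson summation, turning the lattice sum into $\sum_{k \in \mathbb{Z}^n} \widehat{\chi_{t\mathcal{E}}}(k)\,\hat{\phi}(\epsilon k)$; the $k = 0$ term reproduces exactly $\text{vol}(t\mathcal{E})$, which is the main term we want to extract.

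The heart of the argument is the Fourier-decay estimate on the remaining frequencies. Because $\partial\mathcal{E}$ is a smooth compact hypersurface with everywhere nonvanishing Gaussian curvature, the method of stationary phase applied to the oscillatory surface integral obtained from a divergence-theorem rewriting of $\widehat{\chi_{\mathcal{E}}}$ yields the classical bound $|\widehat{\chi_{\mathcal{E}}}(\xi)| \le C''(1+|\xi|)^{-(n+1)/2}$. Scaling then gives $|\widehat{\chi_{t\mathcal{E}}}(k)| \lesssim t^{(n-1)/2} |k|^{-(n+1)/2}$ for $k \ne 0$. Combining with the rapid decay of $\hat\phi(\epsilon k)$ for $|k| \gtrsim 1/\epsilon$ and summing in spherical shells (since $(n+1)/2 < n$, the sum grows only polynomially up to the cutoff) produces the spectral error
$$\sum_{0 \ne k \in \mathbb{Z}^n} \bigl|\widehat{\chi_{t\mathcal{E}}}(k)\,\hat\phi(\epsilon k)\bigr| \;\lesssim\; t^{(n-1)/2}\,\epsilon^{-(n-1)/2}.$$

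Finally, I would balance the two error contributions. Setting $t^{n-1}\epsilon = t^{(n-1)/2}\epsilon^{-(n-1)/2}$ forces the optimal choice $\epsilon = t^{-(n-1)/(n+1)}$ and a total error of $O(t^{\,n-2+2/(n+1)})$, which after division by $\text{vol}(t\mathcal{E}) = t^n \text{vol}(\mathcal{E})$ yields the claimed $O(t^{-(2-2/(n+1))})$ relative bound, giving both directions of the inequality simultaneously. The hardest step is unquestionably the Fourier-decay estimate $|\widehat{\chi_{\mathcal{E}}}(\xi)| \lesssim |\xi|^{-(n+1)/2}$: it relies on a uniform nondegeneracy of the Gauss map of $\partial\mathcal{E}$ and a careful stationary-phase expansion near the two antipodal critical points of the phase $x \mapsto \xi \cdot x$ restricted to $\partial\mathcal{E}$ that contribute to each direction of $\xi$; the constant $C$ in the theorem ultimately depends on the minimum principal curvature of $\mathcal{E}$ and the diameter of $\mathcal{E}$.
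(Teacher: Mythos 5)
The paper does not prove this statement at all: it is imported verbatim from the literature (attributed to Landau 1915, cited via \cite{gotze2004lattice}), so there is no internal proof to compare against. Your sketch is the standard classical route to the bound $|\Lambda(t\mathcal{E}) - \mathrm{vol}(t\mathcal{E})| \lesssim t^{n-2+2/(n+1)}$, and the bookkeeping checks out: the balance $t^{n-1}\epsilon = t^{(n-1)/2}\epsilon^{-(n-1)/2}$ gives $\epsilon = t^{-(n-1)/(n+1)}$ and total error $t^{n(n-1)/(n+1)} = t^{n-2+2/(n+1)}$, which after dividing by $\mathrm{vol}(t\mathcal{E}) = t^n\mathrm{vol}(\mathcal{E})$ yields exactly the stated relative error $t^{-(2-2/(n+1))}$. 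The stationary-phase decay $|\widehat{\chi_{\mathcal{E}}}(\xi)| \lesssim (1+|\xi|)^{-(n+1)/2}$ is the correct key input and holds for ellipsoids since the Gaussian curvature of $\partial\mathcal{E}$ is bounded away from zero.

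One step as written would not survive scrutiny: you bound $|\sum_k f(k) - \Lambda(t\mathcal{E})|$ by ``the number of lattice points in the $\epsilon$-shell, which is $O(t^{n-1}\epsilon)$ by a simple boundary-volume count.'' For $\epsilon \ll 1$ the number of lattice points within distance $\epsilon$ of $\partial(t\mathcal{E})$ is \emph{not} controlled by the shell's volume --- asserting that it is amounts to assuming the equidistribution you are trying to prove. The standard repair is the monotone sandwich: choose $\phi_\epsilon$ supported in the $\epsilon$-ball so that $\chi_{(t-c\epsilon)\mathcal{E}} * \phi_\epsilon \le \chi_{t\mathcal{E}} \le \chi_{(t+c\epsilon)\mathcal{E}} * \phi_\epsilon$ pointwise, apply Poisson summation to the two smooth majorant/minorant functions, and absorb the difference of their zero-frequency terms, $\mathrm{vol}((t+c\epsilon)\mathcal{E}) - \mathrm{vol}((t-c\epsilon)\mathcal{E}) = O(t^{n-1}\epsilon)$, into the geometric error. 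With that substitution the two error sources are exactly the ones you balance, and the rest of your argument goes through; the constant $C$ then depends, as you say, only on the principal curvatures and diameter of $\mathcal{E}$, not on $t$ (the paper's quantifier order ``for any $t>0$ there exists $C$'' is looser than what is actually needed and used).
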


Additionally, it is straightforward to observe that \( \text{vol}(t\mathcal{E}) = t^n \text{vol}(\mathcal{E}) \).

\paragraph{Inner and outer Lowner-John ellipsoids}
Any polytope in \( \mathbb{R}^n \) has a unique inner Lowner-John ellipsoid and a unique outer Lowner-John ellipsoid (see Figure \ref{fig:JLellipse} in for an example). Furthermore, the inner and outer Lowner-John ellipsoids share a common center and are dilations of each other w.r.t their common center. The following theorem formalizes the relationship between the inner and outer Lowner-John ellipsoids.

\begin{theorem} \cite{henk2012lowner} \label{Thm: JLEllipsoid}
    Let \( \mathcal{P} \) be a compact polytope in $\mathbb{R}^n$. Then, there exists an ellipsoid \( \mathcal{E} \) centered at the origin and a vector \( \vec{t} \) called the center, such that the inner and outer Lowner-John ellipsoids of \( \mathcal{P} \) are given by \( \vec{t} + \frac{1}{n}\mathcal{E} \) and \( \vec{t} + \mathcal{E} \), respectively. Finally, we have the following set-relations:
    $$\vec{t} + \frac{1}{n}\mathcal{E} \subseteq \mathcal{P} \subseteq \vec{t} + \mathcal{E}.$$
\end{theorem}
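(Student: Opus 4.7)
I would prove the theorem in three stages: (I) existence and uniqueness of the maximum-volume ellipsoid inscribed in $\mathcal{P}$, (II) John's containment theorem showing that the $n$-fold dilation about its center contains $\mathcal{P}$, and (III) identification of this $n$-fold dilation as the (unique) minimum-volume circumscribing ellipsoid.

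For Stage I, I parametrize an ellipsoid by a symmetric positive-definite matrix $A$ and center $\vec{t}$ via $\mathcal{E}(A,\vec{t}) = \{A\vec{x}+\vec{t} : \|\vec{x}\|\le 1\}$, so the volume equals $\det(A)\cdot\mathrm{vol}(B_n)$. The feasibility set $\mathcal{F} = \{(A,\vec{t}) : \mathcal{E}(A,\vec{t})\subseteq \mathcal{P}\}$ is closed because $\mathcal{P}$ is closed, and bounded because $\mathcal{P}$ is compact, so the continuous function $\log\det(A)$ attains its maximum on $\mathcal{F}$. Uniqueness of the maximizer follows from strict concavity of $\log\det$ on the positive-definite cone combined with convexity of $\mathcal{P}$: given two distinct maximizers $(A_1,\vec{t}_1),(A_2,\vec{t}_2)\in\mathcal{F}$, the Minkowski midpoint ellipsoid $\mathcal{E}(\tfrac{A_1+A_2}{2},\tfrac{\vec{t}_1+\vec{t}_2}{2})$ still lies in $\mathcal{P}$ but has strictly larger volume, a contradiction. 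Denote the unique maximizer by $\mathcal{E}_{\mathrm{in}}$ with center $\vec{t}$, and apply an affine change of coordinates to reduce to the normalized situation $\mathcal{E}_{\mathrm{in}} = B_n$ centered at the origin; all subsequent analysis occurs in these coordinates.

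For Stage II, I would write down the KKT/stationarity conditions for maximizing $\log\det A$ over $\mathcal{F}$ at the optimum $(I_n,\vec{0})$. Since $\mathcal{P}$ is a polytope, only finitely many facets are tangent to $B_n$, and the tangency constraints are affine in $(A,\vec{t})$, so the feasible region is polyhedral near the optimum. Stationarity yields John's decomposition of the identity: there exist contact points $\vec{u}_1,\ldots,\vec{u}_k\in\partial B_n\cap\partial\mathcal{P}$ and positive multipliers $c_1,\ldots,c_k$ with
\[
\sum_{i=1}^k c_i\,\vec{u}_i\vec{u}_i^{T} \;=\; I_n,\qquad \sum_{i=1}^k c_i\,\vec{u}_i \;=\; \vec{0},
\]
and the trace identity forces $\sum_i c_i = n$. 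Because each $\vec{u}_i$ lies on $\partial\mathcal{P}$, the halfspace $\{\vec{y}: \vec{u}_i^T\vec{y}\le 1\}$ supports $\mathcal{P}$, giving $\vec{u}_i^T\vec{x}\le 1$ for every $\vec{x}\in\mathcal{P}$. Using both John equations together with this upper bound, and setting $a_i := \vec{u}_i^T\vec{x}$ so that $\sum_i c_i a_i = 0$ and $\|\vec{x}\|^2 = \sum_i c_i a_i^2$, a careful partitioning of the indices into $\{a_i>0\}$ and $\{a_i<0\}$ yields the sharp John bound $\|\vec{x}\|\le n$, hence $\mathcal{P}\subseteq n\,B_n$.

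For Stage III, existence and uniqueness of the minimum-volume circumscribing ellipsoid follow by a compactness and strict-convexity argument dual to Stage I applied to $\log\det A^{-1}$ on the set of circumscribing ellipsoids. To show this optimum coincides with $nB_n$, I invoke the dual variational characterization: the contact data $(\vec{u}_i,c_i)$ from John's conditions for the inscribed problem also satisfy the stationarity conditions for the circumscribing problem at $nB_n$ (the contact occurs at the points $n\vec{u}_i\in\partial(nB_n)\cap\partial\mathcal{P}$, with the same matrix identity after rescaling), so uniqueness forces $nB_n$ to equal the outer Löwner-John ellipsoid. Undoing the normalization, setting $\mathcal{E} := A\cdot B_n$ (viewed as an ellipsoid at the origin), recovers the theorem with common center $\vec{t}$ and the inclusion $\vec{t}+\frac{1}{n}\mathcal{E}\subseteq\mathcal{P}\subseteq\vec{t}+\mathcal{E}$. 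The principal obstacle is Stage II, specifically extracting the asymmetric-body factor of $n$ (rather than the symmetric-body $\sqrt{n}$) from the optimality conditions: this requires essential use of the translational identity $\sum_i c_i\vec{u}_i=\vec{0}$, which reflects the degree of freedom in the center $\vec{t}$, and relying on the matrix identity alone is insufficient.
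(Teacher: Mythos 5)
The paper does not prove this statement; it is imported verbatim from the cited survey of Henk, so I can only judge your argument on its own terms. Stages I and II are the standard and correct route to John's theorem, and together they already establish everything the paper actually needs, namely the existence of a center \(\vec{t}\) and an ellipsoid \(\mathcal{E}\) with \(\vec{t}+\frac{1}{n}\mathcal{E}\subseteq\mathcal{P}\subseteq\vec{t}+\mathcal{E}\). Three small repairs are needed there: the tangency constraints \(\|A\vec{v}_j\|+\vec{v}_j^{T}\vec{t}\le\beta_j\) are convex in \((A,\vec{t})\), not affine; your uniqueness argument via strict concavity of \(\log\det\) does not cover the case \(A_1=A_2\), \(\vec{t}_1\neq\vec{t}_2\) (there one must exhibit a strictly larger ellipsoid inside the convex hull of the two translates); and the raw sign-partition of the \(a_i\) only gives \(\|\vec{x}\|<n+1\) unless you also use \(a_i\ge-\|\vec{x}\|\) and the chord bound \(a^2\le(1-R)a+R\) on \([-R,1]\), which upon summing against the \(c_i\) yields \(R^2\le Rn\) and hence the sharp constant.

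The genuine gap is Stage III. It is simply false that the minimum-volume circumscribing ellipsoid coincides with the \(n\)-fold dilate of the maximum-volume inscribed one: for the cube \([-1,1]^n\) the inner ellipsoid is the unit ball, its \(n\)-dilate is \(nB_n\), but the minimal circumscribing ellipsoid is the ball of radius \(\sqrt{n}\). Concretely, your claimed contact identification breaks down because the points \(n\vec{u}_i\) need not lie on \(\partial\mathcal{P}\) at all (for the square they lie strictly outside it), so the John stationarity conditions for the outer problem are not satisfied at \(nB_n\). The fix is not to repair Stage III but to delete it: the paper's own definition in Section~\ref{Sec: PrereqEllipsoid} takes the ``outer Lowner-John ellipsoid'' to be, by construction, the \(n\)-dilation of the inner one about their common center, and the only facts used downstream (in the proof of Theorem~\ref{Thm: DS Loop Conv Guar}) are the two inclusions and the dilation relationship, both of which your Stages I and II already deliver. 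If instead one insists on reading ``outer Lowner-John ellipsoid'' as the minimum-volume circumscribing ellipsoid, the statement itself would be false, so no proof strategy could succeed.
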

\begin{figure}[!tbh]
    \centering
    \includegraphics[scale=0.30]{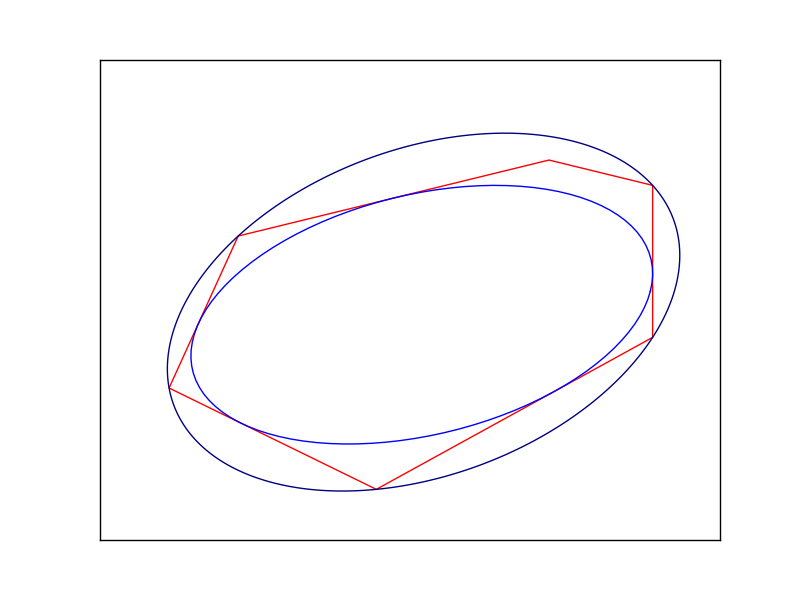}
    \caption{For the pentagon (red) in the plane, the figure shows the inner Lowner-John ellipse (blue) and the outer Lowner-John ellipse (dark blue). (Image from \cite{mosek}.)}
    \label{fig:JLellipse}
\end{figure}

\subsection{Simulated Annealing: Convergence Theorem and Initial Temperature Computation.} \label{App:SA}
We first give the main convergence guarantee of simulated annealing from \cite{mitra1986convergence}.

\begin{theorem} \cite{mitra1986convergence} \label{thm: SAguarantee}
Let \((X_t)\) be a sequence of random variables denoting a simulated annealing search on the search space \(\mathcal{X}\) with neighborhood function \(\mathcal{N}\) and cost function \(c\). Define \(\mathcal{X}_m\) as the set of all local maxima in \(\mathcal{X}\), \(\mathcal{X}^*\) as the set of all global minima in \(\mathcal{X}\), and \(d(s,s')\) as the length of the shortest path between \(s\) and \(s'\). We then define the following parameters:

\begin{itemize}
    \item \(r := \min_{s \in \mathcal{X} \setminus \mathcal{X}_m} \max_{s' \in \mathcal{X}} d(s,s')\) denotes the radius of the state space.
    \item \(w := \min_{s \in \mathcal{X}} \frac{1}{|\mathcal{N}(s)|}\) denotes the minimum probability of choosing a neighbor.
    \item \(L := \max_{s \in \mathcal{X}} \max_{s' \in \mathcal{N}(s)} |c(s) - c(s')|\) denotes the maximum cost change in a single transition.
    \item \(\delta := \min_{s \not\in \mathcal{X}^*} \left( c(s) - c(s^*) \right)\) denotes the minimum cost change in a transition to global minima.
\end{itemize}

Then, for any starting candidate \(s_0 \in \mathcal{X}\) and \(T_0 \ge rL\), we have for all natural numbers \(k\),

\[
\Pr(X_{kr} \in \mathcal{X}^*) \ge 1 - \frac{A}{k^{\min (a,b)}}
\]

where 

\[
a := \frac{w^r}{r^{\frac{rL}{T_0}}}, \quad b := \frac{\delta}{T_0}
\]

and \(A > 0\) is some constant.
\end{theorem}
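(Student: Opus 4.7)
The plan is to analyze the simulated annealing iteration as an inhomogeneous Markov chain $(X_t)$ on $\mathcal{X}$ whose one-step transition kernel $P_t$ is time-varying through the temperature $T_t = T_0/\ln(1+t)$. The target bound $1-A/k^{\min(a,b)}$ decomposes into two qualitatively different failure modes, corresponding to the two exponents: the chain may fail to ever reach $\mathcal{X}^*$ within the first $kr$ steps (the $a$ term), or it may reach $\mathcal{X}^*$ but subsequently escape (the $b$ term). I would bound each failure mode separately and finish by a union bound.

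First I would establish a uniform one-epoch hitting estimate. From any $s \in \mathcal{X}$ (after a short warm-up to leave the local-maxima set $\mathcal{X}_m$, which can be absorbed into $A$), the definition of $r$ gives a neighborhood path $s = s_0, s_1,\ldots, s_\ell = s^*$ of length $\ell \le r$ to some global minimum. The chain follows this path during the epoch $[kr,(k+1)r]$ with probability at least
\[
\prod_{i=0}^{\ell-1} w \cdot \exp\!\Bigl(-\tfrac{(c(s_{i+1})-c(s_i))^+}{T_{kr+i}}\Bigr) \;\ge\; w^{r} \exp\!\Bigl(-\tfrac{rL}{T_{(k+1)r}}\Bigr) \;=\; w^{r}(1+(k+1)r)^{-rL/T_0}.
\]
Under the hypothesis $T_0 \ge rL$ the exponent $rL/T_0 \le 1$, so this hitting probability is at least of order $w^r/(rk)^{rL/T_0}$, and the usual $\prod(1-p_j) \le \exp(-\sum p_j)$ argument, summed against the harmonic-type series $\sum_{j=1}^k j^{-rL/T_0}$, yields the bound $\Pr(X_{kr}\notin \mathcal{X}^*) \le A_1 /k^{a}$ with $a = w^r / r^{rL/T_0}$.

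Next I would handle the stability exponent $b = \delta/T_0$. For the chain to leave $\mathcal{X}^*$ at step $t$, it must accept a transition of cost gap at least $\delta$, which happens with probability at most $e^{-\delta/T_t} = (1+t)^{-\delta/T_0}$. Summing this single-step escape probability over $t$ in the late window $(kr/2, kr]$ and using $\int^{kr} t^{-\delta/T_0}\,dt \lesssim (kr)^{1-b}$ or $\lesssim \log(kr)$ as appropriate produces a bound $A_2/k^{b}$ on the probability of having visited $\mathcal{X}^*$ but being elsewhere at time $kr$. A union bound over the two failure events then gives $\Pr(X_{kr}\in\mathcal{X}^*) \ge 1 - (A_1+A_2)/k^{\min(a,b)}$, which is the claim with $A := A_1 + A_2$.

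The main obstacle will be the careful bookkeeping needed because the temperature varies within each epoch of $r$ steps rather than being constant: the replacement $T_{kr+i} \mapsto T_{(k+1)r}$ above incurs a multiplicative distortion $\ln(kr+i+1)/\ln((k+1)r+1) = 1 + O(1/\log k)$ per step, which over $r$ steps is bounded by a universal constant absorbable into $A$, but tracking this honestly requires attention. A secondary subtlety is the interleaving of the two failure modes: once the chain visits $\mathcal{X}^*$ the hitting argument no longer applies directly, so I would formally decompose the event $\{X_{kr}\notin\mathcal{X}^*\}$ into "never hit $\mathcal{X}^*$ before $kr$" and "last visit to $\mathcal{X}^*$ was before $kr$", and bound them by the $a$-term and the $b$-term respectively. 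Finally, one must show the exponent $a$ is tight in the regime $T_0 \ge rL$ (so that the series $\sum j^{-rL/T_0}$ is genuinely subconvergent and yields polynomial, rather than stretched-exponential, decay); this is exactly where the logarithmic schedule is crucial and where a faster cooling schedule would break the argument.
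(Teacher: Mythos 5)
First, note that the paper does not prove this statement at all: it is imported verbatim from \cite{mitra1986convergence}, so there is no in-paper proof to compare against. Your hitting-time/last-exit decomposition is in any case a genuinely different route from the one used in that reference, which works through the theory of inhomogeneous Markov chains: a lower bound on the coefficient of ergodicity of each $r$-step block gives weak ergodicity (this is where $a$ comes from), strong ergodicity is obtained by showing the Gibbs stationary distributions $\pi_t \propto e^{-c(\cdot)/T_t}$ have summable total-variation increments, and the exponent $b=\delta/T_0$ arises as the rate at which $\pi_t$ concentrates on $\mathcal{X}^*$, since the mass on suboptimal states is $O(e^{-\delta/T_t}) = O(t^{-\delta/T_0})$.

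Your sketch has two concrete gaps. For the $a$-term: the quantity $r=\min_{s\in\mathcal{X}\setminus\mathcal{X}_m}\max_{s'}d(s,s')$ is the minimum eccentricity, i.e.\ it guarantees the existence of a \emph{center} state within distance $r$ of everything; it does not guarantee a path of length at most $r$ from an arbitrary state to a global minimum, which is what your epoch-hitting estimate uses. The quantity your argument actually needs is $\max_s d(s,\mathcal{X}^*)$, which can be as large as $2r$ or the diameter, degrading the exponent; in the ergodic-coefficient proof $r$ enters only through two copies of the chain meeting at the center, not through hitting $\mathcal{X}^*$. For the $b$-term the problem is worse: summing the single-step escape probability $(1+t)^{-\delta/T_0}$ over a window of length $\Theta(kr)$ gives $\Theta\bigl((kr)^{1-\delta/T_0}\bigr)$, which \emph{diverges} whenever $\delta/T_0<1$ (the typical regime, since $T_0\ge rL\ge\delta$ generally), so no bound of the form $A_2/k^{b}$ follows from that union bound. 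The decay in $k^{-\delta/T_0}$ is a property of the stationary distributions transferred to the chain by strong ergodicity, not a consequence of escapes being individually rare. As stated, your argument would not establish the theorem.
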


We now describe the initial temperature computation from \cite{ben2004computing}.
In brief, this study attempts to compute a value for \(T_0\) such that the conditional expectation of the acceptance ratio for this initial temperature for positive transitions, sampled in conformance with the stationary distribution, is near the targeted acceptance ratio \(a_0\). For simulated annealing to work well, we want to set $a_0 \in [0.20,0.50]$ \cite{potter20150}. The study shows that using a random walk to obtain positive transitions, instead of sampling in conformance with the stationary distribution, works quite well too.
\begin{algorithm} 
\caption{\texttt{INITIAL\_TEMPERATURE \cite{ben2004computing}}} \label{alg: InitTemp}
\begin{algorithmic}[1]
    \State \textbf{Input: } $(\mathcal{X}, \mathcal{N}, c , s_0)$
    \State \textbf{Hyperparameters: } $a_0$, $\epsilon$, $t_{rw}$
    \State $ \delta \longleftarrow \texttt{BOUNDED\_RANDOM\_WALK}(\mathcal{X}, \mathcal{N}, c , s_0, t_{rw})$
    \State $ \delta^+ \longleftarrow \texttt{GET\_POSITIVE\_TRANSITIONS}(\delta)$
    \State $T \longleftarrow \frac{- \sum_{(c,c') \in \delta^+} (c' - c) }{\ln(a_0) |\delta^+|}  $
    \State $a \longleftarrow \displaystyle\sum_{ (c,c') \in \delta^+} \frac{e^{-\frac{c'}{T}}}{ e^{-\frac{c}{T}}}$
    \While{ $|a - a_0 | > \epsilon$}
    \State $T \longleftarrow  T \sqrt{ \frac{\ln(a)}{\ln(a_0)} }$
    \State $a \longleftarrow \displaystyle\sum_{ (c,c') \in \delta^+} \frac{e^{-\frac{c'}{T}}}{ e^{-\frac{c}{T}}}$
    \EndWhile
    \State \Return $T$
\end{algorithmic}
\end{algorithm}

The algorithm to compute \(T_0\) is given in Algorithm \ref{alg: InitTemp}. The \texttt{INITIAL\_TEMPERATURE} procedure takes as input the search space \(\mathcal{X}\), the neighborhood relation \(\mathcal{N}\), the cost function \(c\), and an initial candidate \(s_0\). It first computes a set of transitions \(\delta\) by performing a bounded random walk on \(\mathcal{X}\) for \(t_{rw}\) steps (refer Appendix \ref{App: suppAlg} for algorithms of these subprocedures) \footnote{In their paper, \cite{ben2004computing} also give a method for computing $t_{rw}$, but here we just treat it as a hyperparameter.}. It then filters out the positive transitions (i.e., transitions that lead to a higher cost) from \(\delta\) into \(\delta^+\). Next, it computes the initial temperature \(T\) using the formula:
\[T \longleftarrow \frac{- \sum_{(c,c') \in \delta^+} (c' - c)}{\ln(a_0) |\delta^+|}\] 
where \(|\delta^+|\) is the number of positive transitions. It then calculates the acceptance ratio \(a\) for the current temperature \(T\) using the formula:
\[
a \longleftarrow \sum_{(c,c') \in \delta^+} \frac{e^{-\frac{c'}{T}}}{e^{-\frac{c}{T}}}
\]
The algorithm iterates, adjusting \(T\) to minimize the difference between \(a\) and the targeted acceptance ratio \(a_0\). Specifically, it updates \(T\) using the scheme:
\[
T \longleftarrow T \sqrt{\frac{\ln(a)}{\ln(a_0)}}
\]
This adjustment continues until the acceptance ratio \(a\) is within an \(\epsilon\)-tolerance of \(a_0\). Finally, the algorithm returns the computed initial temperature \(T\). The study in \cite{ben2004computing} provides termination guarantees for this algorithm. 

\section{Encoding Real-World Loop Programs AS standard LIA SINGLE LOOP CHC SYSTEMS.} \label{App: Representing Practical Programs in the Standard CHC Clause system}
We demonstrate how a large class of practical loop programs can be represented using the standard LIA single loop CHC system via examples. Throughout this section, we use $\mathcal{D}_{\text{int}}$ to denote $\mathbb{Z} \cap \footnotesize{[- \text{INT}\_\text{MAX} - 1, \text{INT}\_\text{MAX}]}$ and $I(\vec{s})$ to denote the loop invariant over program states $\vec{s}$. We start with the most basic loop program.
\subsection{Basic Loop Programs.}
An encoding of a basic `while' loop C program is shown in Fig \ref{App: Basic Loop Program}.

\begin{figure}[H]
    \centering
\begin{minipage}{.3\textwidth}
\begin{verbatim}
int main(void) {
    int x = 1;
    int y = 3;
    while (x + y <= 14) {
        x = 2x + y;
        y = x + 4;
    }
    assert(x + y == 42);
    return 0;
}
\end{verbatim}
\end{minipage}
\begin{minipage}{.5\textwidth}
\begin{alignat*}{1}
    (\forall x,y \in \mathcal{D}_{\text{int}}) &\: (x = 1) \land (y = 3) \rightarrow I(x,y) \\
    (\forall x,y, x', y'\in \mathcal{D}_{\text{int}}) &\: I ( x,y ) \land (x + y \le 14) \land \\ &(x' = 2x + y)  \land (y' = 2x + y + 4) \rightarrow I(x', y') \\
    (\forall x,y \in \mathcal{D}_{\text{int}}) &\:I(x,y)  \rightarrow (x + y = 42) \lor (x + y \le 14)
\end{alignat*}
\end{minipage}
    \caption{Encoding a basic `while' loop program as a standard CHC system.}
    \label{App: Basic Loop Program}
\end{figure}

There are two key points to note here. Firstly, we note that our piecewise linear relation over $B$ with integer coefficients $T$ is encoded in the inductive clause as the conjunction of equalities where a single primed variable is on the left of the equality, and only unprimed variables are on the right\footnote{This is possible for piecewise linear relations over a set $B$ with integer coefficients.}---hence the relation $(y' = x' + 4)$ is encoded as $(y' = 2x + y + 4)$. Encoding $T$ in such a form requires some static analysis within the loop body (a loop-free fragment of code and hence decidable). Secondly, the query clause of the CHC system is written in $(\forall \vec{s})\: I(\vec{s}) \rightarrow Q'(\vec{s})$ form by observing that $(\forall \vec{s})\: (I(\vec{s}) \land \neg B(\vec{s}) \rightarrow Q(\vec{s})) \equiv (\forall \vec{s})\: (I(\vec{s}) \rightarrow Q(\vec{s}) \lor B(\vec{s}))$ and setting $Q'(\vec{s}) = Q(\vec{s}) \lor B(\vec{s})$. 

We note that the choice of the 'while' loop construct is arbitrary; we can similarly encode a 'for' loop or 'do-while' loop construct as depicted in Fig \ref{App: Basic Loop program 2}.

\begin{figure}[H]
\centering
\begin{minipage}{.5\textwidth}
\begin{verbatim}
int main(void) {
    int S = 0;
    for (int i = 0; i <= 100; i++) {
        S = S + i;
    }
    assert(S >= i);
    return 0;
}
\end{verbatim}
\end{minipage}%
\begin{minipage}{.5\textwidth}
\begin{alignat*}{1}
    (\forall i,S \in \mathcal{D}_{\text{int}}) &\: (i = 0) \land (S = 0) \rightarrow I(i,S) \\
    (\forall i,S, i', S'\in \mathcal{D}_{\text{int}}) &\: I ( i,S ) \land (i \le 100) \\ &\land (S' = S + i) \land (i' = i + 1)  \rightarrow I(i', S') \\
    (\forall i,S \in \mathcal{D}_{\text{int}}) &\:I(i,S)  \rightarrow (i \le 100) \lor (S \ge i)
\end{alignat*}
\end{minipage}
\begin{minipage}{.5\textwidth}
\begin{verbatim}   
int main(void) {
    int S = 0;
    int i = 0;
    do {
        S = S + i;
        i = i + 1;
    } while (i <= 100);
    assert(S >= i);
    return 0;
}
\end{verbatim}
\end{minipage}%
\begin{minipage}{.5\textwidth}
\begin{alignat*}{1}
    (\forall i,S \in \mathcal{D}_{\text{int}}) &\: (i = 1) \land (S = 0) \rightarrow I(i,S) \\
    (\forall i,S, i', S'\in \mathcal{D}_{\text{int}}) &\: I ( i,S ) \land (i \le 100) \\ &\land (S' = S + i) \land (i' = i + 1)  \rightarrow I(i', S') \\
    (\forall i,S \in \mathcal{D}_{\text{int}}) &\:I(i,S)  \rightarrow (i \le 100) \lor (S \ge i)
\end{alignat*}
\end{minipage}
\caption{Encoding a `for' loop program and a `do-while' loop program as a standard CHC system. Note that for a `do-while' loop program, the invariant is a proposition which holds when the loop guard is checked, and hence the body of the fact clause for such a CHC system is the set of states reachable from the precondition after exactly one forward loop unrolling; instead of simply the precondition states. }
\label{App: Basic Loop program 2}
\end{figure}

\subsection{Random Initialization of Variables.}

An random initialization statement as used in the below C program:
\begin{verbatim}
int main(void) {
    int x = rand_int();
    int y = rand_int();
    while (...) {
        ...
    }
    assert(...);
    return 0;
}
\end{verbatim}
could be represented by the precondition $\bot$. If a random initialization is followed by an assertion, then we get more sophisticated preconditions. For example, the initialization states in the below code can be represented by the precondition $(x + y \le 10)$.
\begin{verbatim}
int main(void) {
    int x = rand_int();
    int y = rand_int();
    assert(x + y <= 10)
    while (...) {
        ...
    }
    assert(...);
    return 0;
}
\end{verbatim}

\subsection{Conditionals Inside Loop Body.}
The presence of conditionals inside the loop body is why we allow T to be a piecewise linear relation over $B$ with integer coefficients instead of simply a linear function with integer coefficients.

\subsubsection{Conditionals with deterministic guards.}
An encoding of a loop C program with a single deterministic guard conditional is shown in Fig \ref{App: Conditional Deterministic}. A key point to note is that to encode such a loop we need to find an equivalent loop\footnote{Two loops $L_1$ and $L_2$ are called equivalent if $T(L_1) = T(L_2)$ where $T(L)$ denotes the transition relation of loop $L$.} where the loop body is a conditional statement\footnote{A conditional statement is a statement of the form: `if ( S )  \{ S \} \:\: (\: else if  ( S ) \{ S \} \: )* \:\:  else \{ S \}' where $S$ is a placeholder for any program statement.} and furthermore there are no conditionals within the body of any branch of the conditional statement---this can be done by static analysis on the loop body of the original loop. Fig \ref{App: Conditional Deterministic} shows how to encode the loop when the loop body contains a conditional substatement\footnote{If a statement $S$ is a composition given by $S1; S2$, then $S1$ and $S2$ are called substatements.} and Fig \ref{App: Conditional Deterministic 2} shows how to encode a loop when the loop body has nested conditional statements. Finally, we note that if the loop body contains deterministic guard conditionals only and the substatements in the loop body are LIA formulas, then $T$ is a piecewise linear function over the loop guard $B$ with integer coefficients\footnote{$T$ is a piecwise linear function over $B$ with integer coefficients if there exists a finite partition $\{ B_i \}$ of $B$ s.t. $T$ restricted on $B_i$ is a linear function on $B_i$ with integer coefficients forall $i$.}.

{ \footnotesize
\begin{figure}[H]
    \centering
\begin{minipage}{.3\textwidth}
\begin{verbatim}
int main(void) {
    int i = 1;
    int j = 0;
    while (i + j <= 100) {
        i = i + 3*j;
        if (i < 50) 
            j = j + 1;
        else 
            j = j + 2;
    }
    assert(i >= 2*j);
    return 0;
}
\end{verbatim}
\end{minipage}
\begin{minipage}{.5\textwidth}
\begin{alignat*}{1}
    (\forall i,j \in \mathcal{D}_{\text{int}}) &\: (i = 1) \land (j = 0) \rightarrow I(i,j) \\
    (\forall i,j, i', j'\in \mathcal{D}_{\text{int}}) &\: I ( i,j ) \land  \\ & ((i + j \le 100) \land (i + 3j < 50) \land (i' = i + 3j)  \land (j' = j + 1)) \land
    \\ & ((i + j \le 100) \land (i + 3j \ge 50) \land (i' = i + 3j)  \land (j' = j + 2)) \\ &\rightarrow I(i', j') \\
    (\forall i,j \in \mathcal{D}_{\text{int}}) &\:I(i,j)  \rightarrow (i + j > 100) \lor (i \ge 2j)
\end{alignat*}
\end{minipage}
    \caption{Encoding a loop with a deterministic-guard conditional substatement.}
    \label{App: Conditional Deterministic}
\end{figure}

\begin{figure}[H]
    \centering
\begin{minipage}{.3\textwidth}
\begin{verbatim}
int main(void) {
    int i = 1;
    int j = 0;
    while (i + j <= 100) {
        if (i < 50) {
            j = j + 1;
            if (i >= j)
                i = i + 3;
            else
                i = i + 4;
        }
        else 
            i = i + 5;
        j = j + 2;
   }
   assert(i >= j);
   return 0;
}
\end{verbatim}
\end{minipage}
\begin{minipage}{.5\textwidth}
\begin{alignat*}{1}
    (\forall i,j \in \mathcal{D}_{\text{int}}) &\: (i = 1) \land (j = 0) \rightarrow I(i,j) \\
    (\forall i,j, i', j'\in \mathcal{D}_{\text{int}}) &\: I ( i,j ) \land  \\ & ((i + j \le 100) \land (i < 50) \land ( i \ge j) \land (i' = i + 3)  \land (j' = j + 1)) \land \\ 
    & ((i + j \le 100) \land (i < 50) \land ( i < j) \land (i' = i + 4)  \land (j' = j + 1)) \land \\ 
    & ((i + j \le 100) \land (i \ge 50) \land (i' = i + 5)  \land (j' = j + 2)) \\ &\rightarrow I(i', j') \\
    (\forall i,j \in \mathcal{D}_{\text{int}}) &\:I(i,j)  \rightarrow (i + j > 100) \lor (i \ge j)
\end{alignat*}
\end{minipage}
    \caption{Encoding a loop with nested deterministic-guard conditional statements.}
    \label{App: Conditional Deterministic 2}
\end{figure}
}
\subsubsection{Conditionals with nondeterministic guards.}
An encoding of a loop C program with a single nondeterministic guard conditional is shown in Fig \ref{App: Conditional NonDeterministic}. Nondeterministic guard conditionals introduce logical OR in the inductive clause of the CHC system.Note that if the loop body contains nondeterministic guard conditionals only and the substatements in the loop body are LIA formulas, then $T$ is a linear relation over the loop guard $B$ with integer coefficients. Finally, we note that a real-world loop program may have both deterministic guard and nondeterministic guard conditionals, in which case our transition relation $T$ is a piece-wise linear relation over the loop guard $B$ with integer coefficients (assuming substatements in the loop body are LIA formulas).
\begin{figure}[H]
    \centering
\begin{minipage}{.4\textwidth}
\begin{verbatim}
int main(void) {
    int i = 1;
    int j = 0;
    while (i + j <= 100) {
        i = i + j;
        if (rand_bool())
            j = j + 1;
        else 
            j = j + 2;
    }
    assert(i >= 2*j);
    return 0;
}
\end{verbatim}
\end{minipage}
\begin{minipage}{.5\textwidth}
\begin{alignat*}{1}
    (\forall i,j \in \mathcal{D}_{\text{int}}) &\: (i = 1) \land (j = 0) \rightarrow I(i,j)  \\
    (\forall i,j, i', j'\in \mathcal{D}_{\text{int}}) &\: I ( i,j ) \land (i + j \le 100) \\ & ( ((i' = i + j)  \land (j' = j + 1)) \: \lor  \\
    & ((i' = i + j)  \land (j' = j + 2))) \rightarrow I(i', j') \\
    (\forall i,j \in \mathcal{D}_{\text{int}}) &\:I(i,j)  \rightarrow (i + j > 100) \lor (i \ge 2j)
\end{alignat*}
\end{minipage}
    \caption{Encoding a loop with nondeterministic-guard conditional.}
    \label{App: Conditional NonDeterministic}
\end{figure}

\subsubsection{Conditionals with partial nondeterministic guards.} A partial nondeterministic conditional guard is a statement of the form $B \land \text{rand\_bool()}$ or $B \lor \text{rand\_bool()}$ where $B$ is a deterministic conditional guard. We show an encoding of loops with such partial nondeterministic guard conditionals in Fig \ref{App: Conditional partial Nondeterministic}.

\begin{figure}[H]
\centering
\begin{minipage}{.5\textwidth}
\begin{verbatim}
int main(void) {
    int i = 0;
    while (i <= 100) {
        if ((i <= 25) && rand_bool())
            i = i + 1;
        else
            i = i + 10;
   }
   assert(i >= 102);
   return 0;
}
\end{verbatim}
\end{minipage}%
\begin{minipage}{.5\textwidth}
\begin{alignat*}{1}
    (\forall i \in \mathcal{D}_{\text{int}}) &\: (i = 0) \rightarrow I(i) \\
    (\forall i, i' \in \mathcal{D}_{\text{int}}) &\: I ( i) \land (i \le 100) \land  \\ & ((i \le 25) \land ((i' = i + 1) \lor (i' = i + 10))) \land \\ &
    ((i > 25) \land (i' = i + 10)) \rightarrow I(i') \\
    (\forall i \in \mathcal{D}_{\text{int}}) &\:I(i)  \rightarrow (i \le 100) \lor (i \ge 102)
\end{alignat*}
\end{minipage}
\begin{minipage}{.5\textwidth}
\begin{verbatim}
int main(void) {
    int i = 0;
    while (i <= 100) {
        if ( (i <= 25) || rand_bool() )
            i = i + 1;
        else
            i = i + 10;
   }
   assert(i <= 110);
   return 0;
}
\end{verbatim}
\end{minipage}%
\begin{minipage}{.5\textwidth}
\begin{alignat*}{1}
    (\forall i \in \mathcal{D}_{\text{int}}) &\: (i = 0) \rightarrow I(i) \\
    (\forall i, i' \in \mathcal{D}_{\text{int}}) &\: I ( i) \land (i \le 100) \land  \\ & ((i \le 25) \land (i' = i + 1)) \land \\ &
    ((i > 25) \land ( (i' = i + 1) \lor (i' = i + 10))) \\ &\rightarrow I(i') \\
    (\forall i \in \mathcal{D}_{\text{int}}) &\:I(i)  \rightarrow (i \le 100) \lor (i \le 110)
\end{alignat*}
\end{minipage}
\caption{Encoding loops with the two different partial nondeterministic conditional guards. }
\label{App: Conditional partial Nondeterministic}
\end{figure}

\subsection{Nondeterministic Loop Guards.}
An encoding of a loop with a nondeterministic guard is shown in Fig \ref{App: Nondeterministic Loop}.

\begin{figure}[H]
\begin{minipage}{.5\textwidth}
\begin{verbatim}
int main(void) {
    int x = 1;
    int y = 2;
    while (rand_bool()) {
        x = x + y;
        y = y + 1;
    }
    assert(x >= y || x == 1);
    return 0;
}
\end{verbatim}
\end{minipage}%
\begin{minipage}{.5\textwidth}
\begin{alignat*}{1}
    (\forall x,y \in \mathcal{D}_{\text{int}}) &\: (x = 1) \land (y = 2) \rightarrow I(x, y) \\
    (\forall x, y, x', y' \in \mathcal{D}_{\text{int}}) &\: I ( x,y) \land (x' = x + y) \\ &\land (y' = y + 1) \rightarrow I(x', y') \\
    (\forall x, y \in \mathcal{D}_{\text{int}}) &\:I(x,y)  \rightarrow (x \ge y) \lor (x = 1)
\end{alignat*}
\end{minipage}
\caption{Encoding a loop with a nondeterministic guard. }
\label{App: Nondeterministic Loop}
\end{figure}

\subsubsection{Loops with partial nondeterministic guards}
Encoding of loops with either type of a partial nondeterministic guards are shown in Fig \ref{App: Partial Nondeterministic Loop}.

\begin{figure}[H]
\begin{minipage}{.5\textwidth}
\begin{verbatim}
int main(void) {
    int x = 1;
    int y = 2;
    while (rand_bool() && (x <= 100)) {
        x = x + y;
        y = y + 1;
    }
    assert(x >= y || x == 1);
    return 0;
}
\end{verbatim}
\end{minipage}%
\begin{minipage}{.5\textwidth}
\begin{alignat*}{1}
    (\forall x,y \in \mathcal{D}_{\text{int}}) &\: (x = 1) \land (y = 2) \rightarrow I(x, y) \\
    (\forall x, y, x', y' \in \mathcal{D}_{\text{int}}) &\: I ( x,y) \land (x \le 100) \land (x' = x + y) \\ &\land (y' = y + 1) \rightarrow I(x', y') \\
    (\forall x, y \in \mathcal{D}_{\text{int}}) &\:I(x,y)  \rightarrow (x \ge y) \lor (x = 1)
\end{alignat*}
\end{minipage}
\begin{minipage}{.5\textwidth}
\begin{verbatim}
int main(void) {
    int x = 1;
    int y = 2;
    while (rand_bool() || (x <= 100)) {
        x = x + y;
        y = y + 1;
    }
    assert(x >= y || x == 1);
    return 0;
}
\end{verbatim}
\end{minipage}%
\begin{minipage}{.5\textwidth}
\begin{alignat*}{1}
    (\forall x,y \in \mathcal{D}_{\text{int}}) &\: (x = 1) \land (y = 2) \rightarrow I(x, y) \\
    (\forall x, y, x', y' \in \mathcal{D}_{\text{int}}) &\: I ( x,y) \land (x' = x + y) \\ &\land (y' = y + 1) \rightarrow I(x', y') \\
    (\forall x, y \in \mathcal{D}_{\text{int}}) &\:I(x,y)  \rightarrow (x \ge y) \lor (x = 1) \\ &\lor (x \le 100)
\end{alignat*}
\end{minipage}
\caption{Encoding loops with the two different partial nondeterministic guards. }
\label{App: Partial Nondeterministic Loop}
\end{figure}

\subsection{Break and Continue statements.}
The only interesting case in the use of either of these constructs is when they are used in the body of some branch of a conditional statement.\footnote{If the `break' or `continue' are used outside a conditional statement in the loop body, then the portion of the loop body following these statements is unreachable code.} The encoding of loops with either the `break' or `continue' statement in the body of some conditional branch usually requires some static analysis within the loop body which is demonstrated in Fig \ref{App: Break-Continue Loop}.

{ \footnotesize
\begin{figure}[H]
\begin{minipage}{.5\textwidth}
\begin{verbatim}
int main(void) {
    int x = 1;
    int y = 2;
    while (x <= 100) {
        x = x + y;
        if (x >= 50)
            break;
        y = y + 1;
    }
    assert(x >= y);
    return 0;
}
\end{verbatim}
\end{minipage}%
\begin{minipage}{.5\textwidth}
\begin{alignat*}{1}
    (\forall x,y \in \mathcal{D}_{\text{int}}) &\: (x = 1) \land (y = 2) \rightarrow I(x, y) \\
    (\forall x, y, x', y' \in \mathcal{D}_{\text{int}}) &\: I ( x,y) \land (x \le 100) \land \\ & ((x + y < 50) \land (x' = x + y) \land (y' = y + 1)) \\ &\rightarrow I(x', y') \\
    (\forall x, y \in \mathcal{D}_{\text{int}}) &\:I(x,y)  \rightarrow (x \ge y)  \lor \\ &( (x \le 100) \land ( (x > 100) \lor (x + y < 50)) )
\end{alignat*}
\end{minipage}
\begin{minipage}{.5\textwidth}
\begin{verbatim}
int main(void) {
    int x = 1;
    int y = 2;
    while (x <= 100) {
        x = x + y;
        if (x >= 50)
            continue;
        y = y + 1;
    }
    assert(x >= y);
    return 0;
}
\end{verbatim}
\end{minipage}%
\begin{minipage}{.5\textwidth}
\begin{alignat*}{1}
    (\forall x,y \in \mathcal{D}_{\text{int}}) &\: (x = 1) \land (y = 2) \rightarrow I(x, y) \\
    (\forall x, y, x', y' \in \mathcal{D}_{\text{int}}) &\: I ( x,y) \land (x \le 100) \land \\ & ((x + y < 50) \land (x' = x + y) \land (y' = y + 1)) \\ &((x + y \ge 50) \land (x' = x + y) \land (y' = y)) \\ &\rightarrow I(x', y') \\
    (\forall x, y \in \mathcal{D}_{\text{int}}) &\:I(x,y)  \rightarrow (x \ge y) \lor (x \le 100)
\end{alignat*}
\end{minipage}
\caption{Encoding loops with the `break' or `continue' statement. }
\label{App: Break-Continue Loop}
\end{figure}
}
\subsection{Goto Statements.}
When a `goto' statement causes the control flow to jump backwards in the control flow graph, we get a loop.\footnote{Not all `goto' statments model a loop: `goto' statements where the control flow jumps forward models a branch of a conditional statement.} We note that a goto statement can be encoded by the standard CHC clause only when it's body does not contain loop statements or other goto statements, and is not contained in a loop statement or another goto statement. We demonstrate encoding a program with a `goto' statement in Fig  \ref{App: Goto}.

\begin{figure}[H]
\begin{minipage}{.5\textwidth}
\begin{verbatim}
int main(void) {
    int i = 0;
    int S = 0;
    LOOP: 
    S = S + i;
    i = i + 1;
    if (i < 100)
        goto LOOP;
    assert(S >= i);
    return 0;
}
\end{verbatim}
\end{minipage}%
\begin{minipage}{.5\textwidth}
\begin{alignat*}{1}
    (\forall i,S \in \mathcal{D}_{\text{int}}) &\: (i = 1) \land (S = 0) \rightarrow I(i, S) \\
    (\forall i, S, i', S' \in \mathcal{D}_{\text{int}}) &\: I ( i,S) \land (i < 100) \land (S' = S + i) \\ &\land (i' = i + 1) \rightarrow I(i', S') \\
    (\forall i, S \in \mathcal{D}_{\text{int}}) &\:I(i,S)  \rightarrow (S \ge i) \lor (i < 100)
\end{alignat*}
\end{minipage}
\caption{Encoding a `goto' statement modelling a loop. Note that the invariant $I(i,S)$ is a proposition which holds before the execution of line 7 in the above code. }
\label{App: Goto}
\end{figure}

\subsection{Motivating Piecewise-Linear Integer Relations.} \label{App: MotivateT}
We finally motivate the need for defining our transition relation as piece-wise linear integer relations.  If our loop body is just a sequence of assignments whose right-hand side expressions are formulas from LIA theory, the transition relation $T$ can be written as a linear
transformation from $\mathbb{R}^{n+1}$ to $\mathbb{R}^{n+1}$, where $n$ is the number of program variables. For example, if the loop body is ``$x=2x+y; y=y+4$'', then the transition $T$ is $T(x,y,xp,yp) = (xp = 2x + y) \land (yp = y + 4)$. 
However, the presence of deterministic-guard conditionals in the
loop body requires that the transition relation be more general. In
particular, the transition relation can be a `piece-wise linear
function': given a partition of the loop guard $B$ into disjoint sets, it is a linear transformation on each set from
this partition. For example for the loop body in Fig \ref{Fig:LB2}, our
transition relation $T$ can be written as $T(x,y,xp,yp) \equiv ((x > 5)
\Rightarrow  (xp,yp) = (2x + y, y + 4)) \land ((x \le 5) \Rightarrow (xp,yp) = (x + 1, y - 1))$. If the loop body has a nondeterministic-guard
conditional, then our transition relation is a sequence of
linear transformations over $B$. For example in Fig \ref{Fig:LB3}, the
transition relation $T$ can be written as $T(x,y,xp,yp) \equiv ( 
\texttt{True} \Rightarrow (((xp,yp) = (2x + y, y + 4)) \lor ((xp,yp) = (x + 1, y - 1)))$. 

\begin{figure}[t]
\begin{minipage}{.47\textwidth}
\begin{verbatim}
    while (x + y <= 14) {
        if (x > 5) { x = 2x + y; y = y + 4; }
        else { x = x + 1; y = y - 1; }
    }
\end{verbatim}
    \caption{Loop with a deterministic-guard conditional.}
    \label{Fig:LB2}
\end{minipage}%
\hspace{1.0 em}
\begin{minipage}{.49\textwidth}
\begin{verbatim}
    while (x + y <= 14) {
        if (rand_bool()) { x = 2x + y; y = y + 4; }
        else { x = x + 1; y = y - 1; }
    }
\end{verbatim}
    \caption{Loop with a nondeterministic-guard conditional.}
    \label{Fig:LB3}
\end{minipage}%
\end{figure}

\section{Uniform Sampling from polytope lattice points.} \label{App: Uniform Sampling poly lIA}

We give an algorithm for constructing uniform samples from polytope lattice points. We first describe rejection sampling and constructing uniform samples from the solution space of linear diophantine equations.

 \subsection{Rejection Sampling} \label{App: Rejection Sampling}
Let $U$ be a finite set. Our goal is to produce a uniform sample from $U$---however suppose that sampling from $U$ uniformly isn't directly possible. Also suppose there exists a finite set $V$ s.t. $U \subseteq V$ and we can easily uniformly sample from $V$, then \texttt{REJECTION\_SAMPLING} (Algorithm \ref{Alg: Rej Sampling}) produces a uniform sample from $U$. \texttt{REJECTION\_SAMPLING} takes as input the set $U$ and a procedure $\texttt{V\_UNIFORM\_SAMPLE}$ to construct a uniform sample from some $V \supseteq U$.

\begin{algorithm}[H]
\caption{\texttt{REJECTION\_SAMPLING}}
\begin{algorithmic}[1]
\State \textbf{Input: } $U$, $\texttt{V\_UNIFORM\_SAMPLE}$
\State $x \leftarrow $ $\texttt{V\_UNIFORM\_SAMPLE}$
\While  {$x \not\in U$}
\State $x \leftarrow $ $\texttt{V\_UNIFORM\_SAMPLE}$
\EndWhile 
\State Return $x$
\end{algorithmic}
\label{Alg: Rej Sampling}
\end{algorithm}

\begin{lemma}
    Algorithm \ref{Alg: Rej Sampling} produces a uniform sample from $U$.
\end{lemma}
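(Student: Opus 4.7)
The plan is to show that for any fixed $x \in U$, the probability that \texttt{REJECTION\_SAMPLING} returns $x$ equals $1/|U|$. Since this probability does not depend on the choice of $x$, the returned sample is uniform on $U$. I would first argue correctness conditional on termination, and then separately argue almost-sure termination.

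For correctness, I would observe that the loop iterations are independent and identically distributed: each iteration draws a fresh value from $V$ uniformly, so the probability that a single draw hits any particular $x \in U$ is $1/|V|$, and the probability that a draw lies outside $U$ (a ``rejection'') is $(|V| - |U|)/|V|$. Let $R$ denote the event of rejection on one draw and $X_k$ the value drawn on iteration $k$. The algorithm returns $x$ exactly when the first $k$ draws are rejections and the $(k+1)$-st draw equals $x$, for some $k \ge 0$. By independence,
\[
\Pr[\text{return } x] = \sum_{k=0}^{\infty} \left( \frac{|V| - |U|}{|V|} \right)^{k} \cdot \frac{1}{|V|} = \frac{1}{|V|} \cdot \frac{|V|}{|U|} = \frac{1}{|U|},
\]
using the geometric series, which converges because $|U| \ge 1$ forces $(|V|-|U|)/|V| < 1$. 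This establishes uniformity.

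For termination, I would note that the number of iterations is a geometric random variable with success probability $p := |U|/|V| > 0$, so the probability that the loop runs forever is $\lim_{k \to \infty}(1-p)^k = 0$, i.e.\ the algorithm halts almost surely, and its expected number of iterations is $|V|/|U|$. The only mild subtlety, and the one point I would be careful to flag, is that the argument implicitly requires $U \neq \varnothing$; if $U$ is empty then the algorithm loops forever and no uniform sample exists, so this hypothesis must be stated (or is harmlessly inherited from the assumption that we wish to sample from $U$ at all). Given nonemptiness, the two computations above together give the result.
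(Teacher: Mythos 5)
Your proof is correct and follows essentially the same route as the paper's: condition on the number of rejections, sum the geometric series $\sum_{k\ge 0}\bigl(\tfrac{|V|-|U|}{|V|}\bigr)^k\cdot\tfrac{1}{|V|}=\tfrac{1}{|U|}$. The additional remarks on almost-sure termination and the expected iteration count $|V|/|U|$ match what the paper states in the surrounding discussion, and your note that $U\neq\varnothing$ is needed is a reasonable (if minor) point the paper leaves implicit.
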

\begin{proof}
    Let $x$ be the sample produced by Algorithm \ref{Alg: Rej Sampling} and $x_0$ be any element of $U$. Then 
 $$   \probP (x = x_0) =  \sum_{t \ge 0} \Big( \frac{|V| - |U|}{|V|} \Big)^t \cdot \frac{1}{|V|} = \frac{1}{|U|} $$
\end{proof}

For example, if we wish to uniformly sample from the lattice points contained in a polytope $\mathcal{P} \subseteq \mathbb{R}^n$, then we can take $V$ to be any axis aligned orthotope containing $\mathcal{P}$.

Another key point to note is that Algorithm \ref{Alg: Rej Sampling} does not produce a uniform sample from $U$ in constant time. If $\tau$ is the time for Algorithm \ref{Alg: Rej Sampling} to produce a uniform sample from $U$, then
$$ \mathop{{}\mathbb{E}}(\tau) = \sum_{t \ge 0} \probP (\tau > t) = \sum_{t \ge 0} \Big( \frac{|V| - |U|}{|V|} \Big)^t = \frac{|V|}{|U|}
$$

This means that if $|V| \gg |U|$, then uniform sampling from $U$ via rejection sampling on $V$ isn't feasible. In practice, such a case usually occurs when we wish to uniformly sample from the lattice points contained in an affine space $\mathcal{A}$ of dimension $< n$ in $\mathbb{R}^n$ and we take $V$ to be any axis aligned orthotope containing $\mathcal{A}$. But for this case, we can make do of a specialized sampling procedure - sampling from the solution space of linear Diophantine Equations which we mention next.

\subsection{Uniform Sampling from Low Rank Polytopes: Uniform Sampling from the Solution Space of Linear Diophantine Equations.}\label{App: Sampling from Affine Spaces.}
Let $AX = B$ denote an affine space where $A \in \mathbb{Z}^{m \times n}$ and $B, X \in \mathbb{Z}^{n \times 1}$ where $X$ is the column vector of unknowns. This is a linear Diophantine equation, and the solution space of this equation is described in ~\cite{gilbert1990linear}. 
Uniformly sampling from this solution space is easy to do (described in ~\cite{gilbert1990linear}), and we present a concise version in Algorithm \ref{alg: ldus}.

\begin{algorithm}[H]
\caption{\texttt{LINEAR\_DIOPHANTINE\_UNIFORM\_SAMPLER}} \label{alg: ldus}
\begin{algorithmic}[1]
\State \textbf{Input: } $A$, $B$
\State $[R : T] \leftarrow $ \texttt{UNIMODULARLY-ROW-REDUCE}($[A^T : I]$)
\State $Y_0 \longleftarrow $ \texttt{UNIFORMLY\_SAMPLE\_ROW\_ECHELON}($R^TY = B$).
\State Return $T^T Y_0$.
\end{algorithmic}
\label{Alg: Affine Sampling}
\end{algorithm}

We briefly describe the sampling technique used in~\cite{gilbert1990linear}. In summary, given the matrix equation \( AX = B \), where we aim to construct an integer sample, we first reduce it to a matrix \( T \) and an equation \( RY = C \), where \( R \) is in row-echelon form and there is a bijection between the solution spaces of these two equations, given by \( Y_0 = T^T X_0 \), where \( Y_0 \) is a solution to the latter equation and \( X_0 \) is a solution to the former. \\
Since \( R \) is in row-echelon form, we can derive integer vectors \( \{ b_1, b_2, \dots, b_t \} \), which are basis vectors of the null space of \( R \), along with a particular integer solution \( p_0 \) for \( RY = C \) such that any integer solution of \( RY = C \) can be expressed as \( Y = p_0 + \sum_{i=1}^t \alpha_i b_i \), where the \( \alpha_i \)'s are integers. \\
Consequently, any solution of \( AX = B \) has the form 
\[
X = T^T p_0 + \sum_{i=1}^t \alpha_i T^T b_i,
\]
or equivalently 
\[
X = T^T p_0 + b' \alpha',
\]
where \( b' \) is a matrix with $T^Tb_i$'s as column vectors and \( \alpha' = (\alpha_1, \alpha_2, \dots)^T \). If we are interested in solutions within a bounded state space \( \mathcal{D}_{state}(n) \), we can compute upper and lower bounds for each \( \alpha_i \), say \( L_i \leq \alpha_i \leq U_i \), and then use rejection sampling on the vector \( \alpha' \) to construct a uniform sample \( X \). It is important to note that we can construct such a uniform sample effeciently.

\subsection{\texttt{LIA\_UNIFORM\_SAMPLE} procedure.}
We give the details of the \texttt{LIA\_UNIFORM\_SAMPLE} in Algorithm \ref{alg: LIAUS}. \texttt{LIA\_UNIFORM\_SAMPLE} takes as input a polytope-lattice point $\mathcal{P}$, checks if it is contained in a proper affine subspace of $\mathbb{Z}^n$ (where $n$ is the dimension of the state space) by a call to the $\texttt{IS\_AFFINE\_SPACE}$ procedure. If so, it calls \texttt{REJECTION\_SAMPLING} where the input procedure $\texttt{V\_UNIFORM\_SAMPLE}$ is set to \texttt{LINEAR\_DIOPHANTINE\_UNIFORM\_SAMPLER}. If not, it first computes a bounding hyperrectangle $\mathcal{R}$, and then calls \texttt{REJECTION\_SAMPLING} where the input procedure $\texttt{V\_UNIFORM\_SAMPLE}$ is set to a procedure which constructs a uniform sample from $\mathcal{R}$ (called \texttt{UNIFORM\_SAMPLE\_$\mathcal{R}$}).

\begin{algorithm}[H]
\caption{\texttt{LIA\_UNIFORM\_SAMPLE}} \label{alg: LIAUS}
\begin{algorithmic}[1]
\State \textbf{Input: } $\mathcal{P}$
\If{ \texttt{IS\_AFFINE\_CONTAINED}($\mathcal{P}$)} 
\State $(AX = B) \longleftarrow$ $\texttt{GET\_AFFINE\_SPACE}(\mathcal{P})$
\State \Return \texttt{REJECTION\_SAMPLING} ( $\mathcal{P}$,  \texttt{LINEAR\_DIOPHANTINE\_UNIFORM\_SAMPLER(A, B)})
\Else
\State $\mathcal{R} \longleftarrow \texttt{BOUNDING\_HYPERRECTANGLE}(\mathcal{P})$
\State \Return \texttt{REJECTION\_SAMPLING}($\mathcal{P}$, \texttt{UNIFORM\_SAMPLE\_$\mathcal{R}$})
\EndIf
\end{algorithmic}
\label{Alg: Affine Sampling}
\end{algorithm}

A key point of interest is how \texttt{IS\_AFFINE\_CONTAINED} checks if $\mathcal{P}$ is contained in a proper affine subspace. We implement \texttt{IS\_AFFINE\_CONTAINED} as a simple syntactic check procedure: it checks for equality predicates in the representation of $\mathcal{P}$. Although this implementation is sound, it is not complete, and we may falsely declare $\mathcal{P}$ to not be contained in an affine space when it actually is. However, completeness is not required, as sampling from the solution space of linear Diophantine equations is merely an optimization. Simply using rejection sampling from the bounding hyperrectangle will also produce uniform samples from $\mathcal{P}$, albeit more slowly. In practice, we do not encounter cases where our syntactic check implementation of \texttt{IS\_AFFINE\_CONTAINED} fails, hence our implementation suffices for our application.

Finally, for completeness, we describe how to construct uniform samples from \( U \) when the axis-aligned orthotope \( V \), containing \( U \), satisfies \( |V| \gg |U| \), but \( U \) is not contained in any proper affine subspace of \( \mathbb{R}^n \). Although this case does not typically arise in practice and hence isn't dealt with by Algorithm \ref{alg: LIAUS}, it is included here for completeness. In such instances, rather than finding a single axis-aligned orthotope containing \( U \), we can identify multiple orthotopes, say \( V_1, V_2, \dots, V_l \), for some hyperparameter \( l \), such that \( U \subseteq V_1 \sqcup V_2 \sqcup \dots \sqcup V_l \) and \( |V_1 \sqcup V_2 \sqcup \dots \sqcup V_l| \not\gg |U| \). Since \( U \) is not contained in any proper affine subspace of \( \mathbb{R}^n \), the required \( l \) will not be very large. And then we can use rejection sampling to generate uniform samples from $U$.

To generate a uniform sample from \( U \) using rejection sampling, we first need to uniformly sample from \( V_1 \sqcup V_2 \sqcup \dots \sqcup V_l \). This is done by selecting an index from \( [1, l] \), where index \( i \) is chosen with probability proportional to \( \Lambda(V_i) \) (i.e., the number of lattice points contained in \( V_i \)). Once an index, say \( k \), is selected, we then uniformly sample from \( V_k \), giving a uniform sample from \( V_1 \sqcup V_2 \sqcup \dots \sqcup V_l \).

\section{Supplementary Algorithms.} \label{App: suppAlg}
In this section, we give the algorithms for low-level procedures used in the paper.

\subsection{\texttt{ITERATED\_TRANSITION\_RELATION} Procedure.}
The algorithm for the \texttt{ITERATED\_TRANSITION\_RELATION} procedure is presented in Algorithm \ref{alg: ITR}. This procedure takes as input a set of states, \texttt{states}, and computes all $j_0$-iterated tails of these states for some hyperparameter $j_0$ by making repeated calls to the \texttt{TRANSITION\_RELATION} procedure. 

The \texttt{TRANSITION\_RELATION} procedure, detailed in Algorithm \ref{alg: TR}, takes as input a single state \texttt{hd} and returns all resulting tails for a single application of the transition relation $T$ on this state. To achieve this, it iterates over all piecewise linear integer relations $(\text{Linear\_Integer\_Relation}_i, B_i)$ in $T$. For each relation, it checks if \texttt{hd} is within the region defined by $B_i$. If \texttt{hd} belongs to $B_i$, the algorithm applies all possible linear functions defined in $\text{Linear\_Integer\_Relation}_i$ to \texttt{hd}, generating new states. These new states are collected and returned as the result.

\begin{algorithm}[H]
\caption{\texttt{ITERATED\_TRANSITION\_RELATION}} \label{alg: ITR}
\begin{algorithmic}[1]
\State \textbf{Input: } $T, states$
\State \textbf{Hyperparameters: } $j_0$
\State $iteratedTls \longleftarrow states$
\For{$j = 1$ to $j_0$}
\State $nextiteratedTls \longleftarrow \varnothing$
\For{$tl$ in $iteratedTls$}
\State $nextiteratedTls \longleftarrow nextiteratedTls \cup \texttt{TRANSITION\_RELATION}(tl)$
\EndFor
\State $iteratedTls \longleftarrow nextiteratedTls$
\EndFor
\State \Return $iteratedTls$
\end{algorithmic}
\label{Alg: Affine Sampling}
\end{algorithm}

\begin{algorithm}[H]
\caption{\texttt{TRANSITION\_RELATION}} \label{alg: TR}
\begin{algorithmic}[1]
\State \textbf{Input: } $T, hd$
\State $Tls \longleftarrow \varnothing$
\For{$( \text{Linear\_Integer\_Relation}_i , B_i)$ in $T$}
\If{$hd \in B_i$}
\For {\text{Linear\_Function} in $\text{Linear\_Integer\_Relation}_i$}
\State $Tls \longleftarrow Tls \cup \text{Linear\_Function}(hd)$
\EndFor
\EndIf
\EndFor
\State \Return $Tls$
\end{algorithmic}
\label{Alg: Affine Sampling}
\end{algorithm}

\subsection{\texttt{BOUNDED\_RANDOM\_WALK} Procedure.}
The \texttt{BOUNDED_RANDOM_WALK} algorithm, presented in Algorithm \ref{alg: BRW} takes as input a state space $\mathcal{X}$, a neighborhood function $\mathcal{N}$, a cost function $c$, an initial state $s_0$, and a number of iterations $t_{rw}$, and performs a random walk on the state space $\mathcal{X}$. Starting from the initial state $s_0$, it repeatedly samples a new state $s'$ uniformly from the neighborhood of the current state, updates the current state and its associated cost, and records the cost transitions. This process is repeated for $t_{rw}$ iterations, and the set of recorded transitions is returned.

\begin{algorithm} 
\caption{\texttt{BOUNDED\_RANDOM\_WALK}}\label{alg: BRW}
\begin{algorithmic}[1] 
\State \textbf{Input: } $(\mathcal{X}, \mathcal{N}, c , s_0, t_{rw})$
\State $s \longleftarrow s_0$
\State $c \longleftarrow c(s)$
\State $rv \longleftarrow \varnothing$
\For{$t = 1$ to $t_{rw}$}
\State $s' \longleftarrow \texttt{UNIFORM\_SAMPLE\_LIST}(\mathcal{N}(s))$
\State $c ' \longleftarrow c(s')$
\State $rv \longleftarrow rv \cup \{(c,c')\}$
\State $s \longleftarrow s'$
\State $c \longleftarrow c'$
\EndFor
\State \Return rv
\end{algorithmic}
\end{algorithm}

\subsection{\texttt{GET\_POSITIVE\_TRANSITIONS} Procedure.}
The \texttt{GET_POSITIVE_TRANSITIONS} algorithm, detailed in Algorithm \ref{alg: GPT}, filters a set of cost transitions, returning only those where the subsequent cost is greater than the current cost.
\begin{algorithm} 
\caption{\texttt{GET\_POSITIVE\_TRANSITIONS}}\label{alg: GPT}
\begin{algorithmic}[1] 
\State \textbf{Input: } $\delta$
\State $rv \longleftarrow \varnothing$
\For{ $(c,c')$ in $\delta$}
\If{ $c' > c$}
\State $rv \longleftarrow rv \cup \{ (c,c')\}$
\EndIf
\EndFor
\State \Return rv
\end{algorithmic}
\end{algorithm}

\subsection{\texttt{RANDOMIZED\_$\epsilon$\_NET} Procedure.}
Algorithm \ref{alg: enet} outlines the \texttt{RANDOMIZED\_$\epsilon$\_NET} procedure, which begins by determining the number of uniform samples $m$ required to construct such an $\epsilon$-net for $X \land \mathcal{D}_{state}$. These samples are generated using the \texttt{LIA\_UNIFORM\_SAMPLE} function (refer Appendix \ref{App: Uniform Sampling poly lIA}). Internally, \texttt{LIA\_UNIFORM\_SAMPLE} checks if $X$ is contained within an affine space by making simple syntactic checks on the representation of $X$. If so, it uses rejection sampling on the solution space of the corresponding linear Diophantine equation (refer to Section \ref{sec: enet}, Appendix \ref{App: Sampling from Affine Spaces.}); else, uniform samples are obtained via simple rejection sampling (refer to Section \ref{sec: enet}, Appendix \ref{App: Rejection Sampling}). \footnote{In practice, rather than directly constructing a randomized $\epsilon$-net for $X$, we construct randomized $\epsilon$-nets for each convex polytope $\mathcal{P}$ of $X$ (as $X$ is an LIA DNF, it is a union of convex polytopes $\mathcal{P}$ in $\mathbb{R}^n$). This approach is better because rejection sampling generates uniform samples from convex shapes much more efficiently. By Theorem \ref{Thm: HAW}, this approach ensures we obtain a randomized $\epsilon$-net for $X$ as a whole as well.}
\begin{algorithm}
\caption{\texttt{RANDOMIZED\_$\epsilon$\_NET}}\label{alg: enet}
\begin{algorithmic}[1]
\State \textbf{Input: } $(\epsilon, X, \delta, vc)$
\State $m \longleftarrow 0$
\While {$1 - 2\phi_{vc}(2m)2^{\frac{-\epsilon m}{2}} < \delta$}
\State $m \longleftarrow m + 1$
\EndWhile
\State $rv \longleftarrow \varnothing$
\For{$i = 1$ to $m$}
\State $rv \longleftarrow rv \bigcup \texttt{LIA\_UNIFORM\_SAMPLE}(X \land \mathcal{D}_{state})$ 
\EndFor
\State \Return{$rv$}
\end{algorithmic}
\end{algorithm}

\subsection{\texttt{INITIAL_DATASET} Procedure.}
The details of the \texttt{INITIAL\_DATASET} are given in Algorithm \ref{alg: initDS}. The \texttt{INITIAL\_DATASET} procedure first computes the dimension of the state space by computing the length of the coefficient vector in any atomic predicate of $P$ using the \texttt{GET\_DIM} procedure. It then sets the global variable $\epsilon$ accordingly. Next, it sets $+$, $HEAD(\rightarrow)$, and $-$ to be $\epsilon$-nets of $P$, $B$, and $\neg Q$ with respect to ellipsoids as ranges by invoking the \texttt{RANDOMIZED\_$\epsilon$\_NET} procedure. Finally, it computes $\rightarrow$ using $HEAD(\rightarrow)$ and the transition relation $T$, and outputs the constructed dataset.
\begin{algorithm}
\caption{\texttt{INITIAL\_DATASET}}\label{alg: initDS}
\begin{algorithmic}[1]
\State \textbf{Input: } $(P, B, T, Q)$
\State \textbf{Hyperparameters: } $(\epsilon_0, \delta_0)$
\State $n \longleftarrow \texttt{GET\_DIM}(P)$
\State $\epsilon \longleftarrow \epsilon_0$
\State $+ \longleftarrow \texttt{RANDOMIZED\_$\epsilon$\_NET} (\epsilon, P, \delta_0, \frac{n^2 + 3n}{2})$
\State $HEAD(\rightarrow) \longleftarrow \texttt{RANDOMIZED\_$\epsilon$\_NET} (\epsilon, B, \delta_0, \frac{n^2 + 3n}{2})$
\State $- \longleftarrow \texttt{RANDOMIZED\_$\epsilon$\_NET} (\epsilon, \neg Q, \delta_0, \frac{n^2 + 3n}{2})$
\State $\rightarrow \:\: \longleftarrow \{ (\vec{t}, \vec{t'}) \: : \: \vec{t} \in HEAD(\rightarrow) \land T(\vec{t}, \vec{t'}) \}$
\State \Return{$(+, \rightarrow , -)$}
\end{algorithmic}
\end{algorithm}

\subsection{\texttt{REFINED_DATASET} Procedure.}
The details of the \texttt{REFINED\_DATASET} are given in Algorithm \ref{alg: suppDS}. The \texttt{REFINED\_DATASET} procedure first computes the dimension of the state space from $P$ using the \texttt{GET\_DIM} procedure. It then sets a new, lower value for the global variable $\epsilon$ (by dividing the previous value by 2). Next, it sets $+$, $HEAD(\rightarrow)$, and $-$ to be $\epsilon$-nets of $P$, $B$, and $\neg Q$ with respect to ellipsoids as ranges by invoking the \texttt{RANDOMIZED\_$\epsilon$\_NET} procedure.\footnote{In practice, we use an optimization here: Using Theorem \ref{Thm: HAW}, to construct an $\epsilon$-net of $X$ with respect to some range space, we need to construct $m(\epsilon)$ uniform samples of $X$. If we already have an $\epsilon$-net of $X$ and want to construct an $\epsilon'$-net of $X$ for the same range space, where $\epsilon' < \epsilon$, we only need to add $m(\epsilon') - m(\epsilon)$ new uniform samples to the previously constructed $\epsilon$-net. Therefore, we do not need to discard the previously constructed $\epsilon$-net and can simply append to it to construct an $\epsilon'$-net. } Finally, it computes $\rightarrow$ using $HEAD(\rightarrow)$ and the transition relation $T$, and outputs the constructed dataset.

\begin{algorithm}
\caption{\texttt{REFINED\_DATASET}}\label{alg: suppDS}
\begin{algorithmic}[1]
\State \textbf{Input: } $(P, B, T, Q)$
\State \textbf{Hyperparameters: } $\delta_0$
\State $n \longleftarrow \texttt{GET\_DIM}(P)$
\State $\epsilon \longleftarrow \frac{\epsilon}{2}$
\State $+ \longleftarrow \texttt{RANDOMIZED\_$\epsilon$\_NET} (\epsilon, P, \delta_0, \frac{n^2 + 3n}{2})$
\State $HEAD(\rightarrow) \longleftarrow \texttt{RANDOMIZED\_$\epsilon$\_NET} (\epsilon, B, \delta_0, \frac{n^2 + 3n}{2})$
\State $- \longleftarrow \texttt{RANDOMIZED\_$\epsilon$\_NET} (\epsilon, \neg Q, \delta_0, \frac{n^2 + 3n}{2})$
\State $\rightarrow \:\: \longleftarrow \{ (\vec{t}, \vec{t'}) \: : \: \vec{t} \in HEAD(\rightarrow) \land T(\vec{t}, \vec{t'}) \}$
\State \Return{$(+, \rightarrow , -)$}
\end{algorithmic}
\end{algorithm}

\subsection{\texttt{REFINED_DATASET\_CRITERION} Procedure.}
The algorithm for this procedure is given in Algorithm \ref{alg: refineDS}.

\begin{algorithm}
\caption{\texttt{REFINE\_DATASET\_CRITERION}}\label{alg: refineDS}
\begin{algorithmic}[1]
\State \textbf{Input: } $t$
\State \textbf{Hyperparameters: } $t_{\text{refine}}$
\State \Return{$(t \bmod t_{\text{refine}} \equiv 0)$}
\end{algorithmic}
\end{algorithm}

\subsection{\texttt{CHC\_VERIFIER} Procedure.}
We detail the \texttt{CHC_VERIFIER} procedure in Algorithm \ref{alg: chcVERIFIER}. \texttt{CHC\_VERIFIER} begins by converting the CHC to a bounded CHC, adding constraints on the state space variables through the \texttt{STATE\_SPACE} procedure. The \texttt{STATE\_SPACE} procedure returns an LIA formula asserting that the input state must lie in $\mathcal{D}_{state}$. This step differs slightly for inductive clauses compared to fact or query clauses, as the inductive clause quantifies over a pair of states, whereas the other two quantify over a single state. The \texttt{INDUCTIVECLAUSE} procedure is used to check if the input CHC is an inductive clause. Next, the \texttt{CHC\_VERIFIER} procedure checks if the \texttt{BOUNDED\_CHC} is valid by calling an SMT solver to verify \texttt{BOUNDED\_CHC} using the \texttt{SMT\_VALID} procedure. If the check is valid, it returns \texttt{True} with no counterexamples. If not, it repeatedly probes the SMT solver for up to \texttt{\text{cex}_{\text{max}}} models of the negation of the \texttt{BOUNDED\_CHC} via the \texttt{SMT\_MODEL} procedure. To ensure we sample dispersed CEX, we add the condition $\bigwedge_{\vec{t} \in \texttt{CEX}} \texttt{L1_NORM}(\vec{s} - \vec{t}) \ge d_0$, where \texttt{CEX} stores the already sampled cex models, $d_0 > 0$ is a hyperparameter, and \texttt{L1\_NORM} is a procedure that returns the L1-norm of a vector in $\mathbb{R}^n$. Note that this process is slightly different for inductive clauses, which have pairs of states as cex, wherin we ensure that the heads of the ICE pairs are dispersed. Finally, \texttt{CHC\_VERIFIER} returns \texttt{False} and the set of sampled cex.
\begin{algorithm}
\caption{\texttt{CHC\_VERIFIER}}\label{alg: chcVERIFIER}
\begin{algorithmic}[1]
\State \textbf{Input: } $\texttt{CHC}$
\State \textbf{Hyperparameters: } $\text{cex}_{\text{max}}$, $d_0$
\If{$ \neg \texttt{INDUCTIVECLAUSE}(\texttt{CHC})$}
\State $\texttt{BOUNDED\_CHC} \longleftarrow \forall \vec{s} \:\; \texttt{CHC}(\vec{s}) \land \texttt{STATE\_SPACE}(\vec{s}) $
\Else
\State $\texttt{BOUNDED\_CHC} \longleftarrow \forall (\vec{s}, \vec{s'}) \:\; \texttt{CHC}(\vec{s}, \vec{s'}) \land \texttt{STATE\_SPACE}(\vec{s}) \land \texttt{STATE\_SPACE}(\vec{s'}) $
\EndIf
\State $\texttt{CORRECT} \longleftarrow \texttt{True}$
\State $\texttt{CEX} \longleftarrow \varnothing$
\If{ $\neg \texttt{SMT\_VALID}(  \texttt{BOUNDED\_CHC}  ) $} 
\State $\texttt{CORRECT} \longleftarrow \texttt{False}$
\For{$i = 1$ to $\text{cex}_{\text{max}}$}
\If{$ \neg \texttt{INDUCTIVECLAUSE}(\texttt{CHC})$}
\State { \small $\texttt{CEX} \longleftarrow \texttt{CEX} \bigcup \texttt{SMT\_MODEL}\Bigg( \forall \vec{s} \:\; \neg \Big( \texttt{BOUNDED\_CHC}(\vec{s}) \Big) \land \Big( \bigwedge_{\vec{t} \in \texttt{CEX}}  \texttt{L1\_NORM}(\vec{s} - \vec{t} ) \ge  d_0 \Big) \Bigg)$ }
\Else
\State { \small $\texttt{CEX} \longleftarrow \texttt{CEX} \bigcup \texttt{SMT\_MODEL}\Bigg( \forall (\vec{s}, \vec{s'}) \:\; \neg \Big( \texttt{BOUNDED\_CHC}(\vec{s}, \vec{s'}) \Big) \land \Big( \bigwedge_{(\vec{t}, \vec{t'}) \in \texttt{CEX}}  \texttt{L1\_NORM}(\vec{s} - \vec{t} ) \ge  d_0 \Big) \Bigg)$ }
\EndIf
\EndFor
\EndIf
\State \Return{$(\texttt{CORRECT}, \texttt{CEX} )$}
\end{algorithmic}
\end{algorithm}

\subsection{\texttt{ITERATED\_IMPLICATION\_PAIRS} Procedure.}

We give the algorithm of the \texttt{ITERATED\_IMPLICATION\_PAIRS} procedure in Algorithm \ref{alg:iteratedICE}.
\begin{algorithm}
\caption{\texttt{ITERATED\_IMPLICATION\_PAIRS}}\label{alg:iteratedICE}
\begin{algorithmic}[1]
\State \textbf{Input: } $T, ICEpairs$
\State \textbf{Hyperparameters: } $k_0$
\State $iteratedICEpairs \;\; \longleftarrow \varnothing$
\For{$(hd,tl)$ in $ICEpairs$}
\State $tail \longleftarrow tl$
\For{$i = 1$ to $k_0-1$}
\State $iteratedTails \longleftarrow \texttt{ITERATED\_TRANSITION\_RELATION}(T, tail)$
\State $newTail \longleftarrow \texttt{RANDOM\_CHOICE}(iteratedTails)$
\If{$\neg B(newTail)$}
\State \textbf{break}
\EndIf
\State $tail \longleftarrow newTail$
\EndFor
\State $iteratedICEpairs  \longleftarrow iteratedICEpairs \cup \{ (hd, tail) \}$
\EndFor
\State \Return{$iteratedICEpairs$}
\end{algorithmic}
\end{algorithm}

The \texttt{ITERATED\_IMPLICATION\_PAIRS} procedure takes as input the piecewise linear integer transition relation $T$ and a set of ICE pairs and returns a set of iterated ICE pairs. For each ICE pair in its input, it iteratively computes the set of tails for the current tail (note that our transition relation is nondeterministic, allowing for multiple tails for a single head) using the \texttt{ITERATED\_TRANSITION\_RELATION} procedure (refer Appendix \ref{App: suppAlg}) and randomly selects a new iterated tail using the \texttt{RANDOM\_CHOICE} procedure. It then checks if the iterated tail remains within the loop guard; if not, it terminates the iteration; otherwise, it continues iterating up to a maximum depth of \( k_0 \), a hyperparameter. Finally, it stores the newly computed iterated ICE pair. 

\subsection{\texttt{SIMULATED\_ANNEALING\_PERIODIC\_CHECK} Procedure.}
The \texttt{SIMULATED\_ANNEALING\_PERIODIC\_CHECK} algorithm, described in Algorithm \ref{alg:SimAPeriodicCheck}, modifies the \\ \texttt{SIMULATED\_ANNEALING} algorithm described in algorithm \ref{alg: SimA} by periodically checking if an invariant has been found by other threads. It takes as input the thread ID $tID$, the search space $\mathcal{X}$, the neighborhood function $\mathcal{N}$, the cost function $c$, and the initial invariant $I_0$, along with hyperparameters $t_{max}$, $t_{check}$ and $p$. The shared global memory $IList$ is used for communication between threads. The algorithm follows the simulated annealing procedure but includes periodic checks (every $t_{\text{check}}$ iterations) to see if any other thread has found a valid invariant, in which case it terminates early. And like before, if the cost $c$ reaches zero, indicating an approximate invariant, it returns the invariant; otherwise, it continues to search until the maximum number of iterations is reached.

\begin{algorithm} 
\caption{\texttt{SIMULATED\_ANNEALING\_PERIODIC\_CHECK}}\label{alg:SimAPeriodicCheck}
\begin{algorithmic}[1] 
\State \textbf{Input: } $(tID, \mathcal{X}, \mathcal{N}, c , I_0)$
\State \textbf{Hyperparameters: } $(t_{max}, p, t_{check})$
\State \textbf{Shared Global Memory: } $IList$
\State $I \longleftarrow I_0$
\State $c \longleftarrow c(I)$
\State $T_0 \longleftarrow \texttt{INITIAL\_TEMPERATURE}(\mathcal{X}, \mathcal{N}, c)$
\For{$t = 1$ to $t_{max}$}
\If {$c = 0$} 
\State \Return $I$
\EndIf
\If{$t \bmod t_{\text{check}} \equiv 0$}
\For{$i=1$ to $p$}
\If{$(i \neq tID) \land (IList[i] \neq \texttt{Fail})$}
\State \Return \texttt{Fail}  
\EndIf
\EndFor
\EndIf
\State $T \longleftarrow \frac{T_0}{\ln 1 + t}$
\State $s' \longleftarrow \texttt{UNIFORM\_SAMPLE\_LIST}(\mathcal{N}(I))$
\State $c ' \longleftarrow c(I')$
\State $a \longleftarrow e^{-\frac{(c' - c)^+}{T}}$
\If{$\texttt{UNIFORM\_SAMPLE\_INTERVAL}([0,1]) \le a$}
\State $I \longleftarrow I'$.
\State $c \longleftarrow c'$
\EndIf
\EndFor
\State \Return \texttt{FAIL}
\end{algorithmic}
\end{algorithm}

\subsection{\texttt{PARALLEL\_SIMULATED\_ANNEALING} Procedure.}
We give the algorithm for \texttt{PARALLEL\_SIMULATED\_ANNEALING} in Algorithm \ref{alg:Parallel SA}. This procedure takes as input a dataset $\mathcal{S}$, an initial invariant $I_0$, and hyperparameters $p$ and $kList$. The shared global memory, $IList$, is used to store the results from the parallel threads. For each value of $k$ in $kList$, the procedure initializes the search space $\mathcal{X}[k]$, the neighborhood function $\mathcal{N}$, and the cost function $c$ by calling \texttt{GET\_SA\_PARAMETERS}. Then, it forks $p$ threads, each running the \texttt{SIMULATED\_ANNEALING\_PERIODIC\_CHECK} procedure with the respective parameters. After joining all threads, it returns the first successful invariant from $IList$ or fails if none are found. The \texttt{SIMULATED\_ANNEALING\_PERIODIC\_CHECK} algorithm, as described in Algorithm \ref{alg:SimAPeriodicCheck}, is a variant of \texttt{SIMULATED\_ANNEALING} (refer Algorithm \ref{alg: SimA}) that includes checks across threads using shared global memory to monitor progress.

\begin{algorithm} 
\caption{\texttt{PARALLEL\_SIMULATED\_ANNEALING}}\label{alg:Parallel SA}
\begin{algorithmic}[1] 
\State \textbf{Input: } $(\mathcal{S}, I_0)$
\State \textbf{Hyperparameters: } $p$ , $kList$
\State \textbf{Shared Global Memory:} $IList$
\For{$i=1$ to $p$}
\State $k \longleftarrow kList[i]$
\State $(\mathcal{X}[k], \mathcal{N}, c) \longleftarrow \texttt{GET\_SA\_PARAMETERS}(\mathcal{S}, k)$
\State $IList[i] \longleftarrow  \texttt{FORK}(\texttt{SIMULATED_ANNEALING_PERIODIC_CHECK}(i, \mathcal{X}[k], \mathcal{N}, c, I_0))$
\EndFor
\For{$i = 1$ to $p$} \texttt{JOIN}(i)
\EndFor
\For{$i = 1$ to $p$}
\If{$IList[i] \neq \texttt{Fail}$} \Return $IList[i]$
\EndIf
\EndFor
\State \Return \texttt{Fail}
\end{algorithmic}
\end{algorithm}

\subsection{\texttt{INITIAL\_INVARIANT} Procedure.}
We provide the algorithm for \texttt{INITIAL\_INVARIANT} in Algorithm \ref{alg:initInv}. This procedure takes a dataset $\mathcal{S}$ as input along with hyperparameters $l_0$ and $k$ (maximum coefficient bound). It starts by computing the dimension of the state space using \texttt{GET\_DIM}. It then conducts $l_0$ trials, where each trial uniformly samples coefficients for each predicate in the invariant and sets the constant of each predicate to zero. The procedure concludes by selecting and returning the invariant with the lowest cost. We next give the full parallel SA search algorithm.

\begin{algorithm}
\caption{\texttt{INITIAL\_INVARIANT}}\label{alg:initInv}
\begin{algorithmic}[1]
\State \textbf{Input: } $\mathcal{S}$
\State \textbf{Hyperparameters: } $l_0$, $k$
\State $n \longleftarrow \texttt{GET\_DIM}(\mathcal{S})$
\State $I \longleftarrow \varnothing$
\State $c \longleftarrow \infty$
\For{$l = 1$ to $l_0$}
\State $I' \longleftarrow \texttt{False}$
\For{$j = 1$ to $d$}
\State $CC \longleftarrow \texttt{True}$
\For{$i = 1$ to $c$}
\State $\vec{w} \longleftarrow \texttt{UNIFORM\_SAMPLE\_LIST}([-k,k]^n)$
\State $CC \longleftarrow CC \land (\vec{w}\cdot\vec{x} \le 0)$
\EndFor
\State $I' \longleftarrow I' \lor CC$
\EndFor
\State $c' \longleftarrow c(I')$
\If{$c' < c$}
\State $I \longleftarrow I'$
\State $c \longleftarrow c'$
\EndIf
\EndFor
\State \Return I
\end{algorithmic}
\end{algorithm}

\section{Proofs for Lemmas and Theorems.} \label{App: Proofs for Lemmas in Methodology}

\begin{proof}[Proof of Lemma \ref{Lemma: CEX spaces}]
To prove the first half of the lemma we only need to observe that the negation of a CHC clause of the form $Body \rightarrow Head$ is given by $Body \land \neg Head$. The latter half of the lemma follows trivially from the following 5 results:
\begin{enumerate}
    \item If $p$ is a predicate derived from LIA theory, then $\neg p$ is also a single predicate derived from LIA theory (where we follow the convention of using only inequailty LIA predicates; and that all equality LIA predicates are originally represented as two inequality predicates).
    \item If $F$ is a $d$-$c$ DNF, then $\neg F$ is a $c^d$-$d$ DNF.
    \item If $F_1$ is a $d_1$-$c_1$ DNF and $F_2$ is a $d_2$-$c_2$ DNF, then $F_1 \lor F_2$ is a $(d_1 + d_2)$-$\max (c_1, c_2)$ DNF and $F_1 \land F_2$ is a $(d_1d_2)$-$(c_1 + c_2)$ DNF.
    \item Let $T$ be a $d_T$-piecewise linear integer $r_T$-relation over $B$ where each partition block is a $d_{B_T}$-$c_{B_T}$ DNF and $F$ is a $d$-$c$ DNF, then $T^{-1}(F) := \{ \vec{t} \: : \: T(\vec{t}, \vec{t'}) \land  F(\vec{t'}) \}$ can be represented as $(dd_tr_Td_{B_T})$-$(c + c_{B_T})$ DNF.
    \item Let $I$ be an approximate invariant. If $c_I^\rightarrow = \{ (\vec{t}, \vec{t'}) \: : \: I(\vec{t}) \land B(\vec{t}) \land T(\vec{t}, \vec{t'}) \land \neg I(\vec{t'}) \}$, then $HEAD(c_I^\rightarrow) = \{ \vec{t} \: : \: I(\vec{t}) \land B(\vec{t}) \land \vec{t} \in T^{-1}(\neg I)\}$ where $T^{-1}(\cdot)$ is defined in the previous point.
\end{enumerate}
Points (1)--(3) and (5) are straightforward. We now prove (4):

It is given that there exists a \(d_T\)-sized partition \(\{B_i\}\) of \(B\), where each \(B_i\) is a \(d_{B_T}\)-\(c_{B_T}\) DNF such that \(T(\vec{t}, \vec{t'}) \equiv \bigvee_{i=1}^{d_T} B_i(\vec{t}) \rightarrow \bigvee_{j=1}^{r} (\vec{t'} = t_{ij}(\vec{t}))\), where \(t_{ij}\) are linear functions with integer coefficients.

If \(F = \bigvee_{k=1}^{d} C^F_k\) where \(C^F_k\) are \(c\)-cubes (i.e., the conjunction of \(c\) predicates), then \(T^{-1}(F)\) can be represented by the formula \(\bigvee_{k=1}^{d} T^{-1}(C^F_k)\). Furthermore, \(T^{-1}(C^F_k)\) can be represented by the formula \(\bigvee_{i=1}^{d_T} \left( B_i \land \left( \bigvee_{j=1}^{r} t_{ij}^{-1}(C^F_k) \right) \right)\), where \(t_{ij}^{-1}(C^F_k)\) is defined as the set \(\{\vec{t} : \vec{t'} = t_{ij}(\vec{t}) \land C^F_k(\vec{t'})\}\).

If the dimension of the state space is \(n\), then any \(c\)-cube \(C^F_k\) can be represented by the matrix equation \(A(C^F_k)X \leq b(C^F_k)\), where \(A(C^F_k)\) is a \((n + 1) \times c\) matrix, and \(X\) and \(b(C^F_k)\) are \((n + 1)\)-dimensional column vectors (the additional dimension is used to model linear transformations in \(\mathbb{R}^n\) by homogeneous transformations in \(\mathbb{R}^{n+1}\)).

Since each \(t_{ij}\) is a linear function in \(\mathbb{R}^n\), it can be modeled as a \((n + 1) \times (n + 1)\) matrix denoted \(M(t_{ij})\). It is then straightforward to see that the set \(t_{ij}^{-1}(C^F_k)\) can be represented by the matrix equation \((M(t_{ij})A(C^F_k))X \leq b(C^F_k)\). This matrix equation defines a \(c\)-cube, which implies that \(t_{ij}(C^F_k)\) can be represented by a \(c\)-cube (specifically the one defined by the previous matrix equation).

Then by (3), we get that \(T^{-1}(C^F_k)\) can be represented by a \(d_T r_T d_{B_T}-(c_{B_T} + c)\) DNF, which implies the result.

\end{proof}

\begin{proof}[Proof of Theorem \ref{Thm: DS Loop Conv Guar}]
Let \( I(t) \) be any approximate \( d \)-\( c \) LIA invariant after \( k t_{\text{refine}} + k' \) iterations of the DS Loop for some \( k \in \mathbb{N} \) and \( 0 < k' < t_{\text{refine}} \). If the dataset sampled at this stage is \((+, \rightarrow, -)\), then \( + \), \( \text{HEAD}(\rightarrow) \), and \( - \) must contain subsets of \( P \), \( B \), and \( \neg Q \), respectively, such that these subsets are \( \frac{\epsilon_0}{2^k} \)-nets of \( P \), \( B \), and \( \neg Q \) with respect to ellipsoids as ranges with probability at least \( \delta_0 \) each.

By Lemma \ref{Lemma: CEX spaces}, for any counterexample dataset \( \mathcal{C}_I = (c_I^+, c_I^\rightarrow, c_I^-) \) for \( I \), \( c_I^+ \), \( c_I^\rightarrow \), and \( c_I^- \) are unions of some polytopes in \( \mathbb{R}^n \). Let \( \mathcal{P} \) be any such polytope of \( c_I^+ \) and let \( \mathcal{E}_{inn}(\mathcal{P}) \) and \( \mathcal{E}_{out}(\mathcal{P}) \) be the inner and outer Lowner-John ellipsoids for \( \mathcal{P} \), respectively. By the definition of an \( \epsilon \)-net, we must have \( \Lambda(\mathcal{E}_{inn}(\mathcal{P})) < \frac{\epsilon_0}{2^k} \cdot \Lambda(P) \).

Using Theorem \ref{Thm: JLEllipsoid} and Theorem \ref{Thm: Ratio of Ellipses}, we have:
\[
\Lambda(\mathcal{E}_{out}(\mathcal{P})) \leq \frac{\epsilon_0}{2^k} n^n \left( \frac{1 + \frac{C}{n^{2 - \frac{2}{n+1}}}}{1 - C} \right) \cdot \Lambda(P)
\]
where the constant \( C > 0 \) is defined in Theorem \ref{Thm: Ratio of Ellipses}. As \( \mathcal{P} \subseteq \mathcal{E}_{out}(\mathcal{P}) \), this implies that:
\[
\Lambda(\mathcal{P}) \leq \frac{\epsilon_0}{2^k} n^n \left( \frac{1 + \frac{C}{n^{2 - \frac{2}{n+1}}}}{1 - C} \right) \cdot \Lambda(P)
\]

By Lemma \ref{Lemma: CEX spaces}, there are at most \( c^d d_P \) such polytopes \( \mathcal{P} \), which implies that:
\[
|c_I^+| \leq c^d d_P \frac{\epsilon_0}{2^k} n^n \left( \frac{1 + \frac{C}{n^{2 - \frac{2}{n+1}}}}{1 - C} \right) \cdot \Lambda(P)
\]

In a similar manner, we can bound \( |c_I^\rightarrow| \) and \( |c_I^-| \).

Additionally, there are no more positive counterexamples for \( I \) if \( \Lambda(\mathcal{P}) < 1 \), which happens with probability at least \( \delta_0 \) when:
\[
\frac{\epsilon_0}{2^k} n^n \left( \frac{1 + \frac{C}{n^{2 - \frac{2}{n+1}}}}{1 - C} \right) \cdot \Lambda(P) < 1 \Leftrightarrow k > \log_2 \left( \epsilon_0 n^n \left( \frac{1 + \frac{C}{n^{2 - \frac{2}{n+1}}}}{1 - C} \right) \cdot \Lambda(P) \right)
\]

Similarly, with probability at least \( \delta_0 \), when \( k >  \log_2 \left( \epsilon_0 n^n \left( \frac{1 + \frac{C}{n^{2 - \frac{2}{n+1}}}}{1 - C} \right) \cdot \Lambda(B) \right) \), there are no more ICE pairs and, with probability at least \( \delta_0 \), when \( k >  \log_2 \left( \epsilon_0 n^n \left( \frac{1 + \frac{C}{n^{2 - \frac{2}{n+1}}}}{1 - C} \right) \cdot \Lambda(\neg Q) \right) \), there are no more negative cex. This implies that with probability at least \( \delta_0^3 \), there are no cex for an approximate invariant when \( k >  \log_2 \left( \epsilon_0 n^n \left( \frac{1 + \frac{C}{n^{2 - \frac{2}{n+1}}}}{1 - C} \right) \cdot \max( \Lambda(P), \Lambda(B), \Lambda(\neg Q)) \right) \). Consequently, we halt and the approximate invariant is the correct loop invariant.
\end{proof}

\begin{lemma} \label{Lemma: SA constant bound}
    Assume \(\mathcal{D}_{\text{state}}\) has the origin as its center of symmetry. Then if \(\mathcal{X}^*(k)\) denotes an invariant space with no constraint on the constants of its candidate invariants, i.e.,
\[ 
\mathcal{X}^*(k) = \left\{ \Big( \forall \vec{x} \in \mathcal{D}_{\text{state}} \: \bigvee_{i=1}^d \bigwedge_{j=1}^c \vec{w}_{ij} \cdot \vec{x} \le b_{ij} \Big) \: : \:
    \forall i \forall j \:  0 < \|\vec{w}_{ij}\|_{\infty} \le k  \right\} 
\]
then we have \(\mathcal{X}(k) = \mathcal{X}^*(k)\), where any two invariants \(I \in \mathcal{X}(k)\) and \(I' \in \mathcal{X}^*(k)\) are equal if \(I \leftrightarrow I'\).
\end{lemma}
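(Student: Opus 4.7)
The forward inclusion $\mathcal{X}(k) \subseteq \mathcal{X}^*(k)$ is immediate from the definitions, since $\mathcal{X}(k)$ only imposes an extra constraint on the constants. For the reverse direction (under the identification $I \leftrightarrow I'$ over $\mathcal{D}_{\text{state}}$), I would argue that any predicate in an $\mathcal{X}^*(k)$-candidate whose constant lies outside $[-k', k']$ is either \emph{vacuously true} or \emph{vacuously false} over $\mathcal{D}_{\text{state}}$, and hence can be rewritten into $\mathcal{X}(k)$ without changing semantics.

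The key analytic input is a Cauchy-Schwarz estimate. From $\|\vec{w}_{ij}\|_\infty \le k$ we get $\|\vec{w}_{ij}\|_2 \le k\sqrt{n}$, and the origin-symmetry assumption on $\mathcal{D}_{\text{state}}$ gives $\|\vec{x}\|_2 \le \rho(\mathcal{D}_{\text{state}})$ for all $\vec{x} \in \mathcal{D}_{\text{state}}$; together these yield $|\vec{w}_{ij} \cdot \vec{x}| \le k\sqrt{n}\,\rho(\mathcal{D}_{\text{state}}) = k'$. Hence $\vec{w}_{ij} \cdot \vec{x} \le b_{ij}$ holds for every $\vec{x} \in \mathcal{D}_{\text{state}}$ when $b_{ij} > k'$, and fails for every $\vec{x} \in \mathcal{D}_{\text{state}}$ when $b_{ij} < -k'$.

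Given $I = \bigvee_{i=1}^d \bigwedge_{j=1}^c (\vec{w}_{ij} \cdot \vec{x} \le b_{ij}) \in \mathcal{X}^*(k)$, I would construct an equivalent $I' \in \mathcal{X}(k)$ by processing each predicate as follows. For each predicate with $b_{ij} > k'$, substitute $b_{ij} \mapsto k'$; the new predicate is still always true, so the surrounding cube is unchanged semantically. For each cube containing a predicate with $b_{ij} < -k'$, the entire cube is globally false, so replace that cube by a canonical always-false $c$-cube whose atomic predicates all lie in $\mathcal{X}(k)$. Every remaining $b_{ij}$ already lies in $[-k', k']$, so after these rewrites $I' \in \mathcal{X}(k)$ and $I \leftrightarrow I'$ over $\mathcal{D}_{\text{state}}$.

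The principal obstacle is producing a canonical always-false $c$-cube inside $\mathcal{X}(k)$. For $c \ge 2$ this is routine: use the contradictory pair $x_1 \le -1$ and $-x_1 \le -1$ (both have coefficient $L_\infty$-norm $1 \le k$ and constant $-1$ with $|-1| \le k'$), padding any remaining slots with a trivially-true predicate like $x_1 \le k'$. For $c = 1$ a single always-false predicate with bounded constant is needed; taking $\vec{w}' = (k, k, \ldots, k, k-1)$ over the symmetric box $\mathcal{D}_{\text{state}}$ of half-width $M$ gives $\min_{\vec{x} \in \mathcal{D}_{\text{state}}} \vec{w}' \cdot \vec{x} = -M(nk - 1) > -k'$, so one can choose $b' = -M(nk-1) - 1$, yielding an always-false predicate with $|b'| \le k'$ whenever $nk \ge 2$. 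The genuinely degenerate case $nk = 1$ would need to be handled at the level of the whole DNF (e.g.\ by replacing an always-false $I$ by $d$ duplicates of any always-false cube from $I$ when one exists, or otherwise by observing that this edge case is vacuous under the problem's standing assumptions).
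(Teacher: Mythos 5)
Your core argument is the same as the paper's: clamp any constant exceeding $k'$ down to $k'$, and justify the equivalence via the Cauchy--Schwarz estimate $|\vec{w}_{ij}\cdot\vec{x}| \le \|\vec{w}_{ij}\|_2\|\vec{x}\|_2 \le k\sqrt{n}\,\rho(\mathcal{D}_{\text{state}}) = k'$ on the origin-symmetric domain. Where you differ is in the treatment of constants below $-k'$: the paper writes ``WLOG let $b > k'$'' and never addresses the vacuously false case at all, whereas you correctly observe that such a predicate cannot simply be clamped (clamping to $-k'$ can flip it to satisfiable at boundary lattice points) and must instead be replaced by an always-false cube expressible inside $\mathcal{X}(k)$. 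That extra care is a genuine improvement over the paper's proof, and your $c \ge 2$ construction (the contradictory pair $x_1 \le -1$, $-x_1 \le -1$) is fine.

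However, your $c = 1$ witness contains an arithmetic slip. With $\vec{w}' = (k,\dots,k,k-1)$ over the box of half-width $M$ you have $\min_{\vec{x}} \vec{w}'\cdot\vec{x} = -M(nk-1)$, and the requirement $M(nk-1)+1 \le k' = k\sqrt{n}M$ amounts to $k(n-\sqrt{n}) \lesssim 1$, which already fails at $n=3$, $k=1$ (there $nk-1 = 2 > \sqrt{3}$), so your claim that the construction works ``whenever $nk \ge 2$'' is false. The repair is easier than your construction: take $\vec{w}' = \vec{e}_1$ and $b' = -M-1$; this is always false on $\mathcal{D}_{\text{state}}$ since $x_1 \ge -M$, and $|b'| = M+1 \le k\sqrt{n}M = k'$ whenever $k\sqrt{n} \ge 1 + 1/M$, i.e.\ in all but the degenerate case $n=k=1$ with tiny $M$ (which you already flag as needing separate handling). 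With that substitution your proof is complete and is strictly more careful than the one in the paper.
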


\begin{proof}[Proof of Lemma \ref{Lemma: SA constant bound}]
    To show the lemma, it suffices to show that for any LIA predicate $p \equiv \vec{w}^T \vec{x} \le b$ s.t. $|| \vec{w} ||_{\infty} \le k$, there exists another LIA predicate $p' \equiv \vec{w}^T \vec{x} \le b'$ s.t. $b' \le k'$ and $p \leftrightarrow p'$. WLOG let $b > k'$. Then set $b' = k'$, we can clearly check that $\forall \vec{x} p'(\vec{x}) \Rightarrow p(\vec{x})$ as $b' < b$. Now suppose $\exists \vec{x} \in \mathcal{D}_{state} \: p(\vec{x}) \land \neg p'(\vec{x})$, which implies $|| \vec{x} ||_2 > \frac{b'}{||\vec{w}||_2} > \rho(\mathcal{D}_{state})$ which further implies that $\vec{x} \not\in \mathcal{D}_{state}$, which is a contradiction. This proves the lemma.
\end{proof}

\begin{proof}[Proof of Lemma \ref{Lemma: Properties of delta}]
(1) and (3) trivially hold. For (2), we show the following property for $\delta_{approx}$:
\[
(\forall I)  (\forall \vec{t}_1, \vec{t}_2 \in \mathbb{Z}^n) \quad \delta_{approx}(I, \vec{t}_1 + \vec{t}_2) \le \delta_{approx}(I, \vec{t}_1) +  || \vec{t}_2|| 
\]
To show this property, note that from $(x+y)^+ \le (x)^+ + (y)^+$ and $(\alpha x)^+ = |\alpha| \cdot (x)^+$ for $\alpha \in \mathbb{R}$, we have
\begin{align*}
\delta_{approx}(\vec{w}^T\vec{x} \le b, \vec{t}_1 + \vec{t}_2) &=  \frac{(\vec{w}^T (\vec{t}_1 + || \vec{t}_2 || \cdot \hat{\vec{t}}_2) - b_i)^+}{||\vec{w}||} \\
&\le  \frac{(\vec{w}^T \vec{t}_1 - b_i)^+}{||\vec{w}||} + || \vec{t}_2 || \cdot \frac{(\vec{w}^T \hat{\vec{t}}_2)^+}{||\vec{w}||}  \\
&\le \delta_{approx}(\vec{w}^T\vec{x} \le b, \vec{t}_1) +  || \vec{t}_2 || 
\end{align*}

Now from this, we can show that, 
\begin{align*}
\delta_{approx}(\bigwedge_{i = 1}^c \vec{w}_i^T \vec{x} \le b_i, \vec{t}_1 + \vec{t}_2) &= \frac{1}{c} \cdot \sum_{i=1}^c \delta_{approx}(\vec{w}_i^T\vec{x} \le b_i, \vec{t}_1 + \vec{t}_2) \\
&\le \frac{1}{c} \cdot \sum_{i=1}^c \Big(\delta_{approx}(\vec{w}_i^T\vec{x} \le b_i, \vec{t}_1) +  || \vec{t}_2 || \Big) \\
&= \delta_{approx}(\bigwedge_{i=1}^c \vec{w}_i^T\vec{x} \le b_i, \vec{t}_1) + || \vec{t}_2 ||.
\end{align*}

Now suppose that $\delta_{approx}( \bigvee_{j=1}^d \bigwedge_{i=1}^c \vec{w}_{ij}^T \vec{x} \le b_{ij} , \vec{t}_1 ) = \delta_{approx}( \bigwedge_{i=1}^c \vec{w}_{ij_0}^T \vec{x} \le b_{ij_0} , \vec{t}_1 )$ for some $j_0 \in \mathbb{Z}_{[1,d]}$. Then,
\begin{align*}
\delta_{approx}( \bigvee_{j=1}^d \bigwedge_{i=1}^c \vec{w}_{ij}^T \vec{x} \le b_{ij} , \vec{t}_1 + \vec{t}_2 ) &= \min_{1 \le j \le d} \delta_{approx}( \bigwedge_{i=1}^c \vec{w}_{ij}^T \vec{x} \le b_{ij} , \vec{t}_1 + \vec{t}_2) \\
&\le \delta_{approx}( \bigwedge_{i=1}^c \vec{w}_{ij_0}^T \vec{x} \le b_{ij_0} , \vec{t}_1 + \vec{t}_2) \\
&\le \delta_{approx}( \bigwedge_{i=1}^c \vec{w}_{ij_0}^T \vec{x} \le b_{ij_0} , \vec{t}_1 ) + ||\vec{t}_2 || \\
&= \delta_{approx}( \bigvee_{j=1}^d \bigwedge_{i=1}^c \vec{w}_{ij}^T \vec{x} \le b_{ij} , \vec{t}_1 ) +  ||\vec{t}_2 ||.
\end{align*}
which proves the property stated above.

Let $\vec{t} \in \mathcal{D}_{state}$. Then there exists $\vec{t}^* \in \{ \vec{t} \in  \mathcal{D}_{state} \: : \: I(\vec{t})\}$ such that $d_I(\vec{t}) = || \vec{t} - \vec{t}^*||$. Then by the previous result, we have 
\[
\delta_{approx}(I, \vec{t}^* + (\vec{t} - \vec{t}^*)) \le \delta_{approx}(I, \vec{t}^*)  +  || \vec{t} - \vec{t}^*|| \le  d_I(\vec{t}).
\]
\end{proof}

\begin{lemma} \label{Lemma: normProp}
Let \( F_{\alpha, \beta} \) be the normalization function for \(\alpha > 1\) and \(\beta \ge 1\). Then the following properties hold:
\begin{itemize}
    \item \( F_{\alpha, \beta}(0) = 0 \)
    \item \( \lim_{x \to \infty} F_{\alpha, \beta}(x) = \frac{\alpha}{\beta} + 1 \)
    \item \( F_{\alpha, \beta} \) is strictly increasing
    \item \( F_{\alpha, \beta} \) is differentiable for all \( x \in \mathbb{R} \) and \( F_{\alpha, \beta}'(x) \le \frac{1}{\beta} \) for all \( x \)
    \item For any \( x < y \), \( F_{\alpha, \beta}(y) - F_{\alpha, \beta}(x) \le \frac{y - x}{\beta} \)
\end{itemize}
\end{lemma}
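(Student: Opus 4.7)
The plan is to prove each property by direct analysis of the two pieces of the piecewise definition, using continuity/differentiability matching at the junction point $x = \alpha$ to tie the two branches together.

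For property (1), since $\alpha > 1 > 0$, the point $x = 0$ lies in the first branch, so $F_{\alpha,\beta}(0) = 0/\beta = 0$. For property (2), as $x \to \infty$, the quantity $e^{-2(x-\alpha)/\beta}$ tends to $0$, so the sigmoid term $\frac{2}{1+e^{-2(x-\alpha)/\beta}}$ tends to $2$, giving the limit $\frac{\alpha}{\beta} - 1 + 2 = \frac{\alpha}{\beta}+1$.

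The heart of the argument is property (4), from which (3) and (5) will follow. On the first branch, $F_{\alpha,\beta}$ is linear with slope $1/\beta$. On the second branch, setting $u := e^{-2(x-\alpha)/\beta} \in (0,1)$ for $x > \alpha$, a direct differentiation yields
$$F_{\alpha,\beta}'(x) \;=\; \frac{4u/\beta}{(1+u)^2}.$$
Since $(1+u)^2 - 4u = (1-u)^2 \ge 0$, we have $\frac{4u}{(1+u)^2} \le 1$, so $F_{\alpha,\beta}'(x) \le 1/\beta$ on the second branch as well. To obtain differentiability on all of $\mathbb{R}$, I would verify matching at $x = \alpha$: the left branch gives value $\alpha/\beta$ with derivative $1/\beta$, and the right branch, evaluated in the limit $u \to 1^-$, also gives value $\alpha/\beta - 1 + \frac{2}{2} = \alpha/\beta$ and derivative $\frac{4\cdot 1}{(1+1)^2 \beta} = 1/\beta$. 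Since both branches are smooth on their open pieces and match in value and slope at $\alpha$, $F_{\alpha,\beta}$ is differentiable on $\mathbb{R}$ with $F_{\alpha,\beta}'(x) \le 1/\beta$ everywhere.

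Property (3) follows because $F_{\alpha,\beta}'(x)$ equals $1/\beta > 0$ on the first branch and equals $\frac{4u/\beta}{(1+u)^2} > 0$ for any $u > 0$ on the second branch, so $F_{\alpha,\beta}$ is strictly increasing on each piece, and combined with continuity at $\alpha$ this yields strict monotonicity on $\mathbb{R}$. Property (5) is then an immediate consequence of (4) via the Mean Value Theorem: for any $x < y$ there exists $\xi \in (x,y)$ with $F_{\alpha,\beta}(y) - F_{\alpha,\beta}(x) = F_{\alpha,\beta}'(\xi)(y-x) \le (y-x)/\beta$. The only non-routine step is the algebraic bound $(1-u)^2 \ge 0 \Rightarrow 4u/(1+u)^2 \le 1$, which is the crux of both the derivative bound and the matching of slopes at the junction; I expect this to be the main (and only) real computation in the proof.
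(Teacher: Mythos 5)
Your proof is correct and follows essentially the same route as the paper's: compute the derivative on each branch, check that the values and one-sided derivatives agree at $x=\alpha$ (both equal to $1/\beta$), bound the sigmoid branch's derivative by $1/\beta$, and deduce the final Lipschitz-type inequality from the Mean Value Theorem. Your explicit algebraic bound $4u/(1+u)^2 \le 1$ via $(1-u)^2 \ge 0$ is a slightly cleaner way of justifying the derivative bound on the second branch than the paper's appeal to monotonicity of the derivative, but the substance is identical.
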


\begin{proof}
    It is straightforward to check that \( F_{\alpha, \beta}(0) = 0 \) and \( \lim_{x \to \infty} F_{\alpha, \beta}(x) = \frac{\alpha}{\beta} + 1 \). Additionally, it is trivial to verify that \( F_{\alpha, \beta} \) is strictly increasing. To show differentiability, we need only demonstrate that \( F_{\alpha, \beta} \) is differentiable at \( x = \alpha \). We have:
    \[
    \lim_{x \to \alpha^-} F_{\alpha, \beta}'(x) = \frac{1}{\beta} \quad \text{and} \quad \lim_{x \to \alpha^+} F_{\alpha, \beta}'(x) = \left\lvert \frac{4}{\beta} \frac{e^{\frac{-2(x - \alpha)}{\beta}}}{\left( 1 + e^{\frac{-2(x - \alpha)}{\beta}} \right)^2 } \right\rvert_{x = \alpha} = \frac{1}{\beta}
    \]
    which shows \( F_{\alpha, \beta} \) is differentiable at \( x = \alpha \). Additionally, if \( x \le \alpha \), \( F_{\alpha, \beta}'(x) = \frac{1}{\beta} \), and if \( x > \alpha \), \( F_{\alpha, \beta}'(x) < F_{\alpha, \beta}'(\alpha) = \frac{1}{\beta} \). Finally, \( x < y \Rightarrow F_{\alpha, \beta}(y) - F_{\alpha, \beta}(x) \le \frac{y - x}{\beta} \) follows from the Mean Value Theorem.
\end{proof}

\begin{proof}[Proof of Theorem \ref{thm: OurSAguarantee}.]
Using the terminology from Theorem \ref{thm: SAguarantee}, we first give bounds on the parameters of our invariant space. \\
\textit{Upper and Lower Bounds on \(r\).} \\
We first give bounds on $r$ which is defined by: $r = \min_{I \in \mathcal{X}(k) \setminus \mathcal{X}_m(k)} \max_{I' \in \mathcal{X}(k)} d(I, I')$ where $\mathcal{X}_m(k)$ denotes the set of all local maximas in $\mathcal{X}(k)$ and $d(I,I')$ denotes the length of the shortest path between $I$ and $I'$. \\
Our invariant space \(\mathcal{X}(k)\) doesn't allow predicates with coefficients \(\vec{w} = \vec{0}\). To analyze the lower bounds on \(r\) for \(\mathcal{X}(k)\), let us define a new state space \(\mathcal{Y}(k)\), which allows predicates with coefficients \(\vec{w} = \vec{0}\), i.e.,
$$
\mathcal{Y}(k) = \left\{ \Big( \forall \vec{x} \in \mathcal{D}_{\text{state}} \: \bigvee_{i=1}^d \bigwedge_{j=1}^c \vec{w}_{ij} \cdot \vec{x} \le b_{ij} \Big) \: : \:
    \forall i \forall j \: \|\vec{w}_{ij}\|_{\infty} \le k \land |b_{ij}| \le k'  \right\}
$$
It is easy to see that the center of the invariant space \(\mathcal{Y}(k)\) is the candidate invariant \(\lor_{j=1}^d \land_{i=1}^c \vec{0} \cdot \vec{x} \le 0\), and hence its radius \(r'\) is given by \(r' = cd (nk + k') = k cd \sqrt{n}(\sqrt{n} + \rho(\mathcal{D}_{state}))\). Since \(\mathcal{X}(k) \subseteq \mathcal{Y}(k)\), we have \(r \ge r' = k cd \sqrt{n}(\rho(\mathcal{D}_{state}) + \sqrt{n}) \).

For an upper bound on \(r\), consider the candidate invariant \(\lor_{j=1}^d \land_{i=1}^c \vec{e}_s \cdot \vec{x} \le 0\) where $\vec{e}_s$ is any standard basis vector of $\mathbb{R}^n$. Then the distance of the predicate $\vec{e}_s \cdot \vec{x} \le 0$ from any other predicate is atmost $((n-1)k + (k+1) + k') = (nk + k' + 1) = k\sqrt{n}( \sqrt{n} + \rho ( \mathcal{D}_{state} ) + \frac{1}{k\sqrt{n}} )$ Then we have, 
$$r \le kcd\sqrt{n}( \sqrt{n} + \rho ( \mathcal{D}_{state} ) + \frac{1}{k\sqrt{n}} ) \le kcd\sqrt{n}( \rho ( \mathcal{D}_{state} ) + \sqrt{n}) + cd$$
\textit{Value of \(w\).}\\
It is trivial to see that \(w := \min_{I \in \mathcal{X}(k)} \frac{1}{|\mathcal{N}(I)|} = \frac{1}{2cd(n+1)}\). \\[1\baselineskip]
\textit{Upper Bound on L.} \\
We give an upper bound on $L$ defined as: 
$$L = \max_{I \in \mathcal{X}^*(k)} \max_{I' \in \mathcal{N}(I)} |c(I) - c(I')| $$
Let $I := \bigvee_{j=1}^d \bigwedge_{i=1}^c (\vec{w}_{ij} \cdot \vec{x} \le b_{ij})$ and let $I'$ be any neighbor of $I$. WLOG let $I'$ differ with $I$ on all predicates except the predicate with index $(j_0,i_0)$ which is $\vec{w}'_{i_0j_0} \cdot \vec{x} \le b'_{i_0j_0}$ for $I'$.
Then we have:
\begin{align*}
|c(I) - c(I')| \le \frac{1}{3} \Bigg( &\frac{\sum_{\vec{p} \in +} | \delta (I, \vec{p}) - \delta (I', \vec{p})| }{\sum_{\vec{p} \in +} 1} + \\
&\frac{\sum_{(\vec{h},\vec{t}) \in \rightarrow} |\min (\delta (\neg I, \vec{h}), \delta (I, \vec{t})) - \min (\delta (\neg I', \vec{h}), \delta (I', \vec{t}))| }{\sum_{(\vec{h},\vec{t}) \in \rightarrow} 1} + \\
&\frac{\sum_{\vec{n} \in -} |\delta (\neg I, \vec{n}) - \delta (\neg I', \vec{n})| }{\sum_{\vec{n} \in -} 1} \Bigg) 
\end{align*}
Then by Lemma \ref{Lemma: normProp}, we have,
\begin{align*}
|c(I) - c(I')|  \le \frac{1}{3\beta} \Bigg( &\frac{\sum_{\vec{p} \in +} | \delta_{\text{approx}} (I, \vec{p}) - \delta_{\text{approx}} (I', \vec{p})| }{\sum_{\vec{p} \in +} 1} + \\
&\frac{\sum_{(\vec{h},\vec{t}) \in \rightarrow} |\min (\delta_{\text{approx}} (\neg I, \vec{h}), \delta_{\text{approx}} (I, \vec{t})) - \min (\delta_{\text{approx}} (\neg I', \vec{h}), \delta_{\text{approx}} (I', \vec{t}))| }{\sum_{(\vec{h},\vec{t}) \in \rightarrow} 1} + \\
&\frac{\sum_{\vec{n} \in -} |\delta_{\text{approx}} (\neg I, \vec{n}) - \delta_{\text{approx}} (\neg I', \vec{n})| }{\sum_{\vec{n} \in -} 1} \Bigg)
\end{align*}
Now we deal with the terms of the 3 separate summations case by case. \\[1\baselineskip]
\textit{Case 1.} Let $\vec{p} \in +$. WLOG let $ \delta_{\text{approx}} (I, \vec{p}) \ge \delta_{\text{approx}} (I', \vec{p})$. Let $j^*$ be the index of the cube in $I'$ s.t. $\delta_{approx}(I', \vec{p}) = \frac{1}{c} \sum_{i = 1}^c \frac{ ( \vec{w}_{ij^*} \cdot \vec{p} - b_{ij^*}  )^+  }{|| \vec{w}_{ij^*} ||}$. Then
\begin{align*}
    | \delta_{\text{approx}} (I, \vec{p}) - \delta_{\text{approx}} (I', \vec{p})| &= \delta_{\text{approx}} (I, \vec{p}) - \delta_{\text{approx}} (I', \vec{p}) \\
    &\le \frac{1}{c} \sum_{i = 1}^c \frac{ ( \vec{w}_{ij^*} \cdot \vec{p} - b_{ij^*}  )^+ - ( \vec{w}'_{ij^*} \cdot \vec{p} - b'_{ij^*}  )^+  }{|| \vec{w}_{ij^*} ||} \\
    &\le \frac{1}{c} \frac{ | ( \vec{w}_{i_0j_0} \cdot \vec{p} - b_{i_0j_0}  )^+ - ( \vec{w}'_{i_0j_0} \cdot \vec{p} - b'_{i_0j_0}  )^+ | }{|| \vec{w}_{i_0j_0} ||} \\
    &\le \frac{1}{c} \frac{ |  (\vec{w}_{i_0j_0} - \vec{w}'_{i_0j_0}) \cdot \vec{p} - (b_{i_0j_0} - b'_{i_0j_0})   | }{|| \vec{w}_{i_0j_0} ||} 
\end{align*}
where the final inequality follows from $| (x)^+ - (y)^+| \le |x - y|$.
Now we have two cases. \\
\textit{Case 1a.} The transition from $I$ to $I'$ is a constant transition.
Then $|  (\vec{w}_{i_0j_0} - \vec{w}'_{i_0j_0}) \cdot \vec{p} - (b_{i_0j_0} - b'_{i_0j_0})   | = |(b_{i_0j_0} - b'_{i_0j_0}) | \le 1$. \\
\textit{Case 1b.} The transition from $I$ to $I'$ is a coefficient transition. Then $|  (\vec{w}_{i_0j_0} - \vec{w}'_{i_0j_0}) \cdot \vec{p} - (b_{i_0j_0} - b'_{i_0j_0})   | = |(\vec{w}_{i_0j_0} - (\vec{w}_{i_0j_0} + \vec{e}_s) \cdot \vec{p} | \le | \vec{p}_s | \le || \vec{p} ||_\infty$, where $\vec{e_s}$ is any standard basis vector of $\mathbb{R}^n$.
Thus from both cases, we get 
\begin{align*}
    | \delta_{\text{approx}} (I, \vec{p}) - \delta_{\text{approx}} (I', \vec{p})| &\le \frac{\max (1, || \vec{p} ||_\infty )}{c ||\vec{w}_{i_0j_0}|| } \le \frac{\max (1, || \vec{p} ||_\infty )}{ck\sqrt{n} }
\end{align*}
\textit{Case 2.} Let $\vec{n} \in -$. WLOG let $ \delta_{\text{approx}} (\neg I, \vec{n}) \ge \delta_{\text{approx}} (\neg I', \vec{n})$.
Now if $I$ and $I'$ differ in exactly one predicate, then (a) the number of cubes and $I$ and $I'$ are equal and (ii) For each cube in $I$, there is a cube in $I'$ with all but exactly one predicate equal. Let $j^*$ be the index of the cube in $I'$ s.t. $\delta_{approx}(\neg I', \vec{n}) = \frac{1}{c} \sum_{i = 1}^c \frac{ ( b_{ij^*} + 1 - \vec{w}_{ij^*} \cdot \vec{p}   )^+  }{|| \vec{w}_{ij^*} ||}$. Then
\begin{align*}
    | \delta_{\text{approx}} (\neg I, \vec{n}) - \delta_{\text{approx}} (\neg I', \vec{n})| &= \delta_{\text{approx}} (\neg I, \vec{n}) - \delta_{\text{approx}} (\neg I', \vec{n}) \\
    &\le \frac{1}{c} \sum_{i = 1}^c \frac{ ( b_{ij^*} + 1 - \vec{w}_{ij^*} \cdot \vec{n}  )^+ - ( b'_{ij^*} + 1 -\vec{w}'_{ij^*} \cdot \vec{n}  )^+  }{|| \vec{w}_{ij^*} ||} \\
    &\le \frac{1}{c} \frac{ | ( b_{i_0j_0} + 1 - \vec{w}_{i_0j_0} \cdot \vec{n}   )^+ - ( b'_{i_0j_0} + 1 - \vec{w}'_{i_0j_0} \cdot \vec{n}   )^+ | }{|| \vec{w}_{i_0j_0} ||} \\
    &\le \frac{1}{c} \frac{ | (b_{i_0j_0} - b'_{i_0j_0}) - (\vec{w}_{i_0j_0} - \vec{w}'_{i_0j_0}) \cdot \vec{n}   | }{|| \vec{w}_{i_0j_0} ||} 
\end{align*}
By using the same argument as in \textit{Case 1}, we get, 
\begin{align*}
    | \delta_{\text{approx}} (I, \vec{n}) - \delta_{\text{approx}} (I', \vec{n})| &\le \frac{\max (1, || \vec{n} ||_\infty )}{ck\sqrt{n}}
\end{align*}
\textit{Case 3.} Let $(\vec{h}, \vec{t}) \in \rightarrow$. 
If we have $(\vec{h},\vec{t})$ satisfies both $I$ and $I'$, then $|\min (\delta_{\text{approx}} (\neg I, \vec{h}), \delta_{\text{approx}} (I, \vec{t})) = 0$. Now, consider the case where $(\vec{h}, \vec{t})$
doesn't satisfy at least one of $I$ or $I'$. Then using the inequality $|\min(a,b) - \min(c,d)| \le \max(|a - c|, |b - d|) +  \min(|a-b|, |c-d|)$, we have:
\begin{align*}
|\min (\delta_{\text{approx}} (\neg I, \vec{h}), \delta_{\text{approx}} (I, \vec{t}))| &- \min (\delta_{\text{approx}} (\neg I', \vec{h}), \delta_{\text{approx}} (I', \vec{t}))| \\
&\le \max( |\delta_{\text{approx}} (\neg I, \vec{h}) - \delta_{\text{approx}} (\neg I', \vec{h})| , |\delta_{\text{approx}} (I, \vec{t}) - \delta_{\text{approx}} (I', \vec{t})| ) \\
&\quad + \min(|\delta_{\text{approx}} (\neg I, \vec{h}) - \delta_{\text{approx}} (I, \vec{t})|, |\delta_{\text{approx}} (\neg I', \vec{h}) - \delta_{\text{approx}} (I', \vec{t})| )
\end{align*}
Using the same arguments as in \textit{Case 1 and 2}, we get $|\delta_{\text{approx}} (\neg I, \vec{h}) - \delta_{\text{approx}} (\neg I', \vec{h})| \le \frac{\max (1, || \vec{h} ||_\infty )}{ck\sqrt{n}}$ and $|\delta_{\text{approx}} (I, \vec{t}) - \delta_{\text{approx}} (I', \vec{t})| \le \frac{\max (1, || \vec{t} ||_\infty )}{ck\sqrt{n}}$. Now we give an upper bound for $\min(|\delta_{\text{approx}} (\neg I, \vec{h}) - \delta_{\text{approx}} (I, \vec{t})|, |\delta_{\text{approx}} (\neg I', \vec{h}) - \delta_{\text{approx}} (I', \vec{t})|)$. WLOG suppose $(\vec{h}, \vec{t})$ doesn't satisfy $I$.
\begin{align*}
    \min(|\delta_{\text{approx}} (\neg I, \vec{h}) - \delta_{\text{approx}} (I, \vec{t})|, |\delta_{\text{approx}} (\neg I', \vec{h}) - \delta_{\text{approx}} (I', \vec{t})|) &\le |\delta_{\text{approx}} (\neg I, \vec{h}) - \delta_{\text{approx}} (I, \vec{t})| \\
    &\le \max ( \delta_{\text{approx}} (\neg I, \vec{h}) , \delta_{\text{approx}} (I, \vec{t})) \\
    &\le \max ( d (\neg I, \vec{h}) , d (I, \vec{t}))
\end{align*}
where $d(I, \vec{h})$ denotes the Euclidean set distance of $\vec{h}$ from $I$, and the final inequality follows from Lemma \ref{Lemma: Properties of delta}. And as $(\vec{h}, \vec{t})$ don't satisfy $I$, we have $\vec{h} \in I$, $\vec{t} \in \neg I$, and hence we will have $\max ( d (\neg I, \vec{h}) , d (I, \vec{t})) \le || \vec{h} - \vec{t}||$. Thus we finally get:
\begin{align*}
|\min (\delta_{\text{approx}} (\neg I, \vec{h}), \delta_{\text{approx}} (I, \vec{t}))| &- \min (\delta_{\text{approx}} (\neg I', \vec{h}), \delta_{\text{approx}} (I', \vec{t}))| \\
&\le \frac{\max (1, || \vec{h} ||_\infty ) + \max (1, || \vec{t} ||_\infty )}{ck\sqrt{n}} + || \vec{h} - \vec{t} ||
\end{align*}
Substituting back into the original summation, we get
\begin{align*}
|c(I) - c(I')| &\le \frac{1}{3\beta} \Bigg( \frac{\sum_{\vec{p} \in +} \max (1, || \vec{p} ||_\infty )}{ck\sqrt{n}|+|} \\
&\quad + \frac{\sum_{\vec{n} \in -} \max (1, || \vec{n} ||_\infty )}{ck\sqrt{n}|-|} \\
&\quad + \frac{\sum_{(\vec{h}, \vec{t}) \in \rightarrow} \max( \max (1, || \vec{h} ||_\infty ) , \max (1, || \vec{t} ||_\infty ) )}{ck\sqrt{n}|\rightarrow|} \\
&\quad + \frac{\sum_{(\vec{h}, \vec{t}) \in \rightarrow} ||\vec{h} - \vec{t}|| }{|\rightarrow|} \Bigg)
\end{align*}
Now note that as $\vec{p} \in \mathbb{Z}^n$, we have $|| \vec{p} ||_{\infty} \in \mathbb{Z}_{\ge 0}$ and furthermore $|| \vec{p} ||_{\infty} = 0 \Leftrightarrow \vec{p} = \vec{0}$. As there can be atmost one plus point which is the origin vector, we have $\sum_{\vec{p} \in +} \max (1, || \vec{p} ||_\infty ) \le 1 + \sum_{\vec{p} \in +} || \vec{p} ||_\infty $. Similarly, we get $\sum_{\vec{n} \in +} \max (1, || \vec{n} ||_\infty ) \le 1 + \sum_{\vec{n} \in +} || \vec{n} ||_\infty $ and $\sum_{(\vec{h}, \vec{t}) \in \rightarrow} \max( \max (1, || \vec{h} ||_\infty ) , \max (1, || \vec{t} ||_\infty ) ) \le 1 + \sum_{(\vec{h}, \vec{t}) \in \rightarrow} \max(|| \vec{h} ||_\infty, || \vec{t} ||_\infty) $. Then we get, 
\begin{align*}
|c(I) - c(I')| \le \frac{1}{3\beta} \Bigg( \frac{1}{ck\sqrt{n}} \Bigg( \frac{\sum_{\vec{p} \in +} ||\vec{p}||_\infty}{|+|}  &+ \frac{\sum_{(\vec{h}, \vec{t}) \in \rightarrow} \max(||\vec{h}||_\infty, ||\vec{t}||_\infty)}{|\rightarrow|} + \frac{\sum_{\vec{n} \in -} ||\vec{n}||_\infty}{|-|}\Bigg) \\
&+ \frac{3}{ck\sqrt{n}} +  \frac{\sum_{(\vec{h}, \vec{t}) \in \rightarrow} ||\vec{h} - \vec{t}|| }{|\rightarrow|} \Bigg) \\
&=  \frac{1}{\beta ck\sqrt{n}} \cdot \kappa_\infty(\mathcal{S}) + \frac{1}{3\beta} \cdot \lambda(\rightarrow) + \frac{1}{\beta ck\sqrt{n}}
\end{align*}
where $\kappa_\infty(\mathcal{S}) = \frac{1}{3} \Bigg( \frac{\sum_{\vec{p} \in +} ||\vec{p}||_\infty}{|+|}  + \frac{\sum_{(\vec{h}, \vec{t}) \in \rightarrow} \max( ||\vec{h}||_\infty , ||\vec{t}||_\infty) }{|\rightarrow|} + \frac{\sum_{\vec{n} \in -} ||\vec{n}||_\infty}{|-|} \Bigg)$ and $\lambda(\rightarrow) = \frac{\sum_{(\vec{h}, \vec{t}) \in \rightarrow} ||\vec{h} - \vec{t}|| }{|\rightarrow|}$. Finally, we get $L \le \frac{1}{\beta ck\sqrt{n}} \cdot \kappa_\infty(\mathcal{S}) + \frac{1}{3\beta} \cdot \lambda(\rightarrow) + \frac{1}{\beta ck\sqrt{n}}$. \\[1\baselineskip]
\textit{Lower Bound on \(\delta\).} \\
To establish a lower bound on \(\delta\), we first define \(\mathcal{X}^*(k)\) as the set of all \(\mathcal{S}\)-approximate invariants in \(\mathcal{X}(k)\). Then, \(\delta\) is defined as:
\[
\delta = \min_{I \notin \mathcal{X}^*(k)} \min_{I^* \in \mathcal{X}^*(k)} (c(I) - c(I^*))
\]
Given that \(I^* \in \mathcal{X}^*(k)\) implies \(c(I^*) = 0\), \(\delta\) simplifies to:
\[
\delta = \min_{I \notin \mathcal{X}^*(k)} c(I)
\]
Thus, finding a lower bound on the cost of any invariant which is not an \(\mathcal{S}\)-approximate invariant will provide a lower bound on \(\delta\). Let \(I'\) be any invariant that is not an \(\mathcal{S}\)-approximate invariant, meaning there exists at least one datapoint in $\mathcal{S}$ that does not satisfy \(I'\). For our lower bound cost analysis, we assume that there is only one datapoint that does not satisfy \( I' \), as considering multiple datapoints not satisfying \( I' \) would lead to higher costs.
Then we have $3$ cases based on which type the datapoint is.

\textit{Case 1:} Suppose this point is a plus point, call this point $\vec{p}$, then we have: $c(I') = \frac{ \delta( I, \vec{p} ) }{3 |+|}$ where $\delta(I, \vec{p}) = F_{\alpha, \beta} \circ \delta_{approx}(I, \vec{p})$. If $I \equiv \lor_j \land_i \vec{w}_{ij} \cdot \vec{x} \le b_{ij}$, then $\delta_{approx}(I, \vec{p}) := \min_j \sum_i \frac{ (\vec{w}_{ij} \cdot \vec{x} - b_{ij})^+ }{||\vec{w}_{ij} ||} $.  Let the index of the cube of $I$ which is not satisfied by $\vec{p}$ be $j_0$ - hence there is at least one predicate in the cube indexed by $j_0$ which doesn't satisfy $\vec{p}$. As we are looking for lower bounds, let us say there is only one predicate which doesn't satisfy $\vec{p}$ - let the indices of this predicate be $(j_0, i_0)$. Since \( \vec{w}_{i_0j_0} \cdot \vec{x} - b_{i_0j_0} \in \mathbb{Z} \) (as \( \vec{w}_{i_0j_0}, \vec{x} \in \mathbb{Z}^n \) ,\( b_{i_0j_0} \in \mathbb{Z} \) ) the minimum positive value for \( \vec{w}_{i_0j_0} \cdot \vec{x} - b_{i_0j_0} \) is 1. Therefore,
\[
\delta_{\text{approx}}(I, \vec{p}) = \frac{1}{||\vec{w}_{i_0j_0}||} \ge \frac{1}{k\sqrt{n}}
\]
As \( \alpha > 1 \), as $\frac{1}{k\sqrt{n}} < 1$ it follows that
\[
\delta(I, \vec{p}) = F_{\alpha, \beta} \circ \delta_{\text{approx}}(I, \vec{p}) \ge \frac{1}{k\beta \sqrt{n}}
\]
Thus, \(c(I') \ge \frac{1}{3k\beta \sqrt{n} |+|}\). \\
\textit{Case 2:} Suppose the datapoint is an ICE pair $(\vec{h}, \vec{t})$, then we have: $c(I') = \frac{ \min( \delta( \neg I, \vec{h} ) , \delta( I, \vec{t} ) ) }{3 |\rightarrow|}$. Similar to the previous case, we can get $\delta(\neg I, \vec{h}) \ge \frac{1}{k\beta \sqrt{n}}$ and $\delta(I, \vec{t}) \ge \frac{1}{k\beta \sqrt{n}}$, which gives that in this case we have \(c(I') \ge \frac{1}{3k\beta \sqrt{n} |\rightarrow|}\). \\
\textit{Case 3:} Suppose the datapoint is a minus point $\vec{n}$, then we have: $c(I') = \frac{ \delta( \neg I, \vec{n} ) }{3 |-|}$. Then like before we have $\delta(\neg I, \vec{n}) \ge \frac{1}{k\beta \sqrt{n}}$, which implies that $c(I') \ge \frac{1}{3k\beta \sqrt{n} |-|}$. \\
Hence, in the general case, $c(I') \ge \frac{1}{3k\beta \sqrt{n} \max(|+|, |\rightarrow|, |-|)} \ge \frac{1}{3k\beta \sqrt{n} |\mathcal{S}|}$, which gives us $\delta \ge \frac{1}{3k\beta \sqrt{n} |\mathcal{S}|}$. \\[1\baselineskip]
Then by theorem \ref{thm: SAguarantee}, if we set $T_0 \ge  
\Big( \frac{d\rho(\mathcal{D}_{state}) + d\sqrt{n}}{\beta} + \frac{d}{\beta k \sqrt{n}} \Big)\cdot (1 + \kappa_\infty(\mathcal{S})) + \frac{cdk\sqrt{n}\rho ( \mathcal{D}_{state} ) + cdkn + cd }{3\beta} \cdot \lambda(\rightarrow)$, then we get that $T_0 \ge rL$ by the previous results. Furthermore setting $t_{max} > kcd\sqrt{n}\rho ( \mathcal{D}_{state} ) + cdkn + cd$ ensures that $t_{max} > r$, which implies that our SA search satisfies the conditions of theorem \ref{thm: SAguarantee}, and we converge to a global minima with probability at least $1 - \frac{A}{(\lfloor \frac{t_{max}}{r} \rfloor)^{exp}} \ge 1 - \frac{A}{\lfloor \frac{t_{max}}{k\sqrt{n}cd\rho(\mathcal{D}_{state}) + kncd + cd} \rfloor^{exp}}$ where $exp = \min ( a , b)$.\footnote{Note that the constant $A$ depends on the parameters of the search space. However, no general form for $A$ is provided in \cite{mitra1986convergence}, as the focus is on the rates of convergence, i.e., the value of the exponent.}

\begin{lemma}
    We have $exp \ge \min \Big(\frac{1}{(cd(2n+2))^{kcd\sqrt{n}(\rho(\mathcal{D}_{state} + \sqrt{n}) )}}, \frac{1}{|\mathcal{S}| \cdot \Big(\frac{3}{c} \kappa_\infty(\mathcal{S}) + k\sqrt{n} \lambda(\rightarrow) + \frac{3}{c} \Big) }  \Big)$
\end{lemma}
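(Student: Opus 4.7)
The plan is to substitute into the definitions $a := w^r / r^{rL/T_0}$ and $b := \delta/T_0$ from Theorem~\ref{thm: SAguarantee} the bounds on $r$, $w$, $L$, $\delta$ and the chosen value of $T_0$ that were derived in the preceding paragraphs of the proof of Theorem~\ref{thm: OurSAguarantee}, and then simplify each to match one term of the claimed minimum.

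First I would bound $a$ from below. Since $T_0$ was chosen so that $T_0 \ge rL$, the exponent satisfies $rL/T_0 \le 1$, hence $r^{rL/T_0} \le r$, giving $a \ge w^r / r$. Plugging in $w = 1/(cd(2n+2))$ from the ``Value of $w$'' paragraph, together with the upper bound $r \le kcd\sqrt{n}(\rho(\mathcal{D}_{state})+\sqrt{n}) + cd$ derived in the ``Upper and Lower Bounds on $r$'' paragraph, one obtains
\[
a \;\ge\; \frac{1}{r\cdot (cd(2n+2))^{\,kcd\sqrt{n}(\rho(\mathcal{D}_{state})+\sqrt{n})+cd}}.
\]
Absorbing the $1/r$ factor and the additive $cd$ tail in the exponent into a single loose bound yields the first term of the claimed minimum.

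Next I would bound $b$. Substituting $\delta \ge 1/(3k\beta\sqrt{n}|\mathcal{S}|)$ and the stated $T_0$ gives $b \ge 1/(3k\beta\sqrt{n}|\mathcal{S}|\cdot T_0)$. Expanding $T_0$ and multiplying through by $3k\beta\sqrt{n}$, the $\beta$ factors cancel, and the product splits into two groups: the piece coming from $(1+\kappa_\infty(\mathcal{S}))$ contributes terms proportional to $\frac{3}{c}\kappa_\infty(\mathcal{S}) + \frac{3}{c}$ after rescaling, while the piece coming from $\lambda(\rightarrow)$ contributes a term proportional to $k\sqrt{n}\,\lambda(\rightarrow)$. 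Collecting and loosening the resulting constants produces a denominator dominated by $|\mathcal{S}|\bigl(\tfrac{3}{c}\kappa_\infty(\mathcal{S}) + k\sqrt{n}\,\lambda(\rightarrow) + \tfrac{3}{c}\bigr)$, which is the second term of the claimed minimum. Taking the minimum of these two lower bounds delivers the inequality for $exp = \min(a,b)$.

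The main obstacle will be the second simplification: the explicit form of $T_0$ carries several additive and multiplicative constants in $d$, $cd$, $n$, $\sqrt{n}$, and $\rho(\mathcal{D}_{state})$ that do not cancel cleanly against the factors produced by $\delta$. Showing that all of these can be uniformly absorbed into the compact denominator $\frac{3}{c}\kappa_\infty(\mathcal{S}) + k\sqrt{n}\,\lambda(\rightarrow) + \frac{3}{c}$ requires choosing loose upper bounds in a coordinated way so that no hidden cross-term inflates the exponent; the $a$ bound is comparatively routine once $T_0\ge rL$ is invoked.
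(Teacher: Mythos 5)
Your overall plan (substitute the bounds on $r,w,L,\delta$ into $a$ and $b$ and simplify) is the same as the paper's, but the two specific substitutions you propose do not reach the claimed bound, and the second one fails by a non-constant factor. The paper's own (one-line) argument bounds $a$ simply by $a > w^r \ge \frac{1}{(cd(2n+2))^{kcd\sqrt{n}(\rho(\mathcal{D}_{state})+\sqrt{n})}}$ and, crucially, bounds $b$ as $\delta/L$ rather than $\delta/T_0$: with $\delta \ge \frac{1}{3k\beta\sqrt{n}|\mathcal{S}|}$ and $L \le \frac{1}{\beta ck\sqrt{n}}(1+\kappa_\infty(\mathcal{S})) + \frac{1}{3\beta}\lambda(\rightarrow)$, the quotient $\delta/L$ simplifies exactly to $\frac{1}{|\mathcal{S}|\left(\frac{3}{c}\kappa_\infty(\mathcal{S}) + k\sqrt{n}\lambda(\rightarrow) + \frac{3}{c}\right)}$, which is the second term of the minimum. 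You instead take $b = \delta/T_0$ as literally defined in Theorem~\ref{thm: SAguarantee}. Since $T_0 \ge rL$ and $r \ge kcd\sqrt{n}(\rho(\mathcal{D}_{state})+\sqrt{n})$, your denominator acquires an extra multiplicative factor on the order of $r$, which scales with $\rho(\mathcal{D}_{state})$ (i.e.\ with the size of the state space) and with the dataset statistics; this is not a ``constant that can be absorbed in a coordinated way'' into $\frac{3}{c}\kappa_\infty(\mathcal{S}) + k\sqrt{n}\lambda(\rightarrow) + \frac{3}{c}$. The obstacle you flag at the end is therefore not a technicality but fatal to your route: under the $\delta/T_0$ reading the claimed second term is simply not obtainable, and the lemma as stated is only consistent with the $\delta/L$ reading that the paper uses.

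A smaller issue of the same kind affects your bound on $a$. From $a \ge w^r/r$ and $r \le kcd\sqrt{n}(\rho(\mathcal{D}_{state})+\sqrt{n}) + cd$ you get $a \ge \frac{1}{r\,(cd(2n+2))^{kcd\sqrt{n}(\rho(\mathcal{D}_{state})+\sqrt{n})+cd}}$, which is \emph{smaller} than the claimed first term $\frac{1}{(cd(2n+2))^{kcd\sqrt{n}(\rho(\mathcal{D}_{state})+\sqrt{n})}}$; when proving a lower bound you cannot ``absorb'' the leftover $1/r$ and the $+cd$ in the exponent, because they push the bound the wrong way. (To be fair, the paper is itself loose here: its inequality $a = w^r/r^{rL/T_0} > w^r$ is backwards for $r \ge 1$, and it silently drops the $+cd$ from the exponent. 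But your handling, while more careful about $r^{rL/T_0}$, does not close this gap either.) To match the lemma exactly you would need either to weaken its statement to carry the extra factors, or to follow the paper in treating these discrepancies as negligible and in using $b = \delta/L$.
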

\begin{proof}
We have $a:= \frac{w^r}{r^{\frac{rL}{T_0}}} > w^r \ge \frac{1}{(cd(2n+2))^{kcd\sqrt{n}(\rho(\mathcal{D}_{state} + \sqrt{n}) )}}$ and $b := \frac{\delta}{L} \ge \frac{1}{|\mathcal{S}| \cdot \Big(\frac{3}{c} \cdot \kappa_\infty(\mathcal{S}) + k\sqrt{n} \cdot \lambda(\rightarrow) + \frac{3}{c} \Big) }$, which proves the lemma.
\end{proof}

For moderately sized datasets, we have $\frac{1}{(cd(2n+2))^{kcd\sqrt{n}(\rho(\mathcal{D}_{state} + \sqrt{n}) )}}$ to be the lower term in $exp$, wheras for very large sized datasets, eventually $\frac{1}{|\mathcal{S}| \cdot \Big(\frac{3}{c} \kappa_\infty(\mathcal{S}) + k\sqrt{n} \lambda(\rightarrow) + \frac{3}{c} \Big) }$ becomes the lower term.

And finally the global minima is an $\mathcal{S}$-approximate invariant iff a $\mathcal{S}$-approximate invariant exists in the parameterized state space (by definition of the cost function). This proves the theorem.
\end{proof}

\begin{proof}[Proof for Theorem \ref{Thm:FullSuccessGuarantees}]
    By Theorem \ref{Thm: DS Loop Conv Guar}, \toolname{} necessarily terminates within $$T = t_{\text{refine}} \log_2 \left( \epsilon_0 n^n \left( \frac{1 + \frac{C}{n^{2 - \frac{2}{n+1}}}}{1 - C} \right) \max (\Lambda(P), \Lambda(B), \Lambda(\neg Q)) \right)$$ with probability at least $\delta_0^3$ and any approximate invariant found at the end of the $T$th iteration of the DS Algorithm must be the loop invariant. Additionally Theorem \ref{thm: OurSAguarantee} guarantees that each SA search finds an approximate invariant with probability at least \( p_0(\mathcal{S}_i) := 1 - \frac{A}{\left\lfloor \frac{t_{max}}{k\sqrt{n}cd\rho(\mathcal{D}_{state}) + kncd + cd} \right\rfloor^{exp}}\) for \( exp = \min \left( \frac{1}{(cd(2n+2))^{kcd\sqrt{n}(\rho(\mathcal{D}_{state}) + \sqrt{n})}}, \frac{1}{|\mathcal{S}_i| \left(\frac{3}{c} \kappa_\infty(\mathcal{S}_i) + k\sqrt{n} \lambda(\rightarrow_i) + \frac{3}{c}\right)} \right)\). In order to reach the $T$th iteration of the DS Algorithm, all SA searches upto that point must find the approximate invariant, and hence \toolname{} finds the loop invariant with probability at least $\delta_0^3 \prod_{i=1}^T p_0(\mathcal{S}_i)$. The final result follows by observing that $\kappa_\infty (\mathcal{S}_i) \le \rho(\mathcal{D}_{state})$, $\lambda (\rightarrow_i) \le \rho(\mathcal{D}_{state})$ and $|\mathcal{S}_i| \le \tau\left( \frac{\epsilon_0}{\left\lfloor \frac{T}{t_{\text{refine}}} \right\rfloor}, \delta \right) + 3T \cdot \text{cex}_{\text{max}}$, where  \(\tau(\epsilon, \delta)\) represents the minimum number of points to sample to obtain randomized \(\epsilon\)-nets of polytopes with respect to ellipsoids with a probability of at least \(\delta\).
\end{proof}

\section{Supplementary Figures.} \label{App: SuppFigures}

\subsection{Efficiently computable Approximate Polytope Set Distance.}
We illustrate the difference between the true polytope set distance and our approximate polytope set distance in Figure \ref{fig:costFuncPlot}.

\begin{figure}
    \centering
    \includegraphics[scale = 0.75]{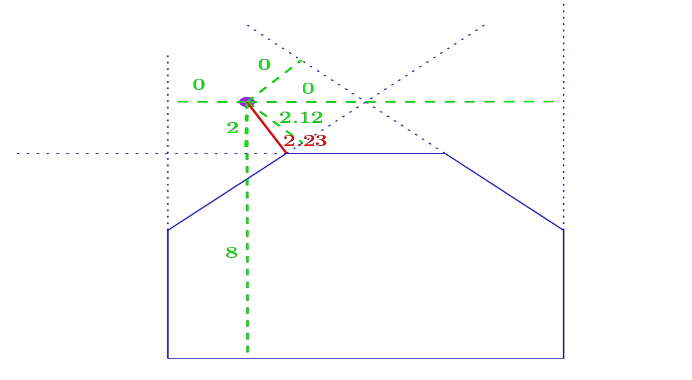}
    \caption{In the figure above, the boundary of the polytope is depicted in blue. Our objective is to compare the polytope set distance and the approximate polytope set distance for the point indicated in purple. The dotted green lines illustrate the distances from the point to the six boundaries of the polytope, listed clockwise as 0, 0, 0, 2, 8, and 2.12. Consequently, the approximate polytope set distance for the point in question is calculated as the mean of these distances, yielding $\frac{0 + 0 + 0 + 2 + 8 + 2.12}{6} = 2.02$. The true polytope set distance for this point is represented in red and measures $2.23$. This comparison demonstrates that our approximate set distance is quite accurate in this instance. }
    \label{fig:costFuncPlot}
\end{figure}

\subsection{Coefficient Transitions cause higher cost jumps in SA search.}
Figure \ref{fig:costFunctionMotivate2} shows why coefficient transitions cause larger cost jumps than constant transitions.

\begin{figure}
    \centering
    \includegraphics[scale=0.45]{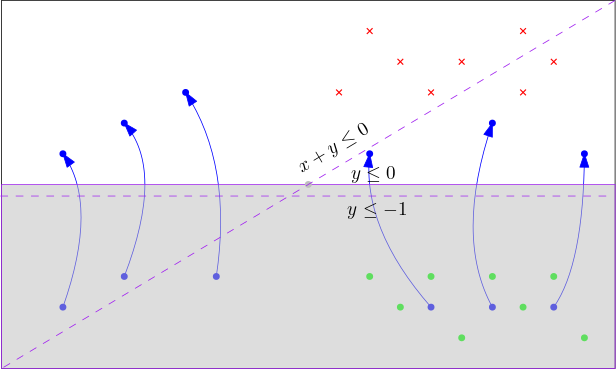}
    \caption{The plot above (not to scale) illustrates the issue of coefficient transitions significantly altering the average set distance of the dataset to the candidate invariant. The black box marks the state space \(\mathcal{D}_{state}\). The green dots represent positive points, the red crosses indicate negative points, and the blue arrows show the implication pairs. The current candidate invariant is indicated by the gray region, defined by a single predicate: \(y \leq 0\). A few possible transitions, including a constant transition to \(y \leq -1\) and a coefficient transition to \(x + y \leq 0\), are shown with dashed lines. The plot demonstrates how coefficient transitions unlike constant transitions drastically change the boundary line of the invariant, thereby significantly altering the average set distance of the dataset to the candidate invariant. }
    \label{fig:costFunctionMotivate2}
\end{figure}

\subsection{Normalization Function.}
Fig \ref{fig:normalizer} plots $F_{\alpha,\beta}$ for specific values of $\alpha$ and $\beta$.

\begin{figure}
    \centering
    \includegraphics[scale = 0.15]{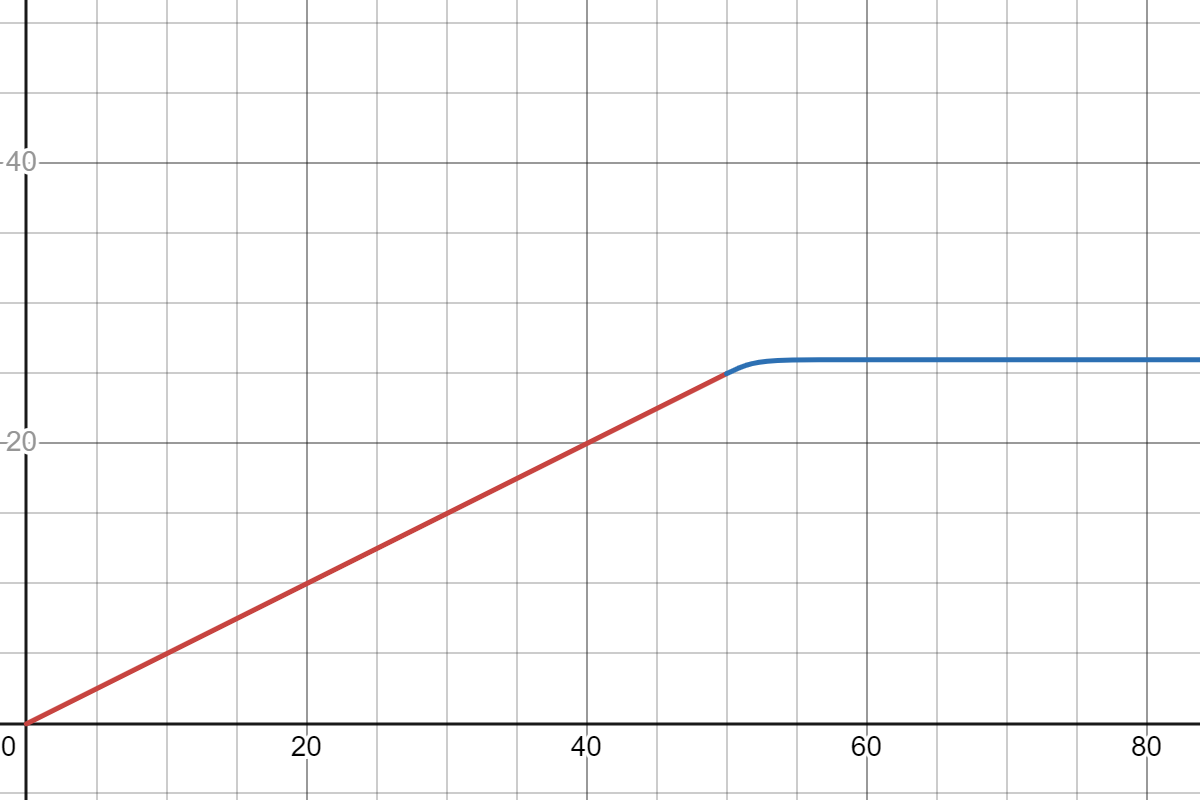}
    \caption{The plot shows the normalizer function \(F_{\alpha, \beta}\) for \(\alpha = 50\) and \(\beta = 2\). The red portion of the curve highlights the linear part of the normalizer, while the blue portion of the curve highlights the sigmoid-like approximation of the normalization function.}
    \label{fig:normalizer}
\end{figure}

\subsection{\toolname{} framework.}
Figure \ref{fig:Implementation} gives the framework for our tool.
\begin{figure}
    \centering
    \includegraphics[scale=0.35]{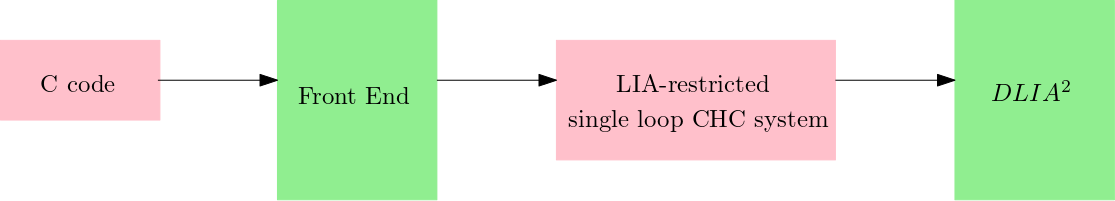}
    \caption{Diagram showing end-to-end framework for \toolname{}. }
    \label{fig:Implementation}
\end{figure}

\subsection{Time taken for DS Loop vs SA Loop.} \label{App: timeDSSA}
We present the computation times for the DS Algorithm and SA Algorithm over the converged benchmarks (best runs), depicted in Figure \ref{fig:SAvsDS}. Each blue dot in the figure represents a converged benchmark in Table~\ref{tab:results} (left), plotting the total DS Loop time against the total SA time required for \toolname{}. The red dotted line corresponds to $y = x$. The plot illustrates that the blue dots predominantly lie below this line, indicating that SA runs constitute the majority of \toolname{}'s computational workload.
\begin{figure}
    \centering
    \includegraphics[width=0.5\linewidth]{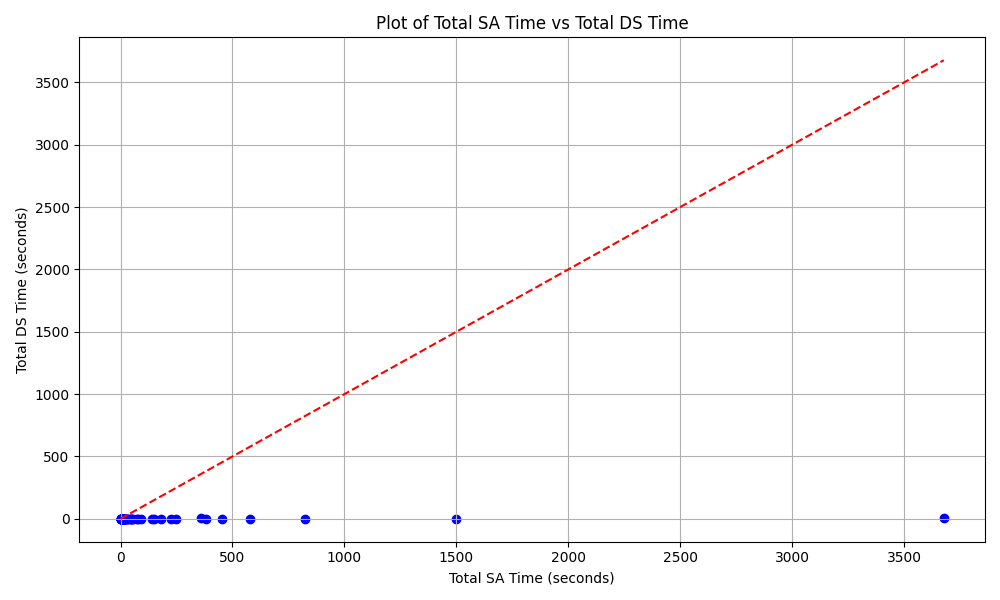}
    \caption{The plot above shows the time taken by the simulated annealing loop vs. the time taken by the datapoint sampling loop for successful searches of \toolname{} on the benchmarks (best runs).}
    \label{fig:SAvsDS}
    \vspace{-1.0 em}
\end{figure}

\end{document}